\documentclass[11pt,reqno]{amsart}

\usepackage{amsmath,amsthm,amssymb,amsfonts,mathtools}
\usepackage{bm}
\usepackage{natbib}
\usepackage{bbm}
\usepackage{graphicx}
\usepackage{multirow}
\usepackage{here}
\usepackage{fullpage}
\usepackage{xurl}
\usepackage{color}
\usepackage{xcolor}
\usepackage{mathrsfs}
\usepackage{enumitem}
\usepackage{comment}
\usepackage{subcaption}
\usepackage{pifont}
\usepackage[linesnumbered,ruled,vlined]{algorithm2e}
\usepackage[hidelinks]{hyperref}
\makeatletter
\makeatletter

\newtheorem{lemma}{Lemma}[section]
\newtheorem{proposition}{Proposition}
\newtheorem{assumption}{Assumption}[section]
\theoremstyle{definition}
\newtheorem{definition}{Definition}[section]
\newtheorem{remark}{Remark}[section]
\newtheorem{example}{Example}[section]

\def\E#1{\mathbb{E}\left[#1\right]}
\def\P#1{\mathbb{P}\left[#1\right]}

\newcommand{\argmin}{\mathop{\rm arg~min}\limits}

\renewcommand{\tilde}{\widetilde}

\def\f{Fr\'echet }
\def\bco{\iffalse} 
\allowdisplaybreaks

\makeatother

\begin{document}

\title[]{Monotone response for random objects}
\thanks{} 

\author[D. Kurisu]{Daisuke Kurisu}
\author[Y. Okamoto]{Yuta Okamoto}
\author[T. Otsu]{Taisuke Otsu}

\date{First version: \today}

\address[D. Kurisu]{Faculty of Economics, The University of Tokyo 7-3-1, Hongo, Bunkyo-ku, Tokyo 113-0033, Japan.}
\email{d.kurisu@e.u-tokyo.ac.jp}

\address[Y. Okamoto]{Graduate School of Economics, Hitotsubashi University, 2-1 Naka, Kunitachi, Tokyo 186-8601, Japan.}
\email{yuta.okamoto@r.hit-u.ac.jp}

\address[T. Otsu]{Department of Economics, London School of Economics, Houghton Street, London, WC2A 2AE, UK.
}
\email{t.otsu@lse.ac.uk}

\begin{abstract}
Monotone treatment response (MTR), monotone treatment selection (MTS), and monotone instrumental variable (MIV) assumptions are widely used to partially identify counterfactual mean outcomes, but existing analyses have focused almost exclusively on scalar outcomes. We develop a unified framework for partial identification with outcomes that take values in a general metric space under these monotonicity restrictions by embedding the metric space into an $L^2$ space and imposing coordinatewise monotonicity on the embedded functions. The proposed framework yields valid identified sets for Fr\'echet means in a broad class of random-object spaces and further delivers sharp identification results for distributional outcomes under the Wasserstein metric, interval-valued outcomes represented by support functions, and compositional outcomes under the Aitchison metric. We also establish a support-free characterization of the identified set under the joint MTR--MTS assumption. Numerical and empirical illustrations based on Job Corps earnings data and periodontal health distributions from the National Health and Nutrition Examination Survey demonstrate the empirical usefulness of the proposed framework.
\end{abstract}

\keywords{}
\maketitle

\section{Introduction} \label{sec:intro}

Monotonicity assumptions such as monotone treatment response (MTR), monotone treatment selection (MTS), and monotone instrumental variables (MIV) have become fundamental assumptions in partial identification analysis. By imposing economically meaningful shape restrictions on counterfactual outcomes, these assumptions substantially tighten identified sets while remaining considerably weaker than conventional point-identifying assumptions. Since the seminal contributions of \cite{Manski:1997} and \cite{Manski_Pepper:2000}, they have been widely applied to study treatment effects in settings where point identification is unattainable.

Existing studies, however, have almost exclusively focused on scalar outcomes. In many empirical applications, outcomes naturally take the form of random objects, namely random elements that take values in a general metric space rather than real-valued variables. Examples include distribution-valued outcomes, such as wage distributions and individual-level distributions of sleep duration; interval-valued outcomes arising from coarsened or censored observations; compositional outcomes, such as firms' allocations of inputs across routine labor, skilled labor, and capital, or households' time allocations across market work, household production, and leisure; and other complex data objects such as covariance matrices that cannot be adequately summarized by a single scalar. In such settings, the conventional MTR, MTS, and MIV assumptions are no longer directly applicable. More fundamentally, there is currently no general framework for extending these monotonicity assumptions to random-object outcomes. Consequently, the identifying implications of monotonicity restrictions for random-object outcomes remain largely unexplored.

This paper develops a unified framework for partial identification with random-object outcomes under the MTR, MTS, and MIV assumptions. Building on an embedding representation of random objects into an $L^2$ space, we formulate monotonicity restrictions coordinatewise on the embedded functions, yielding a general notion of monotonicity applicable to arbitrary metric-space-valued outcomes. We show through a range of examples that appropriate choices of embedding allow the resulting restrictions to capture economically meaningful and intuitively interpretable notions of monotonicity for random objects. Within this framework, we derive tractable valid identified sets for the corresponding Fr\'echet means. We illustrate the framework through two leading applications: distributional outcomes under the Wasserstein metric and interval-valued outcomes represented by support functions.

The proposed framework has several distinctive features. First, it applies to a broad class of random objects through a common embedding representation, allowing MTR, MTS, and MIV restrictions to be formulated in a unified manner. Second, we establish sharp identification results for three economically important classes of random objects: compositional outcomes, distributional outcomes, and interval-valued outcomes, even though sharpness is not guaranteed for the general identified sets. Finally, we illustrate the empirical content of the proposed framework through applications to Job Corps earnings data and periodontal health data from the National Health and Nutrition Examination Survey (NHANES).

Our paper contributes to the literature on partial identification under monotonicity restrictions. Since the seminal contributions of \cite{Manski:1997} and \cite{Manski_Pepper:2000}, monotonicity restrictions including MTR, MTS, and MIV have been extended and applied in a variety of settings. \cite{Molinari:2010} studies partial identification with missing treatment status under MTR and MTS. \cite{Okumura_Usui:2014} investigates the identifying power of concavity restrictions combined with MTR and MTS. \cite{Kim:2018} replaces the MTR assumption with a Lipschitz-type continuity restriction and studies its identifying implications together with MTS. \cite{Jun_Lee:2023} applies the MTR assumption to identify persuasion effects. Despite these developments, the existing literature has focused almost exclusively on scalar outcomes. Our paper extends the MTR, MTS, and MIV framework to random-object outcomes. This paper is also related to the literature on random-object data and metric statistics (see \citet{Dubey:2024} for a survey). Recent work has developed statistical methods for random objects through Fr\'echet means and embedding representations (e.g., \citealp{PeMu19, Dubey:2020, BhMu23}). We contribute to this literature by incorporating partial-identification arguments based on MTR, MTS, and MIV into this framework. To the best of our knowledge, this is the first unified framework for monotonicity restrictions for random-object outcomes.

The remainder of the paper is organized as follows. Section~\ref{sec:prelim} introduces the embedding representation for random objects and reviews the Fr\'echet mean under isometric embeddings. Section~\ref{sec:gen} develops the general identification framework under the MTR, MTS, and MIV assumptions. Sections~\ref{sec:dist} and \ref{sec:interval} establish sharp identification results for distributional and interval-valued outcomes, respectively. Section~\ref{sec:numerical} presents two empirical illustrations of the proposed framework using periodontal health distributions from NHANES and interval-valued Job Corps earnings data. Section~\ref{sec:conclusion} concludes.

\section{Preliminaries} \label{sec:prelim}

\subsection{MTR, MTS, and MIV for scalar outcomes}

We briefly review the scalar-outcome framework of \citet{Manski:1997} and \citet{Manski_Pepper:2000}, which serves as the benchmark for our extension to random-object outcomes. We begin by introducing the concepts of the MTR, MTS, as well as MIV in the standard scalar-outcome setting.

Let $Y_i(\cdot): \mathcal{T}\to\mathcal{Y}$ be a response function mapping the mutually exclusive and exhaustive treatment $t \in \mathcal{T}\subseteq\mathbb{R}$ into outcome $Y_i(t)$. It is typically assumed that $\mathcal{Y}\subseteq \mathbb{R}$, while it is extended to non-Euclidean outcomes in the subsequent sections. Person $i$ has a realized treatment $D_i \in \mathcal{T}$ and a realized outcome $Y_i = Y_i(D_i)$. We are interested in identifying the counterfactual mean $\E{Y_i(t)}$. In a schooling-wage context, this quantity may represent the average counterfactual wage that a worker would earn if they had graduated from high school (i.e., $\E{Y_i(12)}$).

\cite{Manski:1997} and \cite{Manski_Pepper:2000} introduce the following MTR and MTS assumptions:
\begin{assumption}[MTR]\label{assumption:MTR}
    Let $\mathcal{T}$ be an ordered set. 
    Then $t_2 \ge t_1 \Rightarrow Y_i(t_2) \ge Y_i(t_1)$ almost surely.
\end{assumption}

\begin{assumption}[MTS]\label{assumption:MTS}
    Let $\mathcal{T}$ be an ordered set. 
    For each $t \in \mathcal{T}$, $t_2 \ge t_1 \Rightarrow \E{Y_i(t) \mid  D_i=t_2} \ge \E{Y_i(t) \mid  D_i=t_1}$.
\end{assumption}

MTR assumes the monotonicity of the response function $Y_i(\cdot)$. In words, it assumes a weakly monotone relationship between treatment intensity and outcomes---for example, production with respect to inputs or wages with respect to years of schooling \citep{Manski_Pepper:2000}.

While MTR imposes monotonicity of potential outcomes with respect to hypothetical treatments, MTS imposes monotonicity of average potential outcomes with respect to the selected treatment. In the schooling context, MTS assumes that individuals who choose higher levels of schooling have weakly higher mean wage functions than those who choose lower levels of schooling, even when both groups are exogenously assigned the same level of schooling \citep{Manski_Pepper:2000}.

Let $Z_i\in \mathcal{Z}\subseteq\mathbb{R}$ denote the instrumental variable. MTS can be viewed as a special case of the MIV assumption obtained by taking $Z_i=D_i$.
\begin{assumption}[MIV]\label{assumption:MIV}
    Let $\mathcal{Z}$ be an ordered set. $Z_i$ is a monotone instrumental variable in the sense that for each $t\in\mathcal{T}$ and all $(z_1, z_2) \in \mathcal{Z}\times\mathcal{Z}$ such that $z_2\ge z_1$, $\E{Y_i(t) \mid Z_i=z_2} \ge \E{Y_i(t) \mid  Z_i=z_1}$.
\end{assumption}

The MIV assumption extends the monotonicity idea underlying MTS to an observed covariate rather than the treatment itself. Specifically, it assumes that, for any fixed treatment level, the conditional mean of the potential outcome is weakly monotone in the covariate. For instance, in the schooling-wage context, $Z$ may be a proxy for ability \citep{Manski_Pepper:2000}, such as an IQ test score. In this case, MIV assumes that the mean potential outcome at any given level of schooling weakly increases with the IQ score.

\subsection{Random objects and \f mean}

Let $Y_i(\cdot): \mathcal{T}\to\mathcal{Y}$ denote the response function, where $\mathcal{Y}$ is equipped with a metric $d_\mathcal{Y}$. The outcome space $\mathcal{Y}$ need not be a Euclidean space, and $d_\mathcal{Y}$ need not be the Euclidean metric. Leading examples in econometric analysis include distributional outcomes and interval-valued outcomes, neither of which generally takes values in a Euclidean space. For example, \cite{Blundell_etal:2007} studies wage distributions, while \cite{kuri:26} analyzes distributions of individual sleep duration. Interval-valued outcomes arise when data are recorded only in intervals, such as income reported in brackets in surveys \citep{Manski_Tamer:2002}.

Our goal is to extend the monotonicity concepts to such general outcomes, referred to as random objects, and to study their (partial) identification. To this end, we first define an appropriate notion of a ``mean'' for random objects. Because the spaces in which random objects take values generally lack a linear structure, the conventional Euclidean mean is not an intrinsic measure of location. Moreover, even when an expectation can be defined by embedding the object space into a Hilbert space, the resulting quantity need not respect the geometry of the original space. We therefore employ the Fr\'echet mean as the population summary throughout the paper.

Let $\mathbb{E}_{\oplus}[Y_i(t)] \in \mathcal{Y}$ denote the Fr\'echet mean of $Y_i(t)$, defined as
\[
\mathbb{E}_{\oplus}[Y_i(t)]=\argmin_{\omega\in\mathcal Y}\mathbb{E} [d_\mathcal Y^2\!\left(Y_i(t),\omega\right)].
\]
When the minimizer is unique, $\mathbb{E}_{\oplus}[Y_i(t)]$ provides an intrinsic notion of the center of the distribution determined solely by the metric structure of $\mathcal{Y}$. Similarly, the conditional Fr\'echet mean is defined as $\mathbb{E}_\oplus[Y_i(t) \mid \cdot\,]=\mathrm{arg\,min}_{y \in \mathcal{Y}}\mathbb{E}[{d_{\mathcal{Y}}^2(y, Y_i(t))\mid \cdot\,}]$.

To develop our identification results, we adopt the following standard embedding framework from the metric statistics literature. Suppose that there exists an injective map $\Psi:\mathcal Y\rightarrow\mathcal H$, where $\mathcal H$ is a separable Hilbert space, such that
\[
d_{\mathcal Y}(x,y) = d_{\mathcal H}(\Psi(x),\Psi(y))
\]
for all $x,y\in\mathcal Y$, and $\Psi(\mathcal Y)$ is a closed and convex subset of $\mathcal{H}$. Such embeddings are available for many important random objects, including distributional outcomes \citep{Bigot_etal:2017}, compositional outcomes \citep{Aitchison:1982}, and random sets \citep{kuri:25}. Under this framework, the Fr\'echet mean admits the representation
\[
\mathbb{E}_\oplus\!\left[Y_i(t)\mid\cdot\,\right] = \Psi^{-1}\!\left( \mathbb{E}\!\left[\Psi(Y_i(t))\mid\cdot\,\right] \right).
\]
This representation is the key to our identification strategy. It enables us to formulate the identification problem in terms of expectations in the embedded Hilbert space while preserving the geometry of the original object space. More specifically, once an identified set for $\mathbb{E}\!\left[\Psi(Y_i(t))\right]$ is constructed, the corresponding identified set for the Fr\'echet mean $\mathbb{E}_{\oplus}[Y_i(t)]$ is obtained by pulling it back through $\Psi^{-1}$. The next section exploits this representation to extend MTR, MTS, and MIV to random-object outcomes.

\section{General framework} \label{sec:gen}

In this section, we develop a unified framework for MTR, MTS, and MIV with random-object outcomes. Building on the embedding representation introduced in the previous section, we formulate monotonicity restrictions on the embedded random variables and derive identified sets for the corresponding Fr\'echet means.

To facilitate a coordinatewise formulation of monotonicity restrictions, we specialize the general embedding framework introduced in Section~\ref{sec:prelim} to embeddings into an $L^2$ space and impose the following additional regularity conditions.

\begin{assumption}\label{assumption:metric}
There exist a $\sigma$-finite measure space $(\mathcal K,\mathcal A,\nu)$ and an injective map $\Psi:\mathcal Y\to L^2(\mathcal K,\mathcal A,\nu)$ such that $d_{\mathcal Y}(y_1,y_2) = \|\Psi(y_1)-\Psi(y_2)\|_{L^2(\nu)}$. Furthermore, $\Psi(\mathcal Y)$ is a closed convex subset of $L^2(\mathcal K,\mathcal A,\nu)$, and $\mathbb{E}[\|\Psi(Y_i(t))\|_{L^2(\nu)}^2]<\infty$ for every $t\in\mathcal T$.
\end{assumption}

Assumption~\ref{assumption:metric} provides a common representation for a broad class of random objects. Ordinary Euclidean outcomes correspond to $\mathcal K=\{1,\ldots,d\}$ equipped with the counting measure, while distributional outcomes correspond to $L^2((0,1))$, where $\Psi$ is the quantile function. Other examples include interval-valued outcomes, compositional outcomes, and covariance matrices, which are discussed later.

In this representation, monotonicity restrictions can be imposed pointwise on the embedded functions $\Psi(Y_i(t))(k)$ for $k\in\mathcal K$. This coordinatewise formulation is the key to extending the scalar MTR, MTS, and MIV restrictions to random-object outcomes.

We first extend the MTR assumption to random-object outcomes by imposing monotonicity pointwise on the embedded coordinates. Write $X_i^t(k)=\Psi(Y_i(t))(k)$ for $k\in\mathcal K$.
\begin{assumption}[MTR for random objects]\label{assumption:MTR-obj}
    Let $\mathcal{T}$ be an ordered set. There exists a partition $(\mathcal{K}_+, \mathcal{K}_-,\mathcal{K}_0)$ of $\mathcal{K}$ such that, for $t_1,t_2 \in \mathcal{T}$,
    \begin{align*}
        t_2 \ge t_1 \Rightarrow 
        \begin{cases}
            X_i^{t_2}(k) \ge X_i^{t_1}(k) & \text{for a.e. } k \in \mathcal{K}_+,\\
            X_i^{t_2}(k) \le X_i^{t_1}(k) & \text{for a.e. } k \in \mathcal{K}_-,
        \end{cases}
    \end{align*}
    almost surely, and no monotonicity relation is assumed for $k\in\mathcal{K}_0$.
\end{assumption}

This formulation reduces to the usual scalar MTR when $\mathcal K$ is a singleton and to componentwise MTR for multivariate outcomes.

Assumption~\ref{assumption:MTR-obj} requires monotonicity to hold pointwise on the embedded coordinates. The partition $(\mathcal K_+,\mathcal K_-,\mathcal K_0)$ allows different coordinates of the random object to respond monotonically in different directions, while leaving unrestricted coordinates unconstrained. When $\mathcal K_+=\mathcal K$, the assumption corresponds to a globally increasing response, whereas $\mathcal K_-=\mathcal K$ yields a globally decreasing response.

MTR for distributional and interval outcomes, which are among the most common applications, will be treated separately in later sections. Here, we provide some other examples.
\begin{example}[Compositional outcomes]\label{ex:comp}
    Consider the space for compositional data with positive components
    \begin{align*}
        \mathcal Y = \left\{y\in\mathbb{R}^d : y_j > 0,\forall j\in\{1,\ldots,d\}\,\text{ and } \, \sum_{j=1}^d y_j = 1\right\}.
    \end{align*}
    A common metric for compositional data is the Aitchison metric \citep{Aitchison:1982}:
    \begin{align*}
        d_A(x,y) = \|\Lambda(x) - \Lambda(y)\|_{\mathbb R^d},
    \end{align*}
    where $\|\cdot\|_{\mathbb{R}^d}$ is the standard Euclidean norm on $\mathbb{R}^d$ and 
    \begin{align*}
        \Lambda(x) = \left(\log {\frac{x_1}  {g(x)}},\dots, \log {\frac{x_d} {g(x)}}\right)^\top,\quad g(x) = \left(\prod_{j=1}^d x_j\right)^{1/d}.
    \end{align*}    
    Although $\Lambda$ is an isometric embedding into $\mathbb{R}^d$, defining monotonicity through this representation may be less intuitive in some applications because each coordinate $x_k/g(x)$ depends on the geometric mean of all components of $x$. We therefore consider the following alternative embedding. Take a matrix $V\in \mathbb{R}^{d \times (d-1)}$ satisfying $ V^\top V = I_{d-1}$ and $V^\top 1_d = 0$, where $1_d = (1,\ldots, 1)$. We define $\Psi(x) \coloneqq V^\top \Lambda(x)$. We then have
    \begin{align*}
        d_A(x,y) = \|\Lambda(x) - \Lambda(y)\|_{\mathbb R^d}
        = \|V^\top \Lambda(x) - V^\top \Lambda(y)\|_{\mathbb R^{d-1}}
        = \|\Psi(x) - \Psi(y)\|_{\mathbb R^{d-1}},
    \end{align*}
    and $\Psi(\mathcal Y) = \mathbb R^{d-1}$. Hence $\Psi$ is an isometry and $\Psi(\mathcal Y)$ is closed convex. This $\Psi$ is an isometric log-ratio transform associated with the orthonormal basis $V$; see \cite{Egozcue03} for a general treatment.

    We now provide a possible MTR restriction.
    To illustrate, consider the three-part compositional outcome $Y=(y_1,y_2,y_3)$. Take $V=[v_1\ v_2]$ with $v_1 = (1/\sqrt{2}, -1/\sqrt{2}, 0)^\top$ and $v_2 = (1/\sqrt{6}, 1/\sqrt{6}, -2/\sqrt{6})^\top$. Then we have 
    \begin{align*}
        \Psi(Y) = \begin{pmatrix}
            \Psi(Y)(1)\\
            \Psi(Y)(2)
        \end{pmatrix}
        =
        \begin{pmatrix}
            \dfrac{1}{\sqrt{2}}\log \left(\dfrac{y_1}{y_2}\right)\\
            \sqrt{\dfrac{2}{3}}\log \left(\dfrac{\sqrt{y_1 y_2}}{y_3}\right)
        \end{pmatrix}.
    \end{align*}
    To fix ideas, suppose that $y_1$, $y_2$, and $y_3$ denote a firm's expenditure shares on skilled labor, routine labor, and capital, respectively. Let the treatment $t$ measure the introduction or intensity of AI and automation technologies. MTR may impose $\Psi(Y_i(t_2))(1)\geq \Psi(Y_i(t_1))(1)$ and $\Psi(Y_i(t_2))(2)\leq \Psi(Y_i(t_1))(2)$ whenever $t_2\geq t_1$. The first restriction can be interpreted as assuming that greater adoption of AI and automation technologies weakly increases the use of skilled labor relative to routine labor. The second restriction assumes that greater adoption weakly decreases the (geometric) mean of the two labor components relative to capital.\hfill\qed
\end{example}

\begin{example}[Covariance matrices]
    Let $\mathrm{Sym}_m^+$ denote the space of $m \times m$ symmetric positive-definite matrices. Let $\lambda_{\mathrm{min}}(A)$ be the minimum eigenvalue of $A \in \mathrm{Sym}_m^+$. For any $\varepsilon > 0$, define the space $\mathrm{Sym}_{m,\varepsilon}^+=\{A \in \mathrm{Sym}_m^+: \lambda_{\mathrm{min}}(A)\ge \varepsilon\}$. Suppose the outcome $Y$ takes values in $\mathcal Y = \mathrm{Sym}_{m,\varepsilon}^+$.
    
    A commonly employed metric is the Frobenius metric $d_F(A,B) = [\mathrm{tr}\{(A-B)^\prime(A-B)\}]^{1/2} = \|\Lambda(A) - \Lambda(B)\|$ where $\Lambda(A)$ is the vectorization of the matrix $A$, that is, $\Lambda(A) = \mathrm{vec}(A)$ and $\|\cdot\|$ is the Euclidean norm on $\mathbb{R}^{m^2}$. However, coordinatewise monotonicity on $\operatorname{vec}(A)$ may be difficult to motivate in some applications. In particular, the resulting ordering does not generally coincide with conventional matrix orderings.

    Let $\Psi(A)(u) = u^\top A u$ for $u\in\mathbb S^{m-1}$. Then $\Psi(\mathcal Y)$ is a closed convex subset of $L^2(\mathbb S^{m-1})$.
    We introduce the following metric
    \begin{align*}
        d_S(A,B)^2 = 
        \|\Psi(A) - \Psi(B)\|_{L^2(\nu)}^2 = 
        \int_{\mathbb S^{m-1}} \left\{u^\top (A-B) u\right\}^2\,d\nu(u),
    \end{align*}
    where $\nu$ denotes the uniform probability measure on $\mathbb S^{m-1}$. It can be shown that the \f means induced by the metrics $d_F$ and $d_S$ coincide, so the target parameter is invariant to the choice between these two metrics. Under this embedding, a possible MTR restriction is $\Psi(Y_i(t_2))(u) \ge \Psi(Y_i(t_1))(u) \Leftrightarrow u^\top Y_i(t_2) u \ge u^\top Y_i(t_1) u$ for all $u \in \mathbb S^{m-1}$. This is equivalent to the conventional Loewner ordering $Y_i(t_2)\succeq Y_i(t_1)$. Intuitively speaking, this MTR means that outcomes under treatment $t_2$ are weakly more dispersed than those under treatment $t_1$.\hfill\qed
\end{example}

As illustrated in the examples above, the representation of a random object by an embedding map $\Psi$ and the choice of metric $d_{\mathcal Y}$ need not be unique. The choice of $\Psi$ is substantively relevant in the present framework because the coordinatewise MTR restriction (and MTS and MIV introduced below as well) is imposed on $\Psi(Y)$. Thus, even when two embeddings are isometric with respect to the same metric, they may generate different notions of coordinatewise monotonicity. An embedding should therefore be chosen so that its coordinates represent economically meaningful features of the random object and the imposed directions of monotonicity can be justified in the application. For example, different choices of the orthonormal basis $V$ in the compositional-data example correspond to monotonicity restrictions on different log-contrasts.

Regarding the choice of metric, many classes of random objects are equipped with a standard metric that reflects their commonly accepted geometry. Such a metric provides a natural benchmark for defining the target parameter. An alternative metric may also be considered when it facilitates the formulation of economically interpretable monotonicity restrictions, provided that the resulting target parameter coincides with that defined under the benchmark metric.

We next extend the MTS assumption to random-object outcomes by imposing monotonicity on the conditional expectations of the embedded coordinates. As in Assumption~\ref{assumption:MTR-obj}, the monotonicity direction may differ across coordinates through the partition $(\mathcal K_+,\mathcal K_-,\mathcal K_0)$.

\begin{assumption}[MTS for random objects]\label{assumption:MTS-obj}
    Let $\mathcal{T}$ be an ordered set. 
    For each $t\in\mathcal T$ and all $t_1,t_2\in\mathcal T$,
    \begin{align*}
        t_2 \ge t_1 \Rightarrow 
        \begin{cases}
            \mathbb{E}[X_i^{t}(k) \mid  D_i=t_2] \ge \mathbb{E}[X_i^{t}(k) \mid  D_i=t_1] & \text{for a.e. } k \in \mathcal{K}_+,\\
            \mathbb{E}[X_i^{t}(k) \mid  D_i=t_2] \le \mathbb{E}[X_i^{t}(k) \mid  D_i=t_1] & \text{for a.e. } k \in \mathcal{K}_-,
        \end{cases}
    \end{align*}
    and no monotonicity relation is assumed for $k\in\mathcal{K}_0$.
\end{assumption}

Assumption~\ref{assumption:MTS-obj} extends the scalar MTS restriction by requiring the conditional Fr\'echet mean, represented through the embedded coordinates, to vary monotonically with the realized treatment. As before, coordinates in $\mathcal K_0$ are left unrestricted, while coordinates in $\mathcal K_+$ and $\mathcal K_-$ are allowed to satisfy monotonicity in opposite directions.

We finally extend the MIV assumption to random-object outcomes by imposing monotonicity on the conditional expectations of the embedded coordinates with respect to the instrumental variable. As before, the monotonicity direction is allowed to differ across coordinates through the partition $(\mathcal K_+,\mathcal K_-,\mathcal K_0)$.

\begin{assumption}[MIV for random objects]\label{assumption:MIV-obj}
    Let $\mathcal{Z}$ be an ordered set. 
    For each $t\in\mathcal{T}$ and all $(z_1, z_2) \in \mathcal{Z}\times\mathcal{Z}$ such that $z_2\ge z_1$, 
    \begin{align*}
            \mathbb{E}[X_i^{t}(k) \mid Z_i=z_2] \ge \mathbb{E}[X_i^{t}(k) \mid  Z_i=z_1] & \quad\text{for a.e. } k \in \mathcal{K}_+,\\
            \mathbb{E}[X_i^{t}(k) \mid Z_i=z_2] \le \mathbb{E}[X_i^{t}(k) \mid  Z_i=z_1] & \quad\text{for a.e. } k \in \mathcal{K}_-,
    \end{align*}
    and no monotonicity relation is assumed for $k\in\mathcal{K}_0$.
\end{assumption}

Assumption~\ref{assumption:MIV-obj} extends the scalar MIV restriction to random-object outcomes by requiring the conditional Fr\'echet mean, represented through the embedded coordinates, to vary monotonically with the instrumental variable. As in the MTR and MTS assumptions, unrestricted coordinates in $\mathcal K_0$ are left unconstrained, while coordinates in $\mathcal K_+$ and $\mathcal K_-$ may satisfy monotonicity in opposite directions. The three assumptions above provide coordinatewise extensions of the scalar MTR, MTS, and MIV restrictions to random-object outcomes. We now show that these assumptions imply tractable identified sets for the corresponding Fr\'echet means.

Appendix~\ref{sec:gen-set} derives sharp identified sets for general random objects under each assumption. These sharp sets, however, may be difficult to characterize or compute in general. We therefore begin with simpler identified sets obtained by combining the coordinatewise sharp bounds of \cite{Manski:1997} and \cite{Manski_Pepper:2000}. As shown in Propositions~\ref{prop:simple-sharp}--\ref{prop:int} below, some of these identified sets are in fact sharp for several important classes of random objects. The following proposition establishes valid identified sets under each of the three monotonicity assumptions.

\begin{proposition}\label{prop:gen-simple}
    Suppose that Assumption \ref{assumption:metric} holds true. Assume that $\mathcal K_+$, $\mathcal K_-$, and $\mathcal K_0$ are measurable and form a partition of $\mathcal K$ up to $\nu$-null sets. Suppose that there exist measurable functions $k \mapsto \underline{\mathcal X}_k$ and $k \mapsto \overline{\mathcal X}_k$ such that, for every $s \in \mathcal T$, $\P{\underline{\mathcal X}_k \le X_i^s(k) \le \overline{\mathcal X}_k} = 1$ holds for $\nu$-almost every $k\in\mathcal K$. We also define $X_i(k) = X_i^{D_i}(k)$. Let $\mathcal T$ and $\mathcal Z$ be discrete, finite sets. Finally, assume $\P{D_i = d}>0$ for every $d \in \mathcal T$.
    \begin{description}
        \item[(i)] Suppose that Assumption \ref{assumption:MTR-obj} holds true and that $\overline{\mathcal X}_k$ and $\underline{\mathcal X}_k$ are known. Then, a valid identified set of $\E{\Psi(Y_i(t))}$ is given by $\Psi(\mathcal Y) \cap \mathcal B_{\mathtt{MTR}}(t)$, where 
        \begin{align*}
            \mathcal B_{\mathtt{MTR}}(t) = 
            \left\{x\in L^2(\mathcal K, \mathcal A, \nu): \ell^{\mathtt{MTR}}_{t,k} \le x(k)\le u^{\mathtt{MTR}}_{t,k} \text{ for a.e. }k\in\mathcal K \right\},
        \end{align*}
        and the coordinate-wise bounds are defined as follows:
        For $k\in\mathcal K_+$,
        \[ 
            \ell^{\mathtt{MTR}}_{t,k} 
            = \mathbb{E}[X_i(k)\mathbf 1\{D_i\le t\}] +\underline{\mathcal X}_k\P{D_i>t},\quad
            u^{\mathtt{MTR}}_{t,k}
            = \mathbb{E}[X_i(k)\mathbf 1\{D_i\ge t\}] +\overline{\mathcal X}_k\P{D_i<t}. 
        \]
        For $k\in\mathcal K_-$, 
        \[
            \ell^{\mathtt{MTR}}_{t,k} 
            = \mathbb{E}[X_i(k)\mathbf 1\{D_i\ge t\}] +\underline{\mathcal X}_k\P{D_i<t},\quad
            u^{\mathtt{MTR}}_{t,k} 
            = \mathbb{E}[X_i(k)\mathbf 1\{D_i\le t\}] +\overline{\mathcal X}_k\P{D_i>t}.
        \]
        For $k\in\mathcal K_0$, 
        \[
            \ell^{\mathtt{MTR}}_{t,k} 
            = \mathbb{E}[X_i(k)\mathbf 1\{D_i=t\}] +\underline{\mathcal X}_k\P{D_i\ne t},\quad 
            u^{\mathtt{MTR}}_{t,k} 
            = \mathbb{E}[X_i(k)\mathbf 1\{D_i=t\}] +\overline{\mathcal X}_k\P{D_i\ne t}. 
        \]

        \item[(ii)] Suppose that Assumption \ref{assumption:MTS-obj} holds true and that $\overline{\mathcal X}_k$ and $\underline{\mathcal X}_k$ are known. Then, a valid identified set of $\E{\Psi(Y_i(t))}$ is given by $\Psi(\mathcal Y) \cap \mathcal B_{\mathtt{MTS}}(t)$, where 
        \begin{align*}
            \mathcal B_{\mathtt{MTS}}(t) = 
            \left\{x\in L^2(\mathcal K, \mathcal A, \nu): \ell^{\mathtt{MTS}}_{t,k} \le x(k)\le u^{\mathtt{MTS}}_{t,k} \text{ for a.e. }k\in\mathcal K \right\},
        \end{align*}
        and the coordinate-wise bounds are defined as follows:
        For $k\in\mathcal K_+$, 
        \[ 
            \ell^{\mathtt{MTS}}_{t,k} 
            = \mathbb{E}[X_i(k)\mid D_i= t]\P{D_i\ge t} +\underline{\mathcal X}_k\P{D_i<t},\quad
            u^{\mathtt{MTS}}_{t,k}
            = \mathbb{E}[X_i(k) \mid D_i= t]\P{D_i\le t} +\overline{\mathcal X}_k\P{D_i>t}. 
        \]
        For $k\in\mathcal K_-$, 
        \[ 
            \ell^{\mathtt{MTS}}_{t,k} 
            = \mathbb{E}[X_i(k)\mid D_i= t]\P{D_i\le t} +\underline{\mathcal X}_k\P{D_i>t},\quad
            u^{\mathtt{MTS}}_{t,k}
            = \mathbb{E}[X_i(k) \mid D_i= t]\P{D_i\ge t} +\overline{\mathcal X}_k\P{D_i<t}. 
        \]
        For $k\in\mathcal K_0$, $\ell^{\mathtt{MTS}}_{t,k} = \ell^{\mathtt{MTR}}_{t,k}$ and $u^{\mathtt{MTS}}_{t,k} = u^{\mathtt{MTR}}_{t,k}$.

        \item[(iii)] Suppose that Assumption \ref{assumption:MIV-obj} holds true and that $\overline{\mathcal X}_k$ and $\underline{\mathcal X}_k$ are known. 
        Suppose also that $\P{D_i=d,Z_i=z}>0$ for every $(d,z) \in \mathcal T \times \mathcal Z$.
        Then, a valid identified set of $\E{\Psi(Y_i(t))}$ is given by $\Psi(\mathcal Y) \cap \mathcal B_{\mathtt{MIV}}(t)$, where 
        \begin{align*}
            \mathcal B_{\mathtt{MIV}}(t) = 
            \left\{x\in L^2(\mathcal K, \mathcal A, \nu): \ell^{\mathtt{MIV}}_{t,k} \le x(k)\le u^{\mathtt{MIV}}_{t,k} \text{ for a.e. }k\in\mathcal K \right\},
        \end{align*}
        and the coordinate-wise bounds are defined as follows:
        For $k\in\mathcal K_+$, 
        \begin{align*} 
            \ell^{\mathtt{MIV}}_{t,k}
            &=\sum_{z\in\mathcal Z}\P{Z_i=z}\sup_{z^\prime\le z}\left(\mathbb{E}[X_i(k)\mathbf 1\{D_i=t\}\mid Z_i=z^\prime] + \underline{\mathcal X}_k \P{D_i\ne t\mid Z_i=z^\prime}\right),\\
            u^{\mathtt{MIV}}_{t,k}
            &=\sum_{z\in\mathcal Z}\P{Z_i=z}\inf_{z^\prime\ge z}\left(\mathbb{E}[X_i(k)\mathbf 1\{D_i=t\}\mid Z_i=z^\prime] + \overline{\mathcal X}_k \P{D_i\ne t\mid Z_i=z^\prime}\right).
        \end{align*} 
        For $k\in\mathcal K_-$, 
        \begin{align*} 
            \ell^{\mathtt{MIV}}_{t,k}
            &=
            \sum_{z\in\mathcal Z}
            \P{Z_i=z}
            \sup_{z^\prime\ge z}
            \left(\E{X_i(k)\mathbf 1\{D_i=t\}\mid Z_i=z^\prime} + \underline{\mathcal X}_k \P{D_i\ne t\mid Z_i=z^\prime}\right),\\
            u^{\mathtt{MIV}}_{t,k}
            &=
            \sum_{z\in\mathcal Z}
            \P{Z_i=z}
            \inf_{z^\prime\le z}
            \left(\E{X_i(k)\mathbf 1\{D_i=t\}\mid Z_i=z^\prime} + \overline{\mathcal X}_k \P{D_i\ne t\mid Z_i=z^\prime}\right).
        \end{align*} 
        For $k\in\mathcal K_0$, $\ell^{\mathtt{MIV}}_{t,k} = \ell^{\mathtt{MTR}}_{t,k}$ and $u^{\mathtt{MIV}}_{t,k} = u^{\mathtt{MTR}}_{t,k}$.
    \end{description}
    Furthermore, $\Psi^{-1}(\Psi(\mathcal Y) \cap \mathcal B_{\mathtt{M}}(t))$ is also a valid identified set of $\mathbb{E}_{\oplus} [Y_i(t)]$, where $\mathtt{M}\in\{\mathtt{MTR}, \mathtt{MTS}, \mathtt{MIV}\}$.
\end{proposition}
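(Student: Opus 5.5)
Since the sets in Proposition~\ref{prop:gen-simple} are only claimed to be valid, not sharp, the plan is to show that the true parameter $\mu_t := \E{\Psi(Y_i(t))}$ lies in $\Psi(\mathcal Y)\cap\mathcal B_{\mathtt M}(t)$. Membership in $\Psi(\mathcal Y)$ is handled first. Because $\Psi(\mathcal Y)$ is closed and convex in $L^2(\nu)$ and $\E{\|\Psi(Y_i(t))\|^2}<\infty$ by Assumption~\ref{assumption:metric}, the Bochner mean of a random element of a closed convex set lies in that set. This follows from the Hahn--Banach separation argument: if $\mu_t\notin\Psi(\mathcal Y)$, some continuous functional separates $\mu_t$ strictly from $\Psi(\mathcal Y)$, and taking expectations gives a contradiction. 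The same fact yields the representation $\mathbb E_\oplus[Y_i(t)]=\Psi^{-1}(\mu_t)$ recalled in Section~\ref{sec:prelim}, which gives the final claim about $\Psi^{-1}(\Psi(\mathcal Y)\cap\mathcal B_{\mathtt M}(t))$ immediately.

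For membership in $\mathcal B_{\mathtt M}(t)$, I will pass from the Bochner mean to coordinates. By Fubini on $\Omega\times\mathcal K$ (the map $(\omega,k)\mapsto X_i^t(k)$ is jointly measurable, since $\Psi(Y_i(t))$ is a strongly measurable $L^2$-valued random element and $\nu$ is $\sigma$-finite), a representative of $\mu_t$ satisfies $\mu_t(k)=\E{X_i^t(k)}$ for $\nu$-a.e.\ $k$. Similarly, the conditional and indicator-weighted moments such as $\E{X_i(k)\mathbf 1\{D_i\le t\}}$ are measurable in $k$. The same argument makes $\ell_{t,k},u_{t,k}$ measurable, so $\mathcal B_{\mathtt M}(t)$ is well defined. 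Next I intersect countably many null sets. These are the exceptional $k$-sets in the support bounds for each $s\in\mathcal T$, the exceptional sets in the monotonicity assumptions for each of the finitely many pairs $(t_1,t_2)$ or $(z_1,z_2)$, and the Fubini exceptional sets. Off this null set, for a fixed $k$, the random variables $X_i^s(k)$ satisfy exactly the scalar hypotheses of \cite{Manski:1997} and \cite{Manski_Pepper:2000}, with support $[\underline{\mathcal X}_k,\overline{\mathcal X}_k]$.

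The coordinatewise bounds then follow from the scalar arguments, which I will reproduce briefly. For MTR with $k\in\mathcal K_+$, decompose $\E{X_i^t(k)}$ over the events $\{D_i\le t\}$ and $\{D_i>t\}$. On $\{D_i\ge t\}$, monotonicity gives $X_i^t(k)\le X_i(k)$; on $\{D_i<t\}$, the support gives $X_i^t(k)\le\overline{\mathcal X}_k$. Together these yield $u^{\mathtt{MTR}}_{t,k}$, and the lower bound is symmetric. For $\mathcal K_-$ the inequalities are reversed, and for $\mathcal K_0$ only the worst-case bounds are used. For MTS, write $\E{X_i^t(k)}=\sum_d \P{D_i=d}\E{X_i^t(k)\mid D_i=d}$. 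For $d\le t$, monotonicity bounds the summand above by $\E{X_i^t(k)\mid D_i=t}=\E{X_i(k)\mid D_i=t}$, which is identified because $\P{D_i=t}>0$; for $d>t$ the summand is bounded by $\overline{\mathcal X}_k$. For MIV, the first step is to note that $\E{X_i^t(k)\mid Z_i=z'}$ lies between the worst-case conditional bounds $\E{X_i(k)\mathbf 1\{D_i=t\}\mid Z_i=z'}+\underline{\mathcal X}_k\P{D_i\ne t\mid Z_i=z'}$ (and the upper analogue). Monotonicity in $z$ then gives, for $k\in\mathcal K_+$, that $\E{X_i^t(k)\mid Z_i=z}$ lies between the sup over $z'\le z$ of the lower bounds and the inf over $z'\ge z$ of the upper bounds. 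Averaging over $z$ completes this case, with the roles of $z'\le z$ and $z'\ge z$ swapped for $\mathcal K_-$.

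The main obstacle is not the scalar algebra but the measure-theoretic bookkeeping. One must ensure the pointwise identity $\mu_t(k)=\E{X_i^t(k)}$ holds $\nu$-a.e., that the ``for a.e.\ $k$'' qualifiers in the assumptions hold jointly with the almost-sure qualifiers in $\omega$, and that the bound functions are measurable. The finiteness of $\mathcal T$ and $\mathcal Z$ is what makes the union of exceptional sets countable. The MTR assumption reads ``for a.e.\ $k$, almost surely''; I will interpret this, via Fubini, as a statement about a jointly null subset of $\Omega\times\mathcal K$, whose $k$-sections are $\P$-null for a.e.\ $k$.
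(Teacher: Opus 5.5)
Your proposal is correct and follows the paper's proof. Both restrict to a full-$\nu$-measure set of coordinates on which the scalar Manski--Pepper hypotheses hold, apply the scalar bounds coordinatewise (same direction on $\mathcal K_+$, sign-flipped on $\mathcal K_-$, worst-case on $\mathcal K_0$), and conclude via $\E{\Psi(Y_i(t))}\in\Psi(\mathcal Y)$ and the Fr\'echet-mean representation. Your Fubini bookkeeping and Hahn--Banach argument spell out steps the paper asserts directly from Assumption~\ref{assumption:metric}.
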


\begin{remark}
Most empirical applications involve treatments and instruments with finite support. Accordingly, throughout the paper we focus on finite sets $\mathcal T$ and $\mathcal Z$, as in \cite{Manski_Pepper:2000}. This finiteness restriction simplifies the exposition and the proofs. Nevertheless, the identification arguments could be extended to continuous treatments and instruments by replacing finite sums with integrals with respect to the corresponding probability measures, subject to appropriate measurability and regularity conditions. Such an extension may not be entirely mechanical for the sharpness result for distributional outcomes in Section~\ref{sec:dist}, however, because its proof explicitly exploits the finiteness of $\mathcal T$.
\end{remark}

When $(\mathcal{Y},d_\mathcal{Y})=(\mathbb{R}, \|\cdot\|_\mathbb{R})$, the coordinatewise bounds in Proposition~\ref{prop:gen-simple} coincide with the sharp scalar bounds of \cite{Manski:1997} and \cite{Manski_Pepper:2000}. In the random-object setting, however, the resulting identified sets are generally not sharp because they ignore cross-coordinate restrictions implied by the geometry of $\Psi(\mathcal Y)$. Sections~\ref{sec:dist} and \ref{sec:interval} show that, in appropriately defined senses, these coordinatewise identified sets are sharp for interval-valued outcomes and, except under MIV, for distribution-valued outcomes.

Another limitation of the bounds in parts (i)–(iii) is that they depend on the coordinatewise support limits $\underline{\mathcal X}_k$ and $\overline{\mathcal X}_k$.  This limitation is shared by the original scalar case of \cite{Manski_Pepper:2000}. As in the scalar case, however, this dependence disappears when MTR and MTS are imposed jointly, provided that $\mathcal K_0 = \varnothing$. The next part of Proposition~\ref{prop:gen-simple} extends the corresponding result of \cite{Manski_Pepper:2000} to random-object outcomes.

\addtocounter{proposition}{-1}
\begin{proposition}[continued]\label{prop:gen-simple-cont}
    \quad
    \begin{description}
        \item[(iv)] Suppose that Assumptions \ref{assumption:MTR-obj} and \ref{assumption:MTS-obj} hold and that $\mathcal K_0=\varnothing$. Then, a valid identified set of $\E{\Psi(Y_i(t))}$ is given by $\Psi(\mathcal Y) \cap \mathcal B_{\mathtt{MTR+MTS}}(t)$, where 
        \begin{align*}
            \mathcal B_{\mathtt{MTR+MTS}}(t)
            =
            \left\{
            x\in L^2(\mathcal K, \mathcal A, \nu):
            \ell^{\mathtt{MTR+MTS}}_{t,k}
            \le x(k)\le
            u^{\mathtt{MTR+MTS}}_{t,k}
            \text{ for a.e. } k\in\mathcal K
            \right\},
        \end{align*}
        and the coordinate-wise bounds are defined as follows:
        For $k\in\mathcal K_+$, 
        \begin{align*}
            \ell^{\mathtt{MTR+MTS}}_{t,k}
            &=
            \sum_{d < t}\E{X_i(k) \mid D_i = d}\P{D_i=d} 
            +
            \E{X_i(k) \mid D_i = t}\P{D_i\ge t},\\
            u^{\mathtt{MTR+MTS}}_{t,k}
            &=
            \sum_{d > t}\E{X_i(k) \mid D_i = d}\P{D_i=d} 
            +
            \E{X_i(k) \mid D_i = t}\P{D_i\le t}.
        \end{align*}
        For $k\in\mathcal K_-$,
        \begin{align*}
            \ell^{\mathtt{MTR+MTS}}_{t,k}
            &=\sum_{d > t}\mathbb{E}[X_i(k) \mid D_i = d]\P{D_i=d} +\mathbb{E}[X_i(k) \mid D_i = t]\P{D_i\le t},\\
            u^{\mathtt{MTR+MTS}}_{t,k}
            &=\sum_{d < t}\E{X_i(k) \mid D_i = d}\P{D_i=d} +\mathbb{E}[X_i(k) \mid D_i = t]\P{D_i\ge t}.
        \end{align*}
        Furthermore, $\Psi^{-1}(\Psi(\mathcal Y) \cap \mathcal B_{\mathtt{MTR+MTS}}(t))$ is a valid identified set of $\mathbb{E}_{\oplus} [Y_i(t)]$.
    \end{description}
\end{proposition}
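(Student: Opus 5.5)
The plan is to prove part (iv) coordinatewise and then lift the bounds to $L^2$ and to the Fréchet mean. Three facts from Assumption~\ref{assumption:metric} and the finiteness of $\mathcal T$ drive the reduction.
\begin{itemize}
\item Since $\Psi(Y_i(t))\in L^2(\nu)$ has finite second moment, the Bochner expectation $m_t=\mathbb{E}[\Psi(Y_i(t))]$ exists.
\item By Fubini, on a $\sigma$-finite space, $m_t(k)=\mathbb{E}[X_i^t(k)]$ for $\nu$-a.e.\ $k$. Likewise, $\mathbb{E}[\Psi(Y_i(t))\mid D_i=d](k)=\mathbb{E}[X_i^t(k)\mid D_i=d]$ a.e., which is well defined because $\P{D_i=d}>0$.
\item Since $\Psi(\mathcal Y)$ is closed and convex, $m_t\in\Psi(\mathcal Y)$.
\end{itemize}
It therefore suffices to show $\ell^{\mathtt{MTR+MTS}}_{t,k}\le \mathbb{E}[X_i^t(k)]\le u^{\mathtt{MTR+MTS}}_{t,k}$ for a.e.\ $k$. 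Then $m_t\in\Psi(\mathcal Y)\cap\mathcal B_{\mathtt{MTR+MTS}}(t)$, and by the representation $\mathbb{E}_\oplus[Y_i(t)]=\Psi^{-1}(m_t)$ the final claim follows after applying $\Psi^{-1}$, which is a bijection onto its image.

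For a.e.\ fixed $k\in\mathcal K_+$, I reproduce the Manski--Pepper argument. Decompose
\[
\mathbb{E}[X_i^t(k)]=\sum_{d\in\mathcal T}\mathbb{E}[X_i^t(k)\mid D_i=d]\P{D_i=d}.
\]
The terms are bounded by case.
\begin{itemize}
\item For $d=t$, the term equals $\mathbb{E}[X_i(k)\mid D_i=t]$, which is identified.
\item For $d<t$, MTR (Assumption~\ref{assumption:MTR-obj}, with the a.s.\ pointwise inequality integrated) gives $X_i^t(k)\ge X_i^d(k)=X_i(k)$ on $\{D_i=d\}$. Hence $\mathbb{E}[X_i^t(k)\mid D_i=d]\ge \mathbb{E}[X_i(k)\mid D_i=d]$.
\item For $d>t$, MTS (Assumption~\ref{assumption:MTS-obj}) gives $\mathbb{E}[X_i^t(k)\mid D_i=d]\ge \mathbb{E}[X_i^t(k)\mid D_i=t]=\mathbb{E}[X_i(k)\mid D_i=t]$.
\end{itemize}
Summing these yields the lower bound $\ell_{t,k}$, since the $d>t$ and $d=t$ terms combine into $\mathbb{E}[X_i(k)\mid D_i=t]\P{D_i\ge t}$. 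The upper bound is symmetric: MTR bounds the $d>t$ terms from above by $\mathbb{E}[X_i(k)\mid D_i=d]$, and MTS bounds the $d<t$ terms from above by $\mathbb{E}[X_i(k)\mid D_i=t]$. For $k\in\mathcal K_-$ every inequality reverses, which swaps the roles of $d<t$ and $d>t$ and gives the stated formulas.

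No support bounds are needed, because every term is matched either to an identified conditional mean or to the $D_i=t$ mean. This is why $\mathcal K_0=\varnothing$ is required: on $\mathcal K_0$ the $d\neq t$ terms would be unrestricted.

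The main technical obstacle is measure-theoretic bookkeeping for the exceptional sets.
\begin{itemize}
\item MTR holds ``almost surely, for a.e.\ $k$'', and MTS holds ``for a.e.\ $k$'' separately for each pair $(t_1,t_2)$ and each $t$. Because $\mathcal T$ is finite, there are finitely many such null sets, so their union is $\nu$-null.
\item The exchange $\mathbb{E}[X_i^t(k)\mathbf 1\{D_i=d\}]=\mathbb{E}[\Psi(Y_i(t))\mathbf 1\{D_i=d\}](k)$ needs joint measurability of $(\omega,k)\mapsto X_i^t(k)$ and integrability. I would handle this by choosing a jointly measurable version of the $L^2$-valued random element and applying Fubini with $\mathbb{E}\|\Psi(Y_i(t))\|^2<\infty$ and $\sigma$-finiteness. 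The conditional expectation $\mathbb{E}[X_i^t(k)\mid D_i=d]$ is finite for a.e.\ $k$ by the same argument.
\item The bounds $\ell_{t,k}$ and $u_{t,k}$ must be measurable in $k$ so that $\mathcal B_{\mathtt{MTR+MTS}}(t)$ is well defined. This holds because they are finite combinations of such coordinate expectations.
\end{itemize}
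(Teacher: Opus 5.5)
Your proof is correct and follows the same route as the paper: you reduce to the scalar \citeauthor{Manski_Pepper:2000} MTR--MTS bound for $\nu$-a.e.\ coordinate $k$, reverse the inequalities on $\mathcal K_-$, and then lift the bound using closedness and convexity of $\Psi(\mathcal Y)$ together with the Fr\'echet-mean representation. The only differences are that you derive the scalar bound explicitly rather than citing it, and you supply the Fubini and null-set bookkeeping that the paper leaves implicit.
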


Proposition~\ref{prop:gen-simple}(i)--(iv) establishes a general framework for partial identification with random-object outcomes under monotonicity restrictions. 

Although the identified set in Proposition \ref{prop:gen-simple} is not sharp in general, it is sharp in several important cases. In the subsequent sections, we separately consider important classes of random objects, including distributional and interval-valued outcomes. Another special case arises when $\Psi(\mathcal Y)=\mathbb R^p$, as summarized in the following proposition.

\begin{proposition}\label{prop:simple-sharp}
In addition to the assumptions corresponding to each part of Proposition~\ref{prop:gen-simple}, %including the additional condition $\mathcal K_0=\varnothing$ for $\mathtt M=\mathtt{MTR+MTS}$, 
suppose that the embedding space is $\mathbb R^p$ for a finite positive integer $p$ and that $\Psi(\mathcal Y)=\mathbb R^p$. Then, for each $\mathtt M\in\{\mathtt{MTR},\mathtt{MTS},\mathtt{MIV}, \mathtt{MTR+MTS}\}$, $\mathcal B_{\mathtt M}(t)$ is the sharp identified set of $\E{\Psi(Y_i(t))}$. Moreover, $\Psi^{-1}(\mathcal B_{\mathtt M}(t))$ is the sharp identified set of $\mathbb E_\oplus[Y_i(t)]$.
\end{proposition}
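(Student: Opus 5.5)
Validity of $\mathcal B_{\mathtt M}(t)$ is already given by Proposition~\ref{prop:gen-simple}, since $\Psi(\mathcal Y)\cap\mathcal B_{\mathtt M}(t)=\mathcal B_{\mathtt M}(t)$ when $\Psi(\mathcal Y)=\mathbb R^p$. So only sharpness needs proof: for every $x\in\mathcal B_{\mathtt M}(t)$ we must exhibit a joint distribution of $(\{X_i^s\}_{s\in\mathcal T},D_i,Z_i)$ that satisfies three requirements. It must reproduce the observed distribution of $(X_i,D_i,Z_i)$, it must satisfy the relevant assumption and the support restriction $\underline{\mathcal X}_k\le X_i^s(k)\le\overline{\mathcal X}_k$, and it must give $\E{X_i^t}=x$.

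The key simplification is that, because $\Psi(\mathcal Y)=\mathbb R^p$ and $\Psi$ is a bijection onto $\mathbb R^p$, any $\mathbb R^p$-valued potential outcomes can be pulled back to $\mathcal Y$. There are no cross-coordinate geometric constraints. Both the bounds and Assumptions~\ref{assumption:MTR-obj}--\ref{assumption:MIV-obj} are imposed coordinate by coordinate, so the problem decouples across $k=1,\dots,p$. I will therefore construct, separately for each $k$, a scalar data-generating process $(X_i^s(k))_{s}$ whose counterfactual mean equals $x(k)$. The coordinates are then glued together. Each coordinate's construction is a measurable function of the observables $(X_i,D_i,Z_i)$ and auxiliary randomness. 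One can take the auxiliary randomness independent across $k$, or simply use a common uniform variable, since no constraint links different $k$. This gluing preserves all marginal restrictions and the observed joint law of $X_i=(X_i(1),\dots,X_i(p))$.

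For a single coordinate $k$, sharpness of the scalar bound of \cite{Manski:1997} and \cite{Manski_Pepper:2000} gives, for the endpoints $\ell_{t,k}$ and $u_{t,k}$, scalar DGPs attaining them. For coordinates in $\mathcal K_-$ I apply the scalar result to $-X_i(k)$ with reversed bounds. For MTR, the lower extreme sets $X_i^s(k)=X_i(k)$ for $s\ge D_i$ and $\underline{\mathcal X}_k$ for $s<D_i$; the other cases are analogous. For interior values $x(k)=\lambda\ell_{t,k}+(1-\lambda)u_{t,k}$, I mix: draw an independent Bernoulli$(\lambda)$ variable and use the lower-extreme DGP with probability $\lambda$ and the upper one otherwise. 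Each of the restrictions is preserved under such a mixture, because each is a linear inequality in conditional expectations or holds almost surely in both components:
\begin{itemize}
\item MTR holds pathwise;
\item MTS and MIV are monotonicity statements about conditional means given $D_i$ or $Z_i$, which are linear in the mixture;
\item the observed data are identical in both components.
\end{itemize}
For $\mathcal K_0$ the plain worst-case bounds apply. For MTR+MTS I use the extremal DGPs from Manski and Pepper's Corollary, e.g.\ $X_i^s(k)=\E{X_i(k)\mid D_i=s}$-type constructions; these need no support limits.

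The main obstacle is verifying that the scalar extremal constructions are available in the needed form. This is most delicate for MIV, where the $\sup$/$\inf$ over $z'$ bounds are attained only through a carefully built DGP. Manski and Pepper argue sharpness there somewhat loosely, so I would write it out explicitly. Using the convexity of the scalar identified set, it is enough to attain the two endpoints. For the MIV lower bound, I would set $\E{X_i^t(k)\mid Z_i=z}$ equal to the running supremum $\sup_{z'\le z}$ of the worst-case lower bounds, which is monotone in $z$. This conditional mean is realized by assigning the unobserved $X_i^t(k)$ for $D_i\ne t$ a value between $\underline{\mathcal X}_k$ and $\overline{\mathcal X}_k$; this is feasible because the running supremum lies between the $z$-specific lower and upper worst-case bounds. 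The potential outcomes for $s\neq t$ can be chosen similarly, e.g.\ by taking their conditional means constant in $z$ via any values consistent with the data.

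Finally, the Fréchet-mean statement follows because $\Psi$ is a bijection onto $\mathbb R^p$ and $\mathbb E_\oplus[Y_i(t)]=\Psi^{-1}(\E{\Psi(Y_i(t))})$.
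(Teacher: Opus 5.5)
Your architecture is the paper's. Validity comes from Proposition~\ref{prop:gen-simple}; sharpness comes from coordinatewise scalar sharpness, gluing that preserves the joint law of $X_i$, and a pull-back through $\Psi^{-1}$. The only structural difference is the gluing. The paper takes an arbitrary scalar DGP $\mathbb P_k$ for each coordinate and couples them conditionally independently given $(X_i,D_i,Z_i)$, so it never needs their form. You write every coordinate as a measurable function of the observables and independent randomization. That makes preservation of the observed joint law automatic, but it obliges you to exhibit the scalar extremal constructions explicitly, and two of them are wrong as stated.

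For MTR+MTS, setting $X_i^s(k)=\E{X_i(k)\mid D_i=s}$ for $s\neq D_i$ breaks pathwise MTR. A unit with $D_i=d>s$, $k\in\mathcal K_+$ and $X_i(k)<\E{X_i(k)\mid D_i=s}$ would get $X_i^s(k)>X_i^d(k)$. What you need is an individual-level monotone path whose conditional means hit prescribed targets; this, rather than MIV, is where the pathwise-versus-mean tension lies. In the scalar case it can be fixed as follows. On $\{D_i=d\}$, keep $X_i^s(k)=X_i(k)$ for $s\ge d$. For $s<d$, set $X_i^s(k)=\underline{\mathcal X}_k+\lambda_{d,s}\{X_i(k)-\underline{\mathcal X}_k\}$, where $\lambda_{d,s}=\{\E{X_i(k)\mid D_i=s}-\underline{\mathcal X}_k\}/\{\E{X_i(k)\mid D_i=d}-\underline{\mathcal X}_k\}$, with $\lambda_{d,s}=1$ if the denominator vanishes. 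The maintained model forces $d\mapsto\E{X_i(k)\mid D_i=d}$ to be nondecreasing, so $\lambda_{d,s}\in[0,1]$ and it is nondecreasing in $s$. Hence the path is monotone and stays in $[\underline{\mathcal X}_k,\overline{\mathcal X}_k]$. It also gives $\E{X_i^s(k)\mid D_i=d}=\E{X_i(k)\mid D_i=s\wedge d}$, which yields MTS and the lower bound. This is the paper's construction in Proposition~\ref{prop:int}(iv).

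For MIV, your suggestion to make $\E{X_i^s(k)\mid Z_i=z}$ constant in $z$ for $s\neq t$ can be infeasible. The data only allow this mean to lie in $[\alpha^s_z,\gamma^s_z]$. Here $\alpha^s_z=\E{X_i(k)\mathbf 1\{D_i=s\}\mid Z_i=z}+\underline{\mathcal X}_k\P{D_i\neq s\mid Z_i=z}$, and $\gamma^s_z$ is defined analogously with $\overline{\mathcal X}_k$. These intervals can be disjoint across $z$, for example when $\P{D_i=s\mid Z_i=z}$ is near one and the observed means differ across $z$. Instead, use the running-supremum construction for every $s$. Alternatively, keep the potential outcomes of some DGP satisfying the model for $s\neq t$, as the paper does in Propositions~\ref{prop:dist}(iii) and \ref{prop:int}(iii).

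Relatedly, $\sup_{z'\le z}\alpha^t_{z'}\le\gamma^t_z$ does not follow from the data alone; it comes from non-refutation. Let $a^0_z$ be the monotone conditional means of a DGP satisfying the model. Then $\alpha^t_{z'}\le a^0_{z'}\le a^0_z\le\gamma^t_z$ for $z'\le z$. With these repairs your argument goes through.
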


This proposition implies, for example, that the identified set derived under the general framework is sharp for compositional outcomes with strictly positive components (Example \ref{ex:comp}).

\subsection{Numerical illustration}
To illustrate the relationship between the bounds in the embedding space and the identified set for the \f mean, we provide a brief numerical illustration of the combined MTR--MTS bounds for three-part compositional outcomes.

Let $D_i \in \mathcal T = \{0,1,2\}$ with $\P{D_i = 0} = 1/4$, $\P{D_i = 1} = 1/2$, and $\P{D_i = 2} = 1/4$.
We specify
\begin{align*}
    \Psi(Y_i(t)) = 
    \begin{pmatrix}
        -0.4+0.4D_i+0.2t+\varepsilon_{i1}\\
        \phantom{-}0.7-0.4D_i-0.2t+\varepsilon_{i2}
    \end{pmatrix},
\end{align*}
where $\E{\varepsilon_{i1}\mid D_i}=\E{\varepsilon_{i2}\mid D_i}=0$.
Let $\mathcal K_+ = \{1\}$, $\mathcal K_- = \{2\}$, and $\mathcal K_0 = \varnothing$.
Then, this DGP satisfies the proposed MTR restrictions. In particular, for $t_2\geq t_1$, $\Psi(Y_i(t_2))(1)-\Psi(Y_i(t_1))(1) \ge 0$ and $\Psi(Y_i(t_2))(2)-\Psi(Y_i(t_1))(2) \le 0$.
The DGP also satisfies MTS because $\E{\Psi(Y_i(t))(1)\mid D_i=d}$ is increasing in $d$, while $\E{\Psi(Y_i(t))(2)\mid D_i=d}$ is decreasing in $d$.

%Note $\E{\Psi(Y_i) \mid D_i =d} = (-0.4 + 0.6 d,\ 0.7 - 0.6d)^\top$.
The endpoints of the identified sets characterized in Proposition~\ref{prop:gen-simple-cont}(iv) can be computed as
\begin{align*}
    (\ell^{\mathtt{MTR+MTS}}_{0,1}, u^{\mathtt{MTR+MTS}}_{0,1}, \ell^{\mathtt{MTR+MTS}}_{0,2}, u^{\mathtt{MTR+MTS}}_{0,2}) &= (-0.40, 0.20, 0.10, 0.70 ),\\
    (\ell^{\mathtt{MTR+MTS}}_{1,1}, u^{\mathtt{MTR+MTS}}_{1,1}, \ell^{\mathtt{MTR+MTS}}_{1,2}, u^{\mathtt{MTR+MTS}}_{1,2}) &= (0.05, 0.35, -0.05, 0.25),\\
    (\ell^{\mathtt{MTR+MTS}}_{2,1}, u^{\mathtt{MTR+MTS}}_{2,1}, \ell^{\mathtt{MTR+MTS}}_{2,2}, u^{\mathtt{MTR+MTS}}_{2,2}) &= (0.20, 0.80, -0.50, 0.10).
\end{align*}
The corresponding identified regions in the embedding space are shown in
Figure~\ref{fig:ilr}.
These regions characterize the subsets of $\mathbb R^2$ that are consistent with the observed data and the imposed assumptions.

\begin{figure}[t]
    \centering
    \begin{subfigure}[b]{0.49\textwidth}
        \centering
        \includegraphics[width=\linewidth]{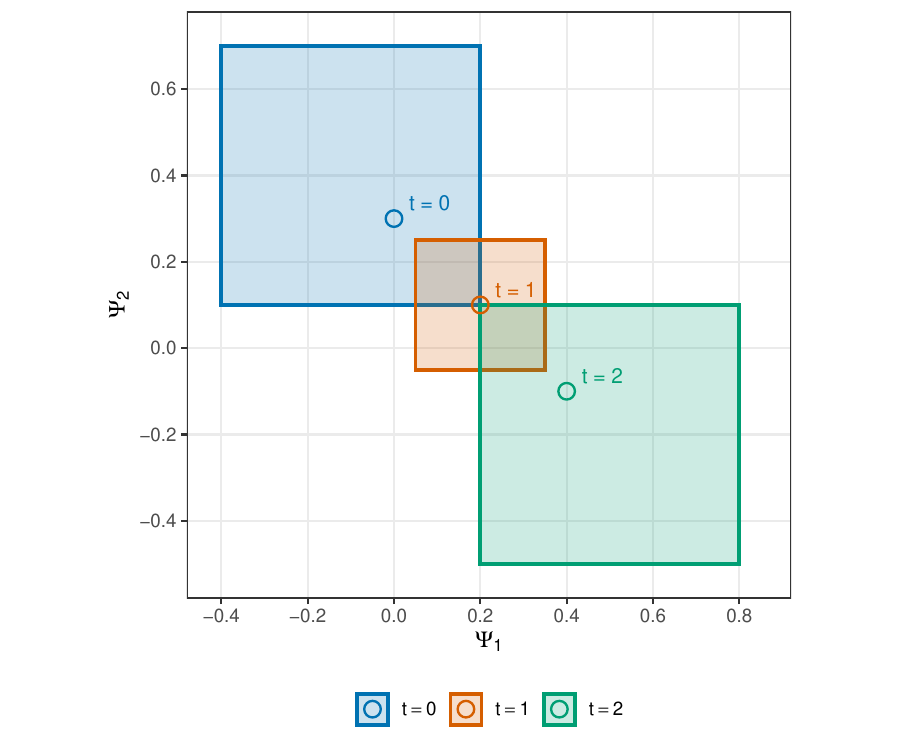}
        \caption{Identified regions in the embedding space}
        \label{fig:ilr}
    \end{subfigure}
    \hfill
    \begin{subfigure}[b]{0.49\textwidth}
        \centering
        \includegraphics[width=\linewidth]{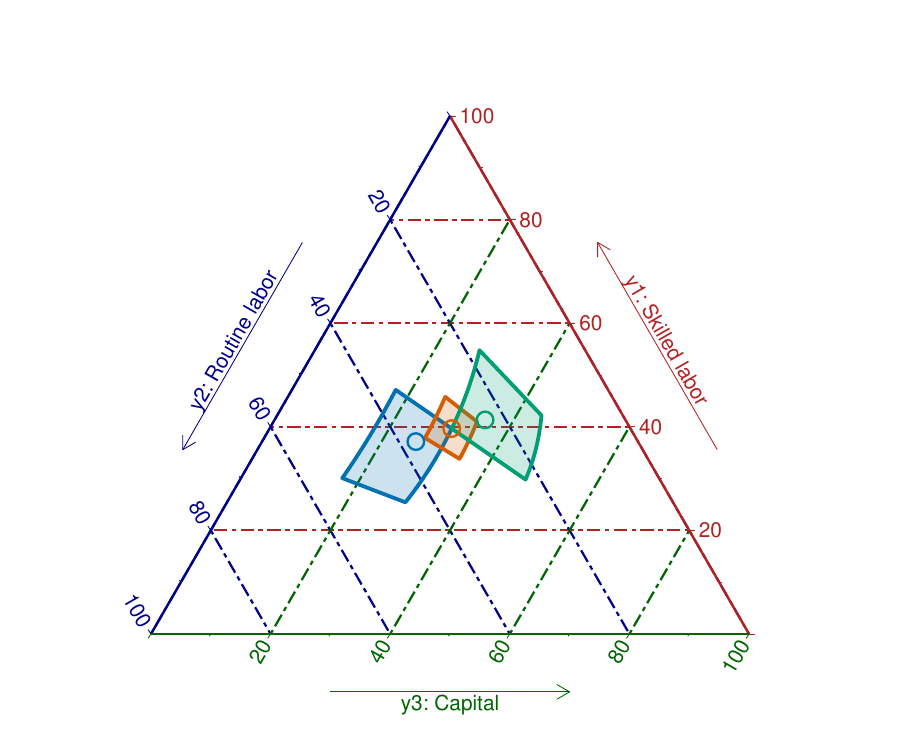}
        \caption{Identified sets for the \f mean in the simplex}
        \label{fig:tern}
    \end{subfigure}
    \caption{Numerical illustration for compositional outcomes}
    \label{fig:comp}

    \begin{flushleft}
    \footnotesize
    \renewcommand{\baselineskip}{11pt}
    \textbf{Note:} The identified sets of $\E{\Psi(Y_i(t))}$ and $\mathbb{E}_\oplus[Y_i(t)]$ are shown for $t\in\{0,1,2\}$. 
    From left to right, the boxed regions correspond to the identified sets for $t=0,1,2$, respectively.
    The hollow circles indicate the corresponding true mean values.
    The component labels are drawn from Example~\ref{ex:comp} for illustrative purposes.
    \end{flushleft}
\end{figure}

Taking the inverse images of these regions yields the identified sets for the \f mean. Figure~\ref{fig:tern} presents the resulting sets in a ternary plot.
By Proposition~\ref{prop:simple-sharp}, these are the sharp identified sets for $\mathbb{E}_\oplus[Y_i(t)]$ for $t\in\{0,1,2\}$.

\section{Distributional outcomes} \label{sec:dist}

In this section, we apply the general framework developed in Section~\ref{sec:gen} to distributional outcomes. Distributional data arise naturally in many econometric applications, including analyses of wage distributions \citep{Blundell_etal:2007}, job-tenure distributions \citep{vanDijcke_etal:2024}, and distributions of sleep duration \citep{kuri:26}. Under the 2-Wasserstein metric, the embedding introduced in Section~\ref{sec:prelim} is given by the quantile function, which allows the general identification results to be strengthened. In particular, we show that the coordinatewise bounds in Proposition~\ref{prop:gen-simple} yield uniformly sharp lower and upper bounds for distributional outcomes under MTR, MTS, and joint MTR--MTS, while the MIV case requires a different construction.

Let $\mathcal Y$ denote the space of probability measures on the real line with finite second moments, equipped with the 2-Wasserstein metric
\[
d_{\mathcal Y}(\mu_1,\mu_2) = \left[ \int_0^1 \{Q_{\mu_1}(q)-Q_{\mu_2}(q)\}^2\,dq \right]^{1/2},
\]
where $Q_\mu$ denotes the quantile function of $\mu$, i.e., $Q_\mu(q) = \inf \{x\in\mathbb R: F_\mu(x) \ge q\}$, where $F_\mu$ is the distribution function. Let $\Psi(\mu)=Q_\mu$. Then
\[
d_{\mathcal Y}(\mu_1,\mu_2) = \|\Psi(\mu_1)-\Psi(\mu_2)\|_{L^2},
\]
so that $(\mathcal Y,d_{\mathcal Y})$ is isometrically embedded into $L^2((0,1))$. Moreover, $\Psi(\mathcal Y)$ is a closed convex subset of $L^2((0,1))$ \citep[Proposition 2.1]{Bigot_etal:2017}. Consequently, the general framework developed in Section~\ref{sec:gen} applies directly with the quantile function as the embedding map.

For distributional outcomes, the coordinatewise monotonicity restrictions introduced in Section~\ref{sec:gen} reduce to monotonicity restrictions on the quantile functions. This observation motivates imposing first-order stochastic dominance (FOSD) as the distributional analogue of MTR, as in \cite{Blundell_etal:2007}. This yields the following specialization of the MTR assumption. For notational convenience, write $Q_i(q)=Q_{Y_i}(q)$ for the quantile function of the observed outcome $Y_i=Y_i(D_i)$.

\begin{assumption}[MTR for distributional outcomes]\label{assumption:D-MTR}
    Let $\mathcal T$ be an ordered set. For all $t_1,t_2\in\mathcal T$,
    \begin{align*}
        t_2\ge t_1
        \Longrightarrow
        Q_{Y_i(t_2)}(q)\ge Q_{Y_i(t_1)}(q)
        \quad\text{for all }q\in(0,1),
    \end{align*}
    almost surely.
\end{assumption}

Assumption~\ref{assumption:D-MTR} requires $Y_i(t_2)$ to first-order stochastically dominate $Y_i(t_1)$ whenever $t_2\ge t_1$. Since FOSD is one of the most widely used partial orderings for distributions in economics (see, e.g., \citealp{Whang:2019}), it provides a natural distributional analogue of the scalar MTR assumption.

The MTS and MIV assumptions can be extended analogously by imposing monotonicity on the expected quantile functions. This corresponds to an expected version of first-order stochastic dominance and yields the following assumptions.

\begin{assumption}[MTS for distributional outcomes]\label{assumption:D-MTS}
    Let $\mathcal{T}$ be an ordered set. 
    For each $t\in\mathcal T$ and all $t_1,t_2\in\mathcal T$,
    \begin{align*}
        t_2 \ge t_1 \Longrightarrow \mathbb{E}[Q_{Y_i(t)}(q) | D_i=t_2] \ge \mathbb{E}[Q_{Y_i(t)}(q) | D_i=t_1]\quad \mbox{ for all } q\in(0,1).
    \end{align*}
\end{assumption}

Assumption~\ref{assumption:D-MTS} requires the conditional mean quantile function to vary monotonically with the realized treatment. This is the distributional analogue of the scalar MTS assumption. We similarly extend the MIV assumption by requiring monotonicity of the conditional mean quantile function with respect to the instrumental variable.

\begin{assumption}[MIV for distributional outcomes]\label{assumption:D-MIV}
    Let $\mathcal{Z}$ be an ordered set. 
    For each $t\in\mathcal{T}$ and all $(z_1, z_2) \in \mathcal{Z}\times\mathcal{Z}$ such that $z_2\ge z_1$, 
    \begin{align*}
        \mathbb{E}[Q_{Y_i(t)}(q) \mid Z_i = z_2] \ge \mathbb{E}[Q_{Y_i(t)}(q) \mid Z_i = z_1]\quad \mbox{for all }  q\in(0,1).
    \end{align*}
\end{assumption}

The target objects in this section are represented as quantile functions. We therefore introduce a practically useful notion of sharpness that requires the extremal functions themselves to belong to the identified set. The following definition formalizes this notion. See also \citet[Definition 2]{Firpo_Ridder:2019} for the related concept of uniform sharpness for cumulative distribution functions.

\begin{definition}[Uniformly sharp bounds]
    Let $\mathcal S$ be the sharp identified set for a functional parameter over an index set $\mathcal K$. Then $g(\cdot)$ is called a uniformly sharp lower bound if (i) $f(k)\ge g(k)$ for every $f\in\mathcal S$ and every $k\in\mathcal K$, and (ii) $g\in\mathcal S$. 
    Analogously, $g(\cdot)$ is called a uniformly sharp upper bound if (i) $f(k)\le g(k)$ for every $f\in\mathcal S$ and every $k\in\mathcal K$, and (ii) $g\in\mathcal S$.
\end{definition}

The following proposition establishes uniformly sharp bounds under the distributional MTR, MTS, and MIV assumptions. Throughout, let
\begin{align*}
    \mathcal Q_{[K_0, K_1]} = \left\{
    Q_\mu : \mu \in \mathcal Y \text{ and } Q_\mu(q) \in [K_0, K_1]\text{ for all }q\in(0,1)
    \right\}.
\end{align*}  

\begin{proposition}\label{prop:dist}
For each $s\in\mathcal T$, suppose that $Y_i(s)$ is a measurable random element of 2-Wasserstein space $\mathcal Y$, and $Q_{Y_i(s)} \in \mathcal Q_{[K_0,K_1]}$ almost surely.
Suppose also that $\mathcal T$ is a discrete, finite set and $\P{D_i = d}>0$ for every $d \in \mathcal T$.
    \begin{description}
        \item[(i)] Suppose that Assumption \ref{assumption:D-MTR} holds true and that $K_0,K_1$ are known. Then, the uniformly sharp lower and upper bounds on $\E{Q_{Y_i(t)}(\cdot)}$ are given by 
        \begin{align*}
            \underline Q_{\mathtt{MTR}}(q;t)
            &=
            \E{Q_{Y_i}(q)\mathbf{1}\{D_i\le t\}}
            +
            K_0\P{D_i>t},
            \\
            \overline Q_{\mathtt{MTR}}(q;t)
            &=
            \E{Q_{Y_i}(q)\mathbf{1}\{D_i\ge t\}}
            +
            K_1\P{D_i<t},
        \end{align*}
        respectively.
        
        \item[(ii)] Suppose that Assumption \ref{assumption:D-MTS} holds true and that $K_0,K_1$ are known. Then, the uniformly sharp lower and upper bounds on $\E{Q_{Y_i(t)}(\cdot)}$ are given by 
        \begin{align*}
            \underline Q_{\mathtt{MTS}}(q;t)
            &=
            \E{Q_{Y_i}(q) \mid D_i = t}\P{D_i \geq t} + K_0 \P{D_i < t},
            \\
            \overline Q_{\mathtt{MTS}}(q;t)
            &=
            \E{Q_{Y_i}(q) \mid D_i = t}\P{D_i \leq t} + K_1 \P{D_i > t},
        \end{align*}
        respectively.

        \item[(iii)] Suppose that Assumption \ref{assumption:D-MIV} holds true and that $K_0,K_1$ are known.
        Suppose also that $\mathcal Z$ is a discrete, finite set, and $\P{D_i=d,Z_i=z}>0$ for every $(d,z) \in \mathcal T \times \mathcal Z$.
        For $v=(v_z)_{z\in\mathcal Z}$ with $v_z \in \mathcal Q_{[K_0,K_1]}$, set
        \begin{align*}
            g_z(q;t,v) = 
            \E{Q_i(q)\mathbf{1}\{D_i=t\} \mid Z_i=z} + 
            \P{D_i\neq t\mid Z_i=z} v_z(q).
        \end{align*}
        Define
        \begin{align*}
            \mathcal V_t 
            =
            \left\{
                v=(v_z)_{z\in\mathcal Z}:
                \begin{array}{l}
                    v_z\in\mathcal Q_{[K_0,K_1]}
                    \quad\text{for all }z\in\mathcal Z,\\
                    g_{z_1}(q;t,v)
                    \le
                    g_{z_2}(q;t,v)
                    \quad
                    \text{for all }z_1\le z_2,\ q\in(0,1)
                \end{array}
            \right\}.
        \end{align*}
        Moreover, define the pointwise lower and upper envelopes by
        \begin{align*}
            \ell_z^t(q) = 
            \inf_{v\in\mathcal V_t}v_z(q),
            \quad
            u_z^t(q) =
            \sup_{v\in\mathcal V_t}v_z(q),
        \end{align*}
        for every $z\in\mathcal Z$ and $q\in(0,1)$, and let $\ell^t=(\ell_z^t)_{z\in\mathcal Z}$ and $u^t=(u_z^t)_{z\in\mathcal Z}$.
        Then, the uniformly sharp lower and upper bounds on $\E{Q_{Y_i(t)}(\cdot)}$ are given by
        \begin{align*}
            \underline Q_{\mathtt{MIV}}(q;t)
            =
            \sum_{z\in\mathcal Z} g_z(q;t,\ell^t)\P{Z_i=z},
        \end{align*}
        and
        \begin{align*}
            \overline Q_{\mathtt{MIV}}(q;t)
            =
            \sum_{z\in\mathcal Z} g_z(q;t,u^t)\P{Z_i=z},
        \end{align*}    
        respectively.
    \item[(iv)] Suppose that Assumptions \ref{assumption:D-MTR}--\ref{assumption:D-MTS} hold true. Then, the uniformly sharp lower and upper bounds on $\E{Q_{Y_i(t)}(\cdot)}$ are given by 
        \begin{align*}
            \underline Q_{\mathtt{MTR+MTS}}(q;t)
            &=
            \sum_{d<t}
            \E{Q_i(q)\mid D_i=d}\P{D_i=d} + \E{Q_i(q)\mid D_i=t}\P{D_i\ge t},
            \\
            \overline Q_{\mathtt{MTR+MTS}}(q;t)
            &=
            \sum_{d>t}
            \E{Q_i(q)\mid D_i=d}\P{D_i=d} + \E{Q_i(q)\mid D_i=t}\P{D_i\le t},
        \end{align*}
        respectively. 
    \end{description}
\end{proposition}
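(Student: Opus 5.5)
For each part the argument has two halves. First, validity: every $f$ in the sharp identified set satisfies the stated pointwise inequalities. This follows from Proposition~\ref{prop:gen-simple} with $\mathcal K=(0,1)$, $\mathcal K_+=\mathcal K$, $\underline{\mathcal X}_q=K_0$, $\overline{\mathcal X}_q=K_1$ for (i), (ii), (iv). For (iii) it follows from the decomposition $\E{Q_{Y_i(t)}(q)\mid Z_i=z}=g_z(q;t,v^\ast)$ with $v^\ast_z=\E{Q_{Y_i(t)}\mid D_i\ne t,Z_i=z}$. This $v^\ast$ lies in $\mathcal V_t$, because an average of quantile functions valued in $[K_0,K_1]$ is again such a quantile function ($\Psi(\mathcal Y)$ is convex and closed). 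MIV gives the monotonicity of the $g_z$. Since $\P{D_i\ne t\mid Z_i=z}\ge 0$, we get $g_z(q;t,\ell^t)\le g_z(q;t,v^\ast)\le g_z(q;t,u^t)$.

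Second, attainment: the bound function itself belongs to the sharp set. For this I would construct, for each bound, a joint law of $(D_i,\{Y_i(s)\}_{s\in\mathcal T})$, with $Z_i$ in (iii). The construction must reproduce the observed law of $(D_i,Y_i)$ (and $Z_i$), satisfy the assumption, and have $\E{Q_{Y_i(t)}}$ equal to the bound. Let $\delta_c$ denote the point mass at $c$, so $Q_{\delta_c}\equiv c$; the key tool is that constants $K_0,K_1$ are valid quantile functions.

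For (i), lower bound: on $\{D_i=d\}$ set $Y_i(s)=Y_i$ for $s\ge d$ and $Y_i(s)=\delta_{K_0}$ for $s<d$. Then $s\mapsto Q_{Y_i(s)}$ is pointwise nondecreasing, because $K_0\le Q_{Y_i}\le K_1$. This gives $\E{Q_{Y_i(t)}}=\underline Q_{\mathtt{MTR}}(\cdot;t)$. The upper bound is symmetric with $\delta_{K_1}$.

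For (ii), lower bound: on $\{D_i=d\}$ set $Y_i(d)=Y_i$ and $Y_i(s)=\delta_{K_0}$ for $d<s$. For $d>s$, I would take $Y_i(s)$ to be the observed $Y_i$ redistributed so that its conditional mean equals $\E{Q_{Y_i}\mid D_i=s}$. For example, draw an independent copy from the conditional law of $Y$ given $D=s$; this needs no cross-$s$ coupling because MTS restricts only conditional means. Then $\E{Q_{Y_i(s)}\mid D_i=d}$ equals $K_0$ for $d<s$ and $\E{Q_i\mid D_i=s}$ for $d\ge s$, which is monotone in $d$. Its average at $s=t$ is $\underline Q_{\mathtt{MTS}}$. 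Part (iv) combines the two ideas. On $\{D_i=d\}$, set $Y_i(s)=Y_i$ for $s\ge d$. For $s<d$, draw $Y_i(s)$ from the conditional law given $D=s$, coupled comonotonically with $Y_i$ if possible. If pointwise FOSD ordering of realizations fails, I would instead take $Y_i(s)$ to be the deterministic measure whose quantile function is $\E{Q_i\mid D_i=s}$; this lies in $\mathcal Y$ by convexity. MTS then requires $\E{Q_i\mid D_i=d}$ to be nondecreasing in $d$, which is a testable implication that I would assume as part of the non-emptiness of the model. I expect this to need care.

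The main obstacle is (iii) and, to a lesser degree, the MTR constraint in (iv). In (iii), $u^t$ is a pointwise supremum over $\mathcal V_t$, so one must show that $u^t$ itself lies in $\mathcal V_t$. Two facts are needed. First, $u^t_z$ is nondecreasing and left-continuous in $q$ (a supremum of quantile functions is nondecreasing; I would take left-continuous versions) and takes values in $[K_0,K_1]$. Second, the chain $g_{z_1}(q;t,u^t)\le g_{z_2}(q;t,u^t)$ holds. For the second fact I would argue lattice-wise: if $v,w\in\mathcal V_t$ then $v\vee w\in\mathcal V_t$, because the map $v_z\mapsto g_z$ is increasing in $v_z$ and the monotonicity constraints are of the form max of increasing functions. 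The supremum is then attained as a monotone limit along countably many $q$, with closedness handled by dominated convergence. Given $u^t\in\mathcal V_t$, attainment follows by setting $Y_i(t)=\delta$-measures with quantile $u^t_z$ on $\{D_i\ne t,Z_i=z\}$ and other potential outcomes unrestricted.
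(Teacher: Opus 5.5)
\textbf{Gap in part (iv).} Parts (i) and (ii) essentially match the paper's constructions. In (ii) the paper uses the deterministic mean $\E{Q_i\mid D_i=s}$ instead of an independent draw, which is immaterial because MTS restricts only conditional means. Your outline for (iii) is also the paper's. The genuine gap is the attainment half of (iv).

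To reach $\underline Q_{\mathtt{MTR+MTS}}(\cdot;t)$ you need, on each cell $\{D_i=d\}$ with $d>t$, random quantile functions $Q_{Y_i(s)}$, $s\le d$, with three properties: they are ordered in $s$, they lie below $Q_i$ \emph{almost surely} (individual-level MTR), and they have the prescribed conditional means $\E{Q_i\mid D_i=s}$. Neither of your devices delivers this.
\begin{itemize}
\item The deterministic fallback needs $\E{Q_i(q)\mid D_i=s}\le Q_i(q)$ for almost every individual in the cell. This fails whenever $Q_i$ is dispersed there: take $Y_i=\delta_{c_i}$ with $c_i\sim U[0,2]$ given $D_i=d$ and $\E{Q_i\mid D_i=s}\equiv 1/2$.
\item The comonotone coupling with the law of $Y_i$ given $D_i=s$ needs a Strassen-type dominance between the two conditional laws, but MTR--MTS only orders their means. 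With $c_i\sim U[0,1]$ given $D_i=s$ and $c_i=0.6$ given $D_i=d$, no coupling has $Q_{Y_i(s)}\le Q_{Y_i(d)}$ a.s.
\item The scalar device from the interval proof, $K_0+\lambda(Q_i-K_0)$, does not rescue you either. The required factor $\lambda(q)=\{h(q)-K_0\}/\{\E{Q_i(q)\mid D_i=d}-K_0\}$ varies with $q$ and can make the result decreasing in $q$.
\end{itemize}

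\textbf{The missing ingredient} is Lemma~\ref{lem:mtr}. Given a random quantile function $X$ valued in $[K_0,K_1]$ and deterministic quantile functions $h_1\le\cdots\le h_r=\E{X}$, there exist random quantile functions $H_1\le\cdots\le H_r=X$ a.s.\ with $\E{H_j}=h_j$. The paper applies it conditionally on $D_i=d$ with $h_j=\E{Q_i\mid D_i=s_j}$, $s_j\le d$. It proves the lemma in three moves:
\begin{itemize}
\item it discretizes onto the nondecreasing grid vectors $\mathcal C_n$;
\item it uses the decomposition property ($z\le x+y$ implies $z=a+b$ with $a\le x$, $b\le y$, all in $\mathcal C_n$) to split the discretized mean chain across the finitely many atoms of $X_n$;
\item it passes to a weak limit in $L^2(\Omega\times(0,1))$ inside a closed convex set that preserves the order and monotonicity constraints.
\end{itemize}
This is the substantive step of (iv), not a detail.

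\textbf{Smaller points in (iii).}
\begin{itemize}
\item Left-continuity of $u^t_z$ is automatic, since a supremum of left-continuous nondecreasing functions is lower semicontinuous, hence left-continuous. For $\ell^t_z$, however, merely ``taking a left-continuous version'' could attain a different function from the stated bound. You must check, as the paper does, that the regularization $\ell^{t,-}$ still lies in $\mathcal V_t$; minimality then forces $\ell^t=\ell^{t,-}$.
\item $Y_i(s)$ for $s\ne t$ is not unrestricted: it must satisfy MIV. You can copy it from a feasible process, for example the true one.
\item Your lattice argument for $u^t\in\mathcal V_t$ is unnecessary. Since $g_z(q;t,v)$ is nondecreasing in $v_z(q)$ alone, taking pointwise suprema on both sides of $g_{z_1}\le g_{z_2}$ gives the constraint at $u^t$ directly.
\end{itemize}
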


Proposition~\ref{prop:dist} shows that, for distributional outcomes, the coordinatewise bounds are uniformly sharp under the MTR, MTS, and joint MTR--MTS assumptions. Consequently, under each of these assumptions, the band between the lower and upper envelopes is the tightest band containing all mean quantile functions consistent with the observed data and the maintained assumptions. Furthermore, the uniform sharpness is stronger than pointwise sharpness: each bounding function is itself attainable under a counterfactual model. This result provides a formal justification for reporting the coordinatewise bounds in empirical applications. Moreover, at each quantile index, these bounds coincide with the corresponding \citeauthor{Manski_Pepper:2000} identified set, making them straightforward to compute. 

This equivalence, however, does not extend to the MIV case. To see the source of non-sharpness, recall that the identified sets in Propositions~\ref{prop:gen-simple} and \ref{prop:dist} are both motivated by the decomposition
\begin{align*}
    \E{Q_{Y_i(t)}(q) \mid Z_i=z} &= 
    \E{Q_i(q)\mathbf{1}\{D_i=t\} \mid Z_i=z} + 
    \P{D_i\neq t\mid Z_i=z} v_z(q)\\
    &= 
    \E{Q_i(q)\mid Z_i=z, D_i = t}\P{D_i=t \mid Z_i=z} + 
    \P{D_i\neq t\mid Z_i=z} v_z(q).
\end{align*}
where $v_z(q)=\E{Q_{Y_i(t)}(q) \mid D_i \neq t, Z_i = z}$.
The pointwise construction of the lower bound first replaces the unknown function $v_z(q)$ by the lower support point $K_0$, and then forms a monotone lower envelope by taking the supremum over instrument values below the target instrument value in order to enforce the MIV restriction for $\E{Q_{Y_i(t)}(q) \mid Z_i=z}$.
The key observation here is that the observed component in the first term need not be monotone in $z$. The missing component $v_z$ must therefore offset any nonmonotonicity in the observed component so that the sum satisfies MIV.
Moreover, the required adjustment may vary with the quantile index $q$. For example, the observed component may be decreasing in $z$ at lower quantile indices but increasing in $z$ at higher quantile indices. In such a case, $v_z(q)$ must be sufficiently large at the lower quantile indices to restore the MIV restriction.
Because $v_z(\cdot)$ must itself be a quantile function, it must be nondecreasing in $q$. Consequently, a large value of $v_z(q)$ required at a lower quantile index also restricts $v_z(q^\prime)$ from below for every $q^\prime>q$. The pointwise construction ignores this cross-quantile restriction and may therefore produce a lower-bound function that cannot be attained simultaneously over all $q$. In this sense, the pointwise lower bound may be conservative relative to the uniformly sharp lower bound. 
The construction of the uniformly sharp bound in Proposition \ref{prop:dist} explicitly takes this restriction into account, where $\mathcal V_t$ imposes both the MIV restrictions on $g_z(q;t,v)$ and the requirement that each $v_z$ be a valid quantile function.

This issue does not arise under MTS, which has a simpler structure.
In particular,
\begin{align*}
    \E{Q_{Y_i(t)}(q) \mid D_i=d} &= 
    \E{Q_i(q)\mathbf{1}\{D_i=t\} \mid D_i=d} + 
    \P{D_i\neq t\mid D_i=d} v_d(q)\\
    &= 
    \begin{cases}
        \E{Q_i(q)\mid D_i = t} & \text{ if }  d=t,\\
        v_d(q) & \text{ if }  d\neq t.
    \end{cases}
\end{align*}
That is, under MTS, the observed and missing components are separated.
Therefore, unlike in the general MIV case, no quantile-dependent adjustment of the missing component is required under MTS. This separation between the observed and missing components allows the pointwise construction to attain the bound uniformly over $q$, yielding the uniformly sharp bound.

\section{Interval outcomes}\label{sec:interval}

In this section, we apply the general framework developed in Section~\ref{sec:gen} to interval-valued outcomes. %Such outcomes arise naturally when variables are observed only within intervals, for example because survey responses are reported in brackets or measurements are subject to interval censoring. 
Interval-valued observations are common in both survey and administrative data. Variables such as household income, earnings, hours worked, and time-use measures are often recorded as intervals rather than exact values (e.g., \citealp{Manski_Tamer:2002}). Such observations appear in numerous data sources, including population censuses and national surveys from Australia, Colombia, Japan, and the United Kingdom \citep{Walter21}.
We show that the coordinatewise bounds developed in Section~\ref{sec:gen} yield the sharp outer interval for interval-valued outcomes.

Let $\mathcal Y$ denote the space of compact intervals on the real line. Following \cite{kuri:25}, we equip $\mathcal Y$ with the metric induced by the support function,
\[
d_{\mathcal Y}(Y_1,Y_2)
=
\left(
\sum_{p\in\mathbb S^0}
\{s(p,Y_1)-s(p,Y_2)\}^2
\right)^{1/2},
\]
where $\mathbb S^0=\{-1,1\}$ and $s(p,Y)=\sup_{y\in Y}\langle p,y\rangle$. Let $\Psi(Y)=s(\cdot,Y)$. Then
\[
d_{\mathcal Y}(Y_1,Y_2)
=
\|\Psi(Y_1)-\Psi(Y_2)\|_{L^2(\mathbb S^0)},
\]
so that $(\mathcal Y,d_{\mathcal Y})$ is isometrically embedded into
$L^2(\mathbb S^0)$. Moreover, $\Psi(\mathcal Y)$ is a closed convex subset of
$L^2(\mathbb S^0)$, as shown in \citet[Lemma 2.1]{kuri:25}. Consequently, the general framework developed in Section~\ref{sec:gen} applies directly with the support function as the embedding map.

The choice of the metric $d_{\mathcal Y}$ is motivated by the fact that, as shown in \citet[Proposition 2.2]{kuri:25}, the (conditional) Fr\'echet mean under $d_{\mathcal Y}$ coincides with the (conditional) Aumann mean, which is a widely used notion of expectation for random sets in the partial identification literature \citep{BeMo08,MoMo18}. Consequently, the target parameter $\mathbb E_\oplus[Y_i(t)]$ coincides with the Aumann mean, which is the standard notion of expectation for random sets.

Now, the identification problem can be formulated through the embedded support functions.
For interval-valued outcomes, the support function can be simplified as
\begin{align*}
    s(p,Y)
    =
    \sup_{y\in Y}\langle p,y\rangle
    =
    \begin{cases}
        -L & \text{if } p=-1,\\
        U & \text{if } p=1,
    \end{cases}
\end{align*}
where $Y=[L,U]$. Consequently, the coordinatewise monotonicity restrictions developed in Section~\ref{sec:gen} reduce to monotonicity restrictions on the interval endpoints. 
This observation motivates the following interval-valued analogue of the MTR assumption.

\begin{assumption}[MTR for interval outcomes]\label{assumption:I-MTR}
    Let $\mathcal{T}$ be an ordered set. For all $t_1,t_2\in\mathcal T$,
    \begin{align*}
        t_2\ge t_1\Longrightarrow L_i(t_2)\ge L_i(t_1),\quad U_i(t_2)\ge U_i(t_1),
    \end{align*}
    almost surely, where $Y_i(t)=[L_i(t), U_i(t)]$.
\end{assumption}
Assumption \ref{assumption:I-MTR} requires endpointwise monotone treatment response.
Using the formulation used in Section \ref{sec:gen}, this condition can be rewritten as
\begin{align*}
    t_2 \ge t_1 \Longrightarrow s(1,Y_i(t_2)) \ge s(1,Y_i(t_1))\,\text{ and }\,
    s(-1,Y_i(t_2)) \le s(-1,Y_i(t_1)),
\end{align*}
that is, $\mathcal K_+ = \{1\}$, $\mathcal K_- = \{-1\}$, and $\mathcal K_0 = \varnothing$.
In the educational attainment example, this assumption states that if the worker's years of schooling increases exogenously, the interval-measured wage shifts weakly upward.

We next extend the MTS assumption to interval-valued outcomes by imposing monotonicity on the conditional expectations of the interval endpoints. This is the interval-valued analogue of the scalar MTS assumption.

\begin{assumption}[MTS for interval outcomes]\label{assumption:I-MTS}
    Let $\mathcal{T}$ be an ordered set. 
    For each $t\in\mathcal T$ and all $t_1,t_2\in\mathcal T$,
    \begin{align*}
        t_2 \ge t_1 \Longrightarrow \E{L_i(t) | D_i=t_2} \ge \E{L_i(t) | D_i=t_1},\quad
        \E{U_i(t) | D_i=t_2} \ge \E{U_i(t) | D_i=t_1}.
    \end{align*}
\end{assumption}

We similarly extend the MIV assumption by requiring monotonicity of the conditional expectations of the interval endpoints with respect to the instrumental variable.

\begin{assumption}[MIV for interval outcomes] \label{assumption:I-MIV}
    Let $\mathcal{Z}$ be an ordered set. 
    For each $t\in\mathcal{T}$ and all $(z_1, z_2) \in \mathcal{Z}\times\mathcal{Z}$ such that $z_2\ge z_1$, 
    \begin{align*}
        \E{L_i(t) \mid Z_i = z_2} \ge \E{L_i(t) \mid Z_i = z_1},\quad 
        \E{U_i(t) \mid Z_i = z_2} \ge \E{U_i(t) \mid Z_i = z_1}.
    \end{align*}
\end{assumption}

For interval-valued outcomes, the parameter of interest is itself an interval. Accordingly, sharpness can be summarized by the smallest interval in the sharp identified set that contains every other identified interval. The following definition formalizes this notion.

\begin{definition}
    Let $\mathcal{S}$ be the sharp identified set of an interval-valued target parameter. Then the interval $I$ is called the \textit{sharp outer interval} if (i) $[a,b]\subseteq I$ for every interval $[a,b]\in\mathcal{S}$ and (ii) $I\in\mathcal{S}$.
\end{definition}

When it exists, the sharp outer interval is uniquely determined by the smallest attainable lower endpoint and the largest attainable upper endpoint. The following proposition shows that this interval exists under the interval-valued monotonicity assumptions and admits a simple closed-form expression. 

\begin{proposition}\label{prop:int}
    For each $s\in\mathcal T$, suppose that $K_0 \leq L_i(s) \le U_i(s)\leq K_1$ almost surely, $\mathcal T$ and $\mathcal Z$ are discrete, finite, and $\P{D_i=d}>0$ for every $d\in\mathcal T$. 

    \begin{description}
         \item[(i)] Suppose that $K_0$ and $K_1$ are known. Under Assumption \ref{assumption:I-MTR}, the sharp outer interval of $\mathbb E_\oplus[Y_i(t)]$ is
         \begin{align*}
            I_{\mathtt{MTR}}(t) = \bigg[
            \E{L_i\mathbf{1}\{D_i\le t\}} + K_0\P{D_i>t},
            \,
            \E{U_i\mathbf{1}\{D_i\ge t\}} + K_1\P{D_i<t}
            \bigg].
        \end{align*}

        \item[(ii)] Suppose that $K_0$ and $K_1$ are known. Under Assumption \ref{assumption:I-MTS}, the sharp outer interval of $\mathbb E_\oplus[Y_i(t)]$ is
        \begin{align*}
            I_{\mathtt{MTS}}(t)
            =
            \bigg[
            \E{L_i\mid D_i=t}\P{D_i\ge t} + K_0\P{D_i<t},
            \,
            \E{U_i\mid D_i=t}\P{D_i\le t} + K_1\P{D_i>t}
            \bigg].
        \end{align*}

        \item[(iii)] Suppose that $K_0$ and $K_1$ are known and that $\P{D_i=d,Z_i=z}>0$ for every $(d,z) \in \mathcal T \times \mathcal Z$. Under Assumption \ref{assumption:I-MIV}, the sharp outer interval of $\mathbb E_\oplus[Y_i(t)]$ is
        \begin{align*}
            I_{\mathtt{MIV}}(t)
            =
            \bigg[
            &
            \sum_{z\in\mathcal Z}\P{Z_i=z}
            \sup_{z^\prime\le z}
            \left\{
            \mathbb{E}[L_i\mathbf 1\{D_i=t\}\mid Z_i=z^\prime]
            +
            K_0\P{D_i\ne t\mid Z_i=z^\prime}
            \right\},\\
            &\quad
            \sum_{z\in\mathcal Z}\P{Z_i=z}
            \inf_{z^\prime\ge z}
            \left\{\mathbb{E}[U_i\mathbf 1\{D_i=t\}\mid Z_i=z^\prime]+K_1\P{D_i\ne t\mid Z_i=z^\prime}\right\}
            \bigg].
        \end{align*}

        \item[(iv)] Under Assumptions \ref{assumption:I-MTR} and \ref{assumption:I-MTS}, the sharp outer interval of $\mathbb E_\oplus[Y_i(t)]$ is
        \begin{align*}
            I_{\mathtt{MTR+MTS}}(t)
            =
            \bigg[
            &
            \sum_{d<t}
            \E{L_i\mid D_i=d}\P{D_i=d}
            +
            \E{L_i\mid D_i=t}\P{D_i\ge t},
            \\
            &\quad
            \sum_{d>t}
            \E{U_i\mid D_i=d}\P{D_i=d}
            +
            \E{U_i\mid D_i=t}\P{D_i\le t}
            \bigg].
        \end{align*}
    \end{description}
\end{proposition}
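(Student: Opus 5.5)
The proof has two halves. Validity follows from Proposition~\ref{prop:gen-simple}, and attainability comes from an explicit extremal data-generating process for each case.

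For validity, note that $\Psi(Y)=(s(-1,Y),s(1,Y))=(-L,U)$, and that the Fr\'echet mean is the Aumann mean $[\E{L_i(t)},\E{U_i(t)}]$. Under Assumptions~\ref{assumption:I-MTR}--\ref{assumption:I-MIV} we have $\mathcal K_+=\{1\}$ and $\mathcal K_-=\{-1\}$. The support bounds are $\underline{\mathcal X}_1=K_0$, $\overline{\mathcal X}_1=K_1$, $\underline{\mathcal X}_{-1}=-K_1$ and $\overline{\mathcal X}_{-1}=-K_0$. Plugging these into Proposition~\ref{prop:gen-simple}(i)--(iv) gives, in each case, a lower bound on $\E{U_i(t)}$ and an upper bound on $\E{U_i(t)}$. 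For the coordinate $-1$ it gives the same kind of bounds on $\E{L_i(t)}$, after flipping signs. In particular, every identified interval $[a,b]$ satisfies $a\ge \ell^L_t$ and $b\le u^U_t$, where $\ell^L_t$ and $u^U_t$ are the endpoints displayed in $I_{\mathtt M}(t)$. This gives part (i) of the definition of a sharp outer interval.

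For attainability, I construct, for each $\mathtt M$, potential outcomes $[L_i(s),U_i(s)]$ with the following properties:
\begin{enumerate}[label=(\alph*)]
\item they agree with the observed $Y_i$ when $s=D_i$;
\item they satisfy $K_0\le L_i(s)\le U_i(s)\le K_1$;
\item they satisfy the relevant assumption;
\item they yield $\E{L_i(t)}=\ell^L_t$ and $\E{U_i(t)}=u^U_t$ simultaneously.
\end{enumerate}
The key observation is that pushing $L$ to its smallest value and $U$ to its largest value never conflicts with $L\le U$. The scalar extremal constructions of \cite{Manski:1997} and \cite{Manski_Pepper:2000} push lower endpoints down and upper endpoints up.

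For MTR, set $L_i(s)=L_i$ for $s\ge D_i$ and $L_i(s)=K_0$ for $s<D_i$. Set $U_i(s)=U_i$ for $s\le D_i$ and $U_i(s)=K_1$ for $s>D_i$. Then $L_i(s)\le L_i\le U_i\le U_i(s)$ whenever both are modified, and the mixed cases are fine because $K_0\le U_i$ and $L_i\le K_1$. Monotonicity in $s$ is immediate.

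For MTS, let the counterfactual at $t$ be $K_0$ (for $L$) for units with $D_i<t$ and $\E{L_i\mid D_i=t}$ for units with $D_i>t$, and symmetrically for $U$. Other counterfactuals $s\neq t$ are set analogously so that conditional means are monotone in $d$. One must check $L\le U$: this holds since $\E{L_i\mid D=t}\le\E{U_i\mid D=t}$, and these can be taken deterministic.

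For MTR+MTS, use the Manski--Pepper construction separately on each endpoint. Set $L_i(s)=L_i$ for $s\ge D_i$ and, for $s<D_i$, $L_i(s)=\E{L_i\mid D_i=s}$. Do the same for $U$ with the roles reversed. The scalar argument gives MTR and MTS for each endpoint, and the bounds are attained. The ordering $L_i(s)\le U_i(s)$ must be checked in the cross case $s<D_i$, where we need $\E{L_i\mid D=s}\le U_i$. This need not hold pointwise, so I would instead use $L_i(s)=\min\{\E{L_i\mid D=s},U_i\}$, or argue via conditional means. This is the delicate step.

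For MIV, I follow Manski--Pepper's attainment argument. Choose the missing components $v_z$ for $L$ so that the conditional means equal the monotone envelope $\sup_{z'\le z}(\cdot)$; these values lie in $[K_0,K_1]$. Do the same for $U$ with the infimum envelope. Since the $L$-envelope is at most the corresponding quantity built with the upper endpoints, the coupling is again compatible, but the missing values must be assigned as constants per $(z,d)$ cell with $L\le U$.

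I expect the main obstacle to be exactly this joint feasibility $L_i(s)\le U_i(s)$ in the MIV and MTR+MTS constructions. There the two endpoints are imputed from different envelopes or conditional means, and the natural scalar imputations may cross for individual units. My first attempt is to impute cell-level constants and verify the ordering at the level of conditional means, which is all the assumptions and the target depend on. If the individual-level ordering fails, I would replace the offending imputations by the average of the two and show that this preserves the required means.
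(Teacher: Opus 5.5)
Your validity argument and your attainment constructions for (i) and (ii) are correct and essentially the paper's. In (ii) the imputed intervals are $[K_0,\E{U_i\mid D_i=t}]$ and $[\E{L_i\mid D_i=t},K_1]$, so the ordering $L\le U$ is trivial. The argument breaks down in (iv) and (iii), exactly where you anticipated, and the repairs you suggest do not work.

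\textbf{Part (iv).} Your imputation $L_i(s)=\E{L_i\mid D_i=s}$ for $s<D_i$ fails before any $L\le U$ issue arises. Assumption~\ref{assumption:I-MTR} holds almost surely, so it requires $L_i(s)\le L_i$ for every unit with $D_i>s$. MTR+MTS only orders the cell means, $\E{L_i\mid D_i=s}\le\E{L_i\mid D_i=d}$ for $s\le d$. Any unit with $D_i=d>s$ and $L_i<\E{L_i\mid D_i=s}$ therefore violates MTR, and symmetrically for $U$. Truncating at $U_i$ neither restores $L_i(s)\le L_i$ nor in general preserves the cell mean. You also cannot argue ``via conditional means'', since MTR is not a conditional-mean restriction.

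What you need is an imputation that lies below $L_i$ unit by unit and has a prescribed cell mean. The paper shrinks toward the support bound. On $\{D_i=d\}$ it sets $\widetilde L_i(s)=K_0+\lambda^L_{d,s}(L_i-K_0)$ for $s\le d$, with $\lambda^L_{d,s}=(\E{L_i\mid D_i=s}-K_0)/(\E{L_i\mid D_i=d}-K_0)$ (set to $1$ if the denominator is zero), and $\widetilde L_i(s)=L_i$ for $s>d$. Analogously, $\widetilde U_i(s)=K_1-\lambda^U_{d,s}(K_1-U_i)$ for $s\ge d$ and $\widetilde U_i(s)=U_i$ for $s<d$.

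The observed cell means of both endpoints are nondecreasing in $d$; this is the testable implication of the maintained MTR+MTS. Hence $\lambda^L_{d,s}\in[0,1]$ is nondecreasing in $s$ and $\lambda^U_{d,s}\in[0,1]$ is nonincreasing in $s$. It follows that $\widetilde L_i(s)\le L_i\le U_i\le\widetilde U_i(s)$ and that both endpoints are monotone in $s$ unit by unit. Moreover, $\E{\widetilde L_i(s)\mid D_i=d}=\E{L_i\mid D_i=s\wedge d}$ and $\E{\widetilde U_i(s)\mid D_i=d}=\E{U_i\mid D_i=s\vee d}$. These give MTS and both endpoints of $I_{\mathtt{MTR+MTS}}(t)$ at once.

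\textbf{Part (iii).} Write $\alpha_z,\beta_z$ for the terms inside the $\sup$ and $\inf$ in $I_{\mathtt{MIV}}(t)$, and $p_z=\P{D_i\ne t\mid Z_i=z}$. The cell constants that put the conditional means on the envelopes are $\ell_z^*=\{\sup_{v\le z}\alpha_v-\E{L_i\mathbf 1\{D_i=t\}\mid Z_i=z}\}/p_z$ and $u_z^*=\{\inf_{v\ge z}\beta_v-\E{U_i\mathbf 1\{D_i=t\}\mid Z_i=z}\}/p_z$. You must show $K_0\le\ell_z^*\le u_z^*\le K_1$.

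You assert that these lie in $[K_0,K_1]$ and that ``the $L$-envelope is at most the $U$-envelope''. The correct condition is $\inf_{v\ge z}\beta_v-\sup_{v\le z}\alpha_v\ge\E{(U_i-L_i)\mathbf 1\{D_i=t\}\mid Z_i=z}$. This does not follow from the data, nor from the two scalar MIV models being separately non-refuted. For example, take $K_0=0$, $K_1=1$ and $\mathcal Z=\{1,2,3\}$. At $z\in\{1,3\}$, let a share $1-\epsilon$ of units have $D_i=t$ and $[L_i,U_i]=[1/2,1/2]$. At $z=2$, let half the units have $D_i=t$ and $[L_i,U_i]=[0,1]$. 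Then $\sup_{v\le2}\alpha_v=(1-\epsilon)/2\le(1+\epsilon)/2=\inf_{v\ge2}\beta_v$, but $\ell_2^*=1-\epsilon>\epsilon=u_2^*$ for $\epsilon<1/2$.

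The missing step is to invoke the maintained assumption: there exists an observationally equivalent process $[L_i^0(s),U_i^0(s)]$ satisfying Assumption~\ref{assumption:I-MIV}. Its means $a_z^0=\E{L_i^0(t)\mid Z_i=z}$ and $b_z^0=\E{U_i^0(t)\mid Z_i=z}$ are nondecreasing, with $\alpha_z\le a_z^0$ and $b_z^0\le\beta_z$. Hence $\sup_{v\le z}\alpha_v\le a_z^0$ and $b_z^0\le\inf_{v\ge z}\beta_v$. Also $b_z^0-a_z^0\ge\E{(U_i-L_i)\mathbf 1\{D_i=t\}\mid Z_i=z}$ because $U_i^0(t)\ge L_i^0(t)$. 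Together these give $K_0\le\ell_z^*\le u_z^*\le K_1$. The same process supplies the potential outcomes at $s\ne t$, which your proposal leaves unspecified.

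Averaging crossing imputations cannot substitute for this step. If $\ell_z^*>u_z^*$, every feasible completion would need the mean of $L_i(t)$ over $\{D_i\ne t,Z_i=z\}$ to be at least $\ell_z^*$ and that of $U_i(t)$ to be at most $u_z^*$, so the model would be empty. In any case, averaging moves $\E{L_i(t)}$ and $\E{U_i(t)}$ off the endpoints.
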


%\red{[From Daisuke: We need to add a discussion on the interpretations of $\sum_{z \in \mathcal{Z}}, \sum_{d>t}$ when the ordered sets $\mathcal{Z},\mathcal{T}$ has infinitely many elements.]}
Proposition~\ref{prop:int} shows that the sharp outer interval admits a simple closed-form expression under each monotonicity assumption. As in the distributional outcome case, its endpoints coincide with those of the conventional \citeauthor{Manski_Pepper:2000} identified set. Thus,
even when outcomes are interval-valued rather than scalar, the analysis proceeds in essentially the same way as in the scalar case. This result broadens the applicability of monotonicity-based identification methods to a wide range of economic settings involving interval-valued data. The resulting bounds may, however, be less informative than those obtained when outcomes are observed exactly. The extent of this loss of informativeness is examined in the following section through a numerical application.

\section{Empirical illustrations} \label{sec:numerical}

This section illustrates the proposed identification framework through two empirical applications involving distribution-valued and interval-valued outcomes.

\subsection{Distributional outcomes: NHANES periodontal health}

This subsection examines the effect of heavy smoking on the distribution of clinical attachment loss (CAL), a measure of the loss of periodontal support around a tooth.

Oral health is important not only for maintaining a healthy diet and immune function, but also for labor-market outcomes and self-esteem (e.g., \citealp{Gallego_etal:2024, Gleid10}). Smoking has long been recognized as a major risk factor for periodontal disease \citep{Leite18}. We therefore investigate the effect of heavy smoking on the distribution of CAL.

Our analysis uses data from the 2013--2014 National Health and Nutrition Examination Survey (NHANES). The survey contains basic demographic information, responses to health-related questions, including the frequency and intensity of cigarette consumption, and the results of oral health examinations conducted by dentists licensed in at least one U.S. state.
In particular, NHANES records CAL at six sites per tooth for up to 28 teeth, wherever assessment is feasible. Using these up to 168 site-level measurements, we construct an individual-level distribution of CAL, denoted by $Y_i$, which is our main outcome variable.

As the treatment variable, we classify respondents into three groups: non-smokers ($D_i=0$); non-heavy current smokers ($D_i=1$), defined as those who reported smoking fewer than 20 cigarettes per day during the previous 30 days; and heavy current smokers ($D_i=2$), defined as those who reported smoking 20 or more cigarettes per day during the previous 30 days. 
The threshold of 20 cigarettes per day follows the definition in \citet[p.~30]{statsmoke}.
We exclude former smokers and respondents whose smoking histories do not fall into these three categories. We are interested in the counterfactual average CAL distribution for nonsmokers $\mathbb{E}_{\oplus}[Y_i(0)]$ and heavy-smokers $\mathbb{E}_{\oplus}[Y_i(2)]$.
Equivalently, we estimate the mean quantile function $\E{Q_{Y_i(t)}(q)}$ for $q\in(0,1)$ and $t\in\{0,2\}$.

We impose the MTR and MTS assumptions; that is, Assumptions \ref{assumption:D-MTR} and \ref{assumption:D-MTS}. MTR posits that the distribution of CAL under a higher level of smoking is weakly worse than that under a lower level of smoking. Since a larger value of CAL indicates greater periodontal destruction, this assumption implies that the counterfactual CAL distribution shifts weakly toward more severe periodontal outcomes as smoking intensity increases. 
MTS assumes that, on average, individuals in higher observed smoking categories would have weakly worse counterfactual CAL distributions than individuals in lower observed smoking categories, under any common treatment level. This assumption may be plausible if heavier smokers are, on average, less likely to engage in behaviors that promote oral health.

We first compare the identifying power of MTR alone with that of the combined MTR--MTS assumptions.
Figure~\ref{fig:compare} presents the identified region for $\E{Q_{Y_i(0)}(q)}$ and $\E{Q_{Y_i(2)}(q)}$. 
The uniformly sharp bounds under the joint MTR--MTS assumptions are considerably narrower than those under MTR alone, indicating that MTS provides substantial additional identifying power. This improvement is particularly pronounced in Figure~\ref{fig:compare2}.
\begin{figure}[t]
    \centering
    \begin{subfigure}[b]{0.49\textwidth}
        \centering
        \includegraphics[width=\linewidth]{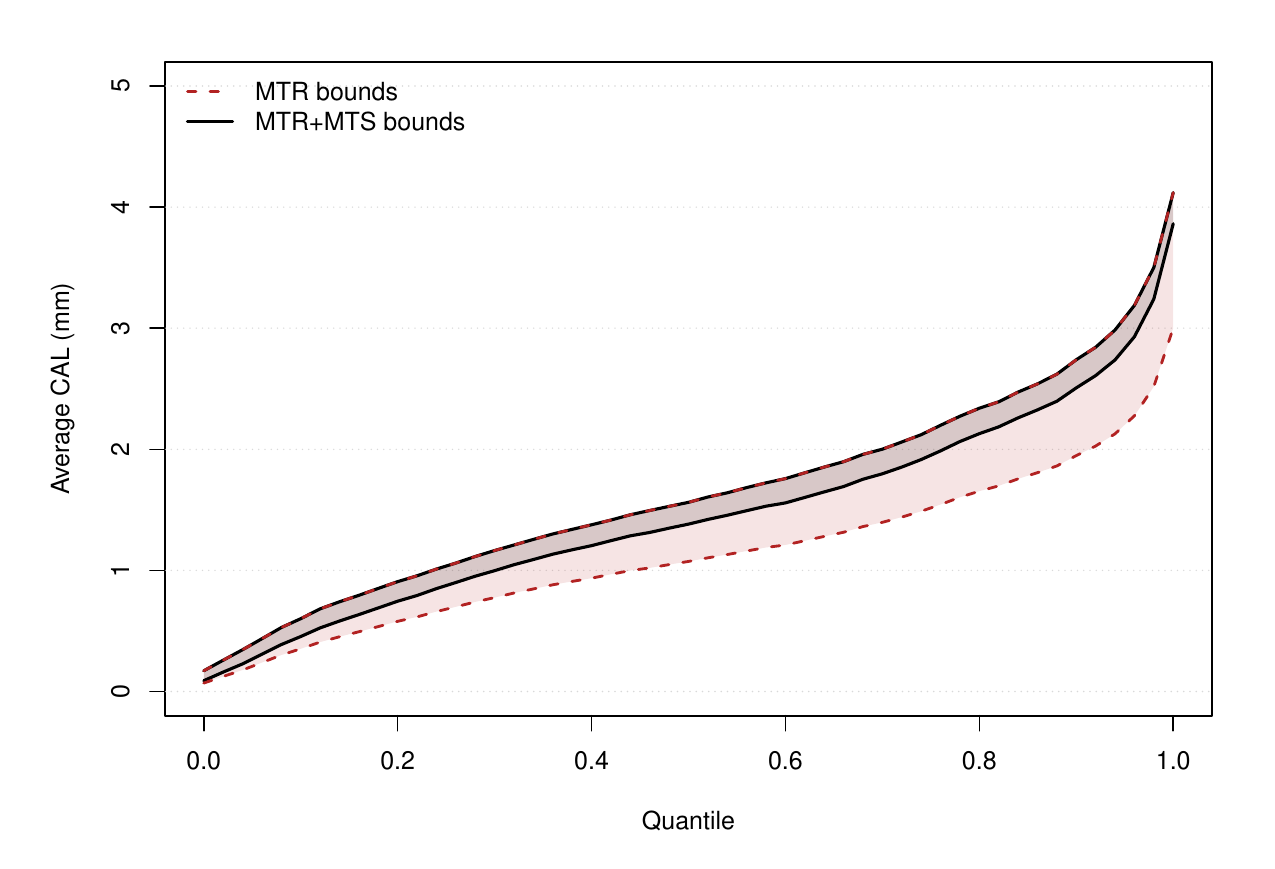}
        \caption{MTR vs. MTR+MTS ($D_i=0$)}
        \label{fig:compare1}
    \end{subfigure}
    \hfill
    \begin{subfigure}[b]{0.49\textwidth}
        \centering
        \includegraphics[width=\linewidth]{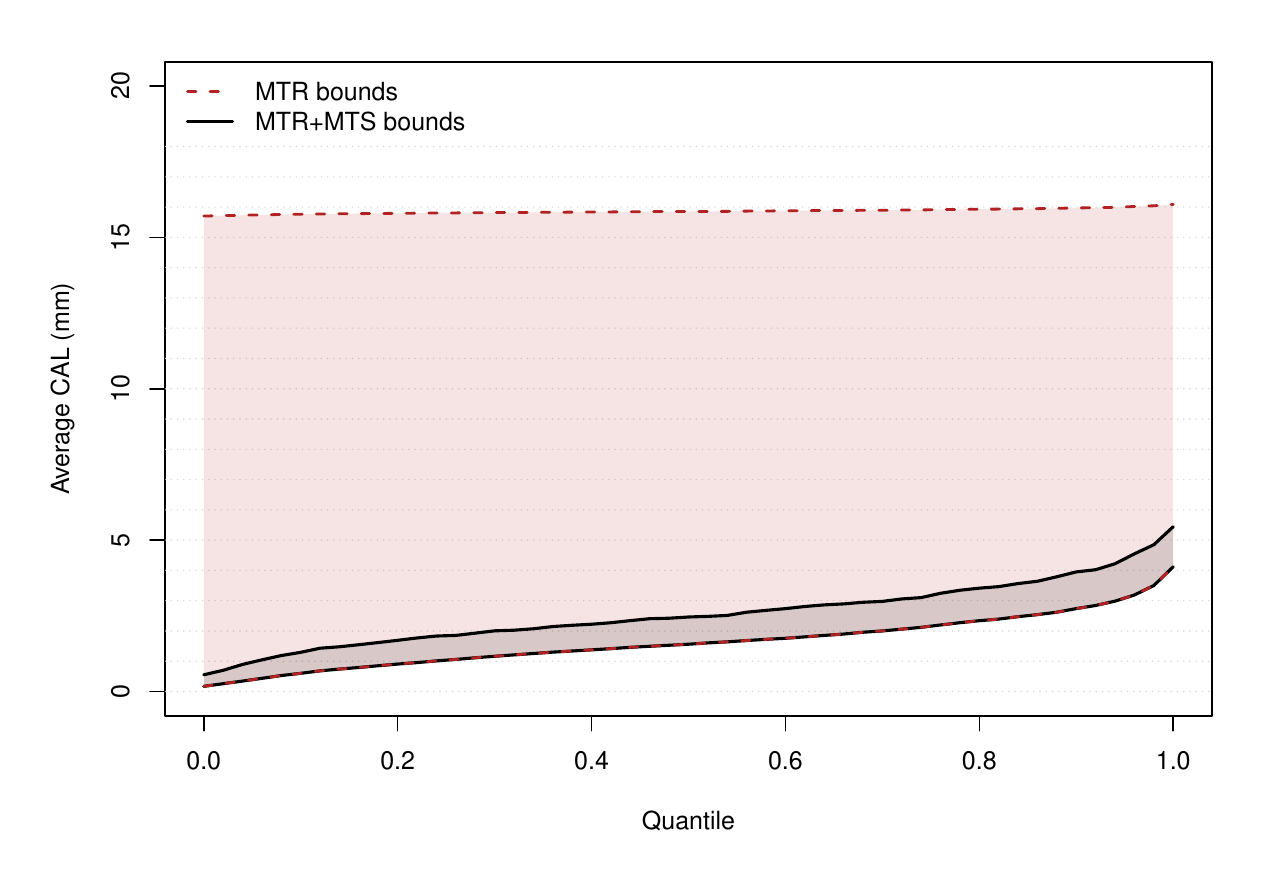}
        \caption{MTR vs. MTR+MTS ($D_i=2$)}
        \label{fig:compare2}
    \end{subfigure}
    \caption{Identifying power of MTR and MTR+MTS assumptions}
    \label{fig:compare}

    \begin{flushleft}
    \footnotesize
    \renewcommand{\baselineskip}{11pt}
    \textbf{Note:} To compute the MTR bounds, we set the lower and upper support bounds to $K_0=0$ and $K_1=17$, respectively. The lower bound follows from the fact that CAL is nonnegative, whereas the upper bound is set equal to the maximum value specified in the NHANES variable documentation.
    \end{flushleft}
\end{figure}

Figure \ref{fig:te1} compares the MTR--MTS bounds for $\E{Q_{Y_i(0)}(q)}$ and $\E{Q_{Y_i(2)}(q)}$. Under these assumptions, the counterfactual CAL distribution under heavy smoking is weakly shifted upward relative to that under non-smoking. More concretely, the mean quantile function under heavy smoking lies uniformly above that under non-smoking. Figure~\ref{fig:te2} reports the implied average quantile-wise treatment effects. The upper bounds indicate that heavy smoking could increase CAL by as much as $0.5$ mm at lower quantiles and $1.5$ mm at upper quantiles. 
\begin{figure}[t]
    \centering
    \begin{subfigure}[b]{0.49\textwidth}
        \centering
        \includegraphics[width=\linewidth]{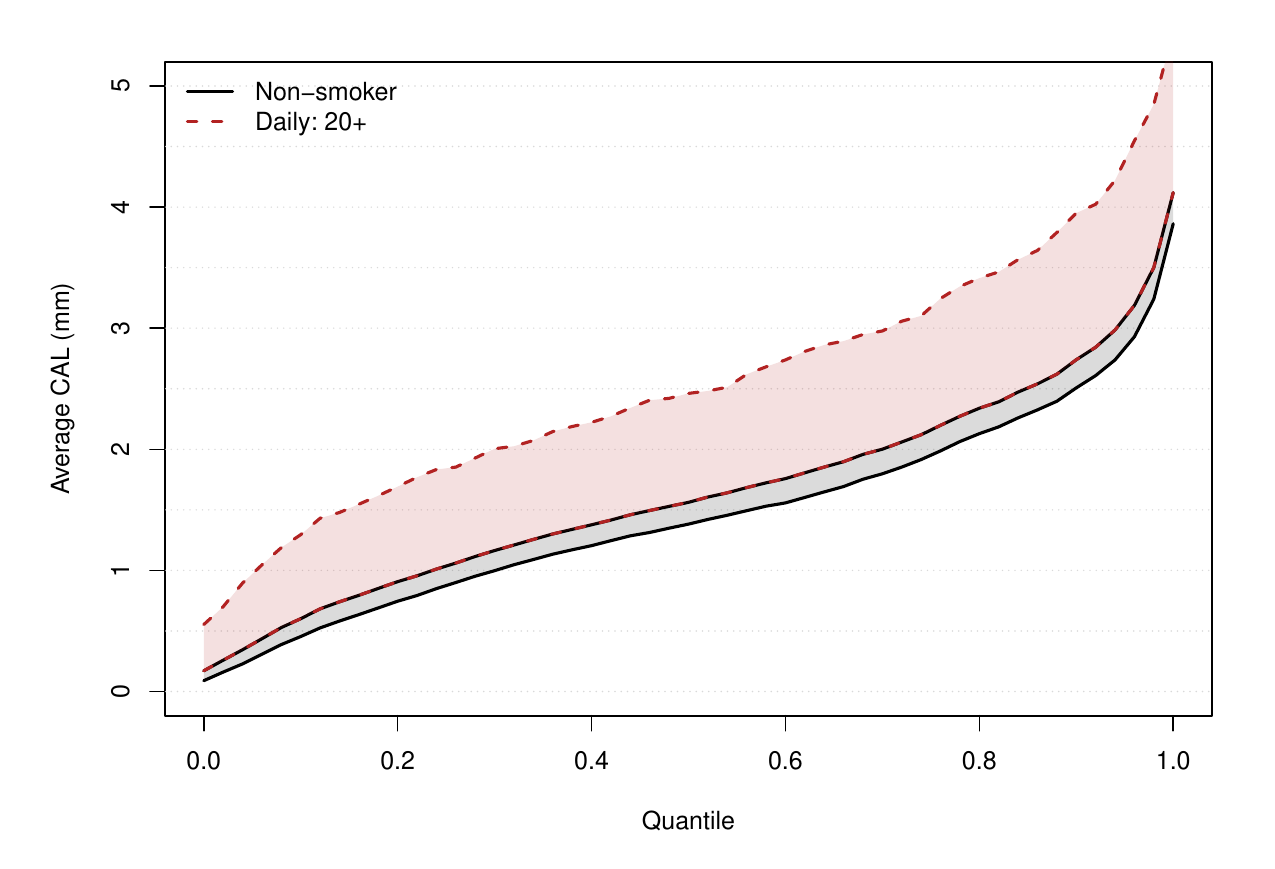}
        \caption{MTR+MTS bounds for $D_i\in\{0,2\}$}
        \label{fig:te1}
    \end{subfigure}
    \hfill
    \begin{subfigure}[b]{0.49\textwidth}
        \centering
        \includegraphics[width=\linewidth]{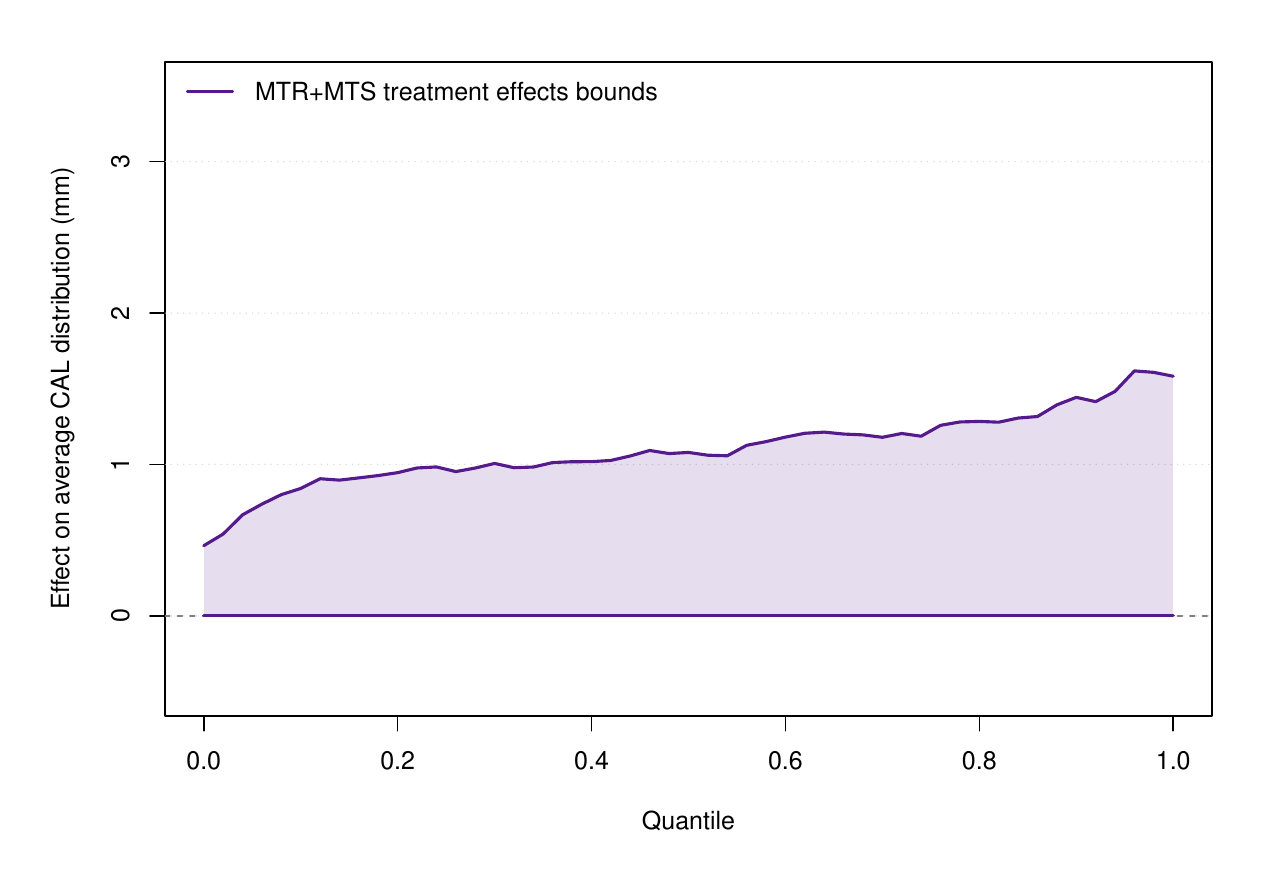}
        \caption{Averaged quantile treatment effects bounds}
        \label{fig:te2}
    \end{subfigure}
    \caption{Counterfactual mean and treatment effects}
    \label{fig:te}

    %\begin{flushleft}
    %\footnotesize
    %\renewcommand{\baselineskip}{11pt}
    %\textbf{Note:} 
    %\end{flushleft}
\end{figure}

Finally, we examine a testable implication of the MTR and MTS assumptions. Under these assumptions, for $t_1\leq t_2$,
\begin{align*}
    \mathbb{E}[Q_{Y_i}(q) \mid D_i = t_1]
    &= \mathbb{E}[Q_{Y_i(t_1)}(q) \mid D_i = t_1] \leq \mathbb{E}[Q_{Y_i(t_2)}(q) \mid D_i = t_1] \leq \mathbb{E}[Q_{Y_i(t_2)}(q) \mid D_i = t_2] \\
    &= \mathbb{E}[Q_{Y_i}(q) \mid D_i = t_2],
\end{align*}
where the first inequality follows from MTR and the second from MTS. Hence, the observed conditional mean quantile functions must be weakly ordered across smoking categories. Figure~\ref{fig:testable} plots $\mathbb{E}[Q_{Y_i}(q) \mid D_i = d]$ for $d\in\{0,1,2\}$. The quantile functions appear to satisfy this ordering, indicating that the assumptions are visually consistent with the observed data.

\begin{figure}
    \centering
    \includegraphics[width=0.65\linewidth]{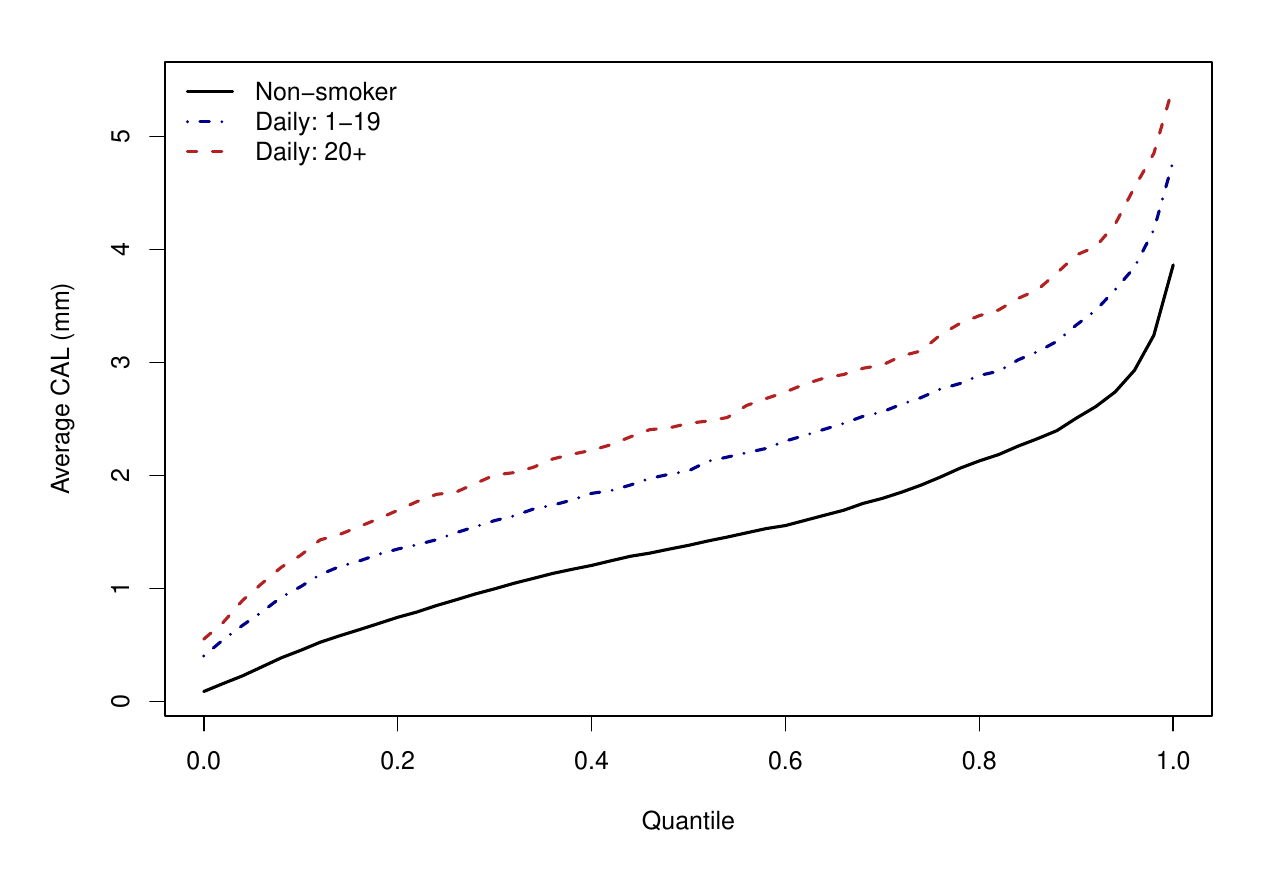}
    \caption{Average observed quantile functions}
    \label{fig:testable}
    %\begin{flushleft}
    %\footnotesize
    %\renewcommand{\baselineskip}{11pt}
    %\textbf{Note:} 
    %\end{flushleft}
\end{figure}

\subsection{Interval-valued outcomes: Job Corps earnings}

We illustrate the empirical implications of our identification results for interval-valued outcomes using the Job Corps application studied by \cite{Kim:2018}. The study examines counterfactual mean weekly earnings in the sixteenth quarter after treatment assignment. We use the same analysis sample and treatment definition as in \cite{Kim:2018}, with treatment measured by the number of weeks of training received.
The dataset is obtained from the replication package of \cite{Kim:2018}; its original source is the National Job Corps Study analyzed by \cite{Schochet08}.

Our objective here is not to argue that the original Job Corps earnings measure is itself interval-valued. Rather, we conduct a counterfactual coarsening exercise. We replace the original point-valued outcome with an interval-valued observation by applying coarsening rules used in existing surveys and censuses. We then apply our identification procedure to the resulting interval-valued outcome. The bounds computed from the original point-valued outcome provide an empirical benchmark. Comparing our bounds with this benchmark therefore quantifies the loss of identifying information induced by alternative reporting and data-release designs.

We consider two empirically motivated coarsening rules.
The first rule is employed in recent public-use files from the U.S. Current Population Survey (CPS). In CPS, zero weekly earnings remain observed exactly; positive earnings of $7$ or less are released as $4$; earnings above $7$ and below $1,200$ are rounded to the nearest multiple of $2$; earnings from $1,200$ to below $2,000$ are rounded to the nearest multiple of $10$, and earnings of $2,000$ or more are rounded to the nearest multiple of $20$. The corresponding interval-valued observations are given by the set of values that could have generated each released amount. For example, the released values $0$, $4$, $218$, $1{,}450$, and $2{,}020$ correspond to the intervals $\{0\}$, $[0,7]$, $[217,219]$, $[1{,}445,1{,}455]$, and $[2{,}010,2{,}030]$, respectively. At the two threshold values, $1{,}200$ and $2{,}000$, we use the full interval implied by the adjoining rounding rules to avoid ambiguity in the induced intervals.

Our second rule is based on the weekly personal-income bands in the 2021 Australian Census, which exhibit coarser reporting. 
The original weekly-income cutpoints are
\begin{align*}
0,\ 150,\ 300,\ 400,\ 500,\ 650,\ 800,\ 1{,}000,\ 1{,}250,\ 1{,}500,\ 1{,}750,\ 2{,}000,\ 3{,}000,\ 3{,}500.
\end{align*}
To adapt the numerical scale of these Australian-dollar cutoffs to the U.S. Job Corps earnings distribution, we multiply each positive cutoff by $0.7$, which approximately reflects the Australian-dollar-to-U.S.-dollar exchange rate.  We retain zero weekly earnings as an exact value and assign each positive earnings observation to one of the income bands.

The CPS-style rule and the Australian-Census-style rule provide two distinct benchmarks. The former reflects relatively mild coarsening induced by a public-use data-release procedure. The latter represents a substantially coarser survey-style reporting design, which is also commonly used to improve survey response rates.

For each rule, we recompute the identified set under our proposed MTR and MTS restrictions. Figure~\ref{fig:jobcorps} compares the resulting bounds with the benchmark bounds of \cite{Kim:2018} obtained from the original point-valued earnings measure.
\begin{figure}[t]
    \centering
    \begin{subfigure}[b]{0.49\textwidth}
        \centering
        \includegraphics[width=\linewidth]{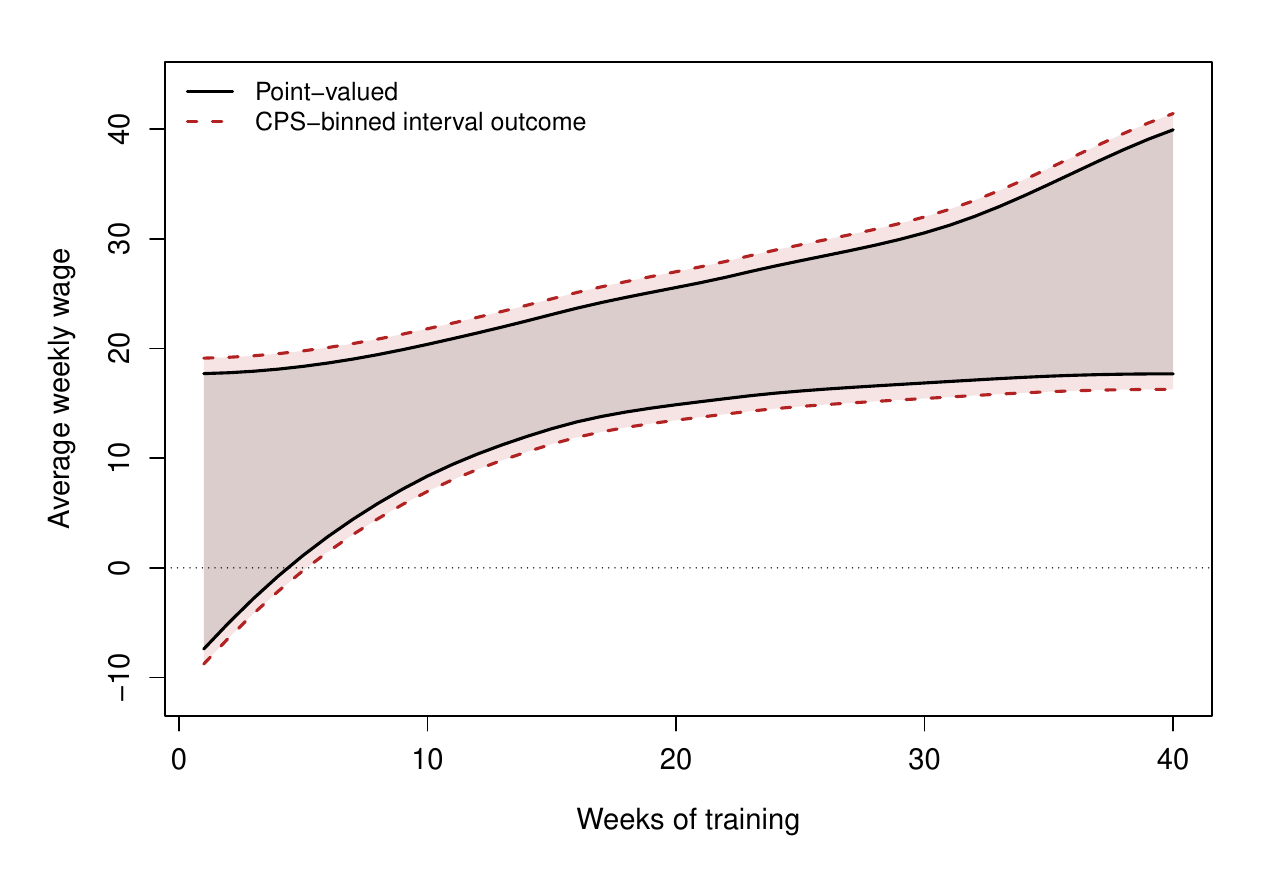}
        \caption{CPS}
        \label{fig:jobcorps1}
    \end{subfigure}
    \hfill
    \begin{subfigure}[b]{0.49\textwidth}
        \centering
        \includegraphics[width=\linewidth]{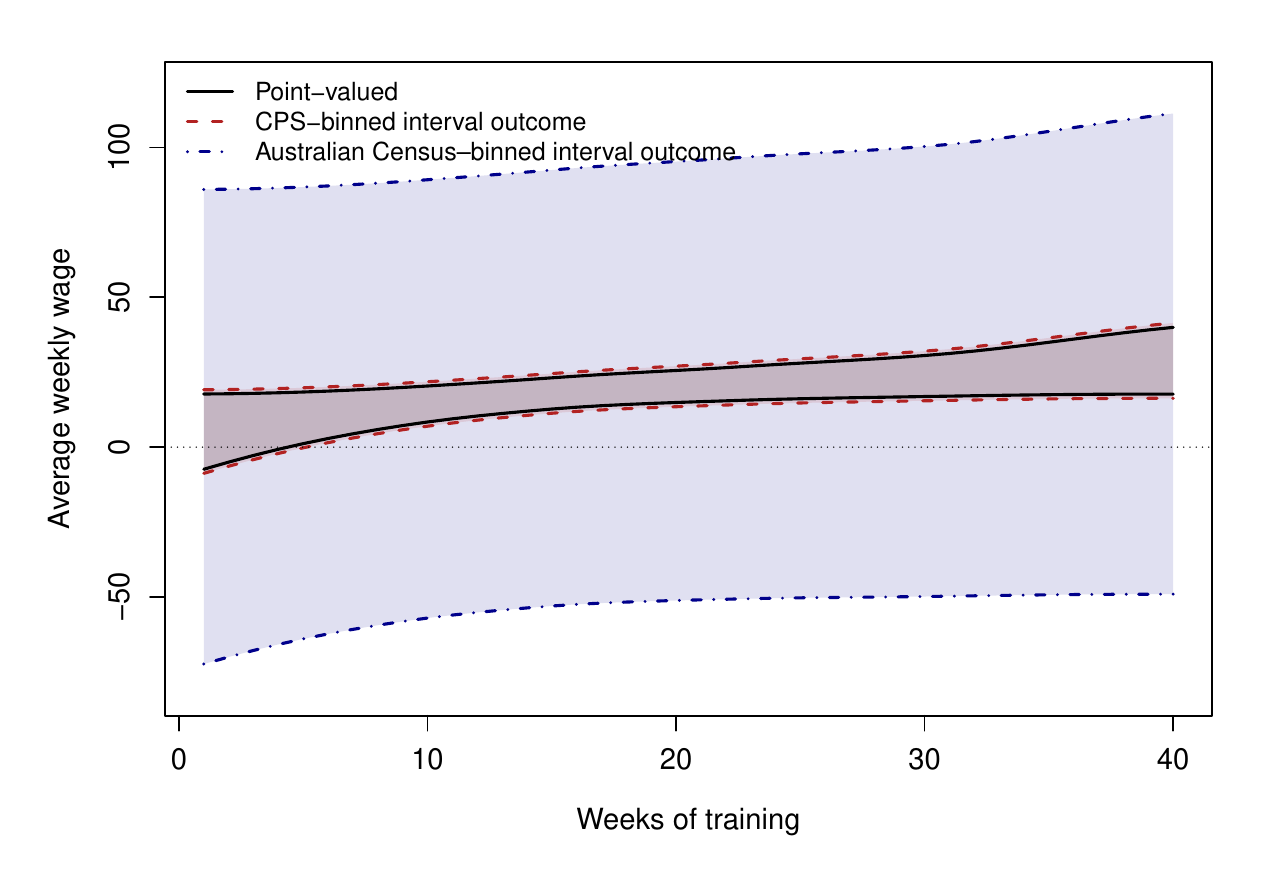}
        \caption{Australian Census}
        \label{fig:jobcorps2}
    \end{subfigure}
    \caption{MTR+MTS bounds for interval outcomes}
    \label{fig:jobcorps}

    \begin{flushleft}
    \footnotesize
    \renewcommand{\baselineskip}{11pt}
    \textbf{Note:} $\mathbb{E}_{\oplus}[Y_i(t)]$ for each treatment duration $t$ is reported.
    \end{flushleft}
\end{figure}

The exercise yields several important takeaways. First, even when the underlying point-valued outcome is unavailable---for example, because confidentiality protection requires the release of coarsened outcomes, as in the CPS---the bounds obtained under our proposed MTR and MTS assumptions can remain highly informative. In our application, the bounds based on the CPS-style coarsening remain close to the benchmark bounds computed from the original point-valued outcome under the corresponding MTR+MTS restrictions.

At the same time, the degree of coarsening is consequential. When outcomes are recorded using intervals as coarse as those in the Australian Census design, the resulting bounds become substantially wider and provide little useful information about the parameter of interest. This contrast highlights an important implication for survey and experimental design. Although confidentiality protection and reduced respondent burden are important objectives, preserving as much outcome granularity as possible can be critical for obtaining informative bounds and, consequently, sharper policy conclusions.

\section{Conclusion} \label{sec:conclusion}

This paper develops a unified framework for partial identification with random-object outcomes under monotone treatment response (MTR), monotone treatment selection (MTS), and monotone instrumental variable (MIV) assumptions. Building on an isometric embedding of random objects into an $L^2$ space, we formulate monotonicity restrictions coordinatewise on the embedded functions and derive identified sets for the corresponding Fr\'echet means. The framework applies to a broad class of random objects through a common representation and accommodates the three classical monotonicity assumptions within a unified framework.

We further establish sharp identification results for three economically important classes of random objects. For compositional outcomes under the Aitchison metric, the coordinatewise bounds are sharp. For distributional outcomes under the Wasserstein metric, the coordinatewise construction leads to uniformly sharp lower and upper bounds under MTR, MTS, and joint MTR--MTS restrictions. For interval-valued outcomes represented by support functions, the corresponding bounds coincide with the sharp outer interval.
The empirical application to periodontal health distributions and the numerical illustration based on Job Corps earnings data demonstrate that the proposed framework can yield informative identified sets in empirically relevant settings.

\newpage

\appendix

\section{General construction of sharp identified sets} \label{sec:gen-set}

This section characterizes the sharp identified set for general random objects. The results below may be useful for analyzing objects that are not explicitly considered in this article and are also used in the proofs of our propositions.

For $x,x^\prime \in \Psi(\mathcal Y)$, we write $x \preceq x^\prime$ when $x(k)\leq x^\prime(k)$ for a.e. $k\in\mathcal K_+$, $x(k)\geq x^\prime(k)$ for a.e. $k\in\mathcal K_-$, and no restriction is imposed on coordinates $k\in\mathcal{K}_0$.
Write $X_i^{D_i} = X_i$.
\begin{lemma}\label{lem:obj}
    Suppose that Assumption \ref{assumption:metric} holds true.
    Suppose also that $\mathcal Y$ is known.
    Assume that $\mathcal K_+$, $\mathcal K_-$, and $\mathcal K_0$ are measurable and form a partition of $\mathcal K$ up to $\nu$-null sets.
    Let $\mathcal T$ and $\mathcal Z$ be discrete, finite sets.
    Finally, assume $\P{D_i = d}>0$ for every $d \in \mathcal T$.
    \begin{description}
        \item[(i)] Suppose that Assumption \ref{assumption:MTR-obj} holds. Define, for each $d\in\mathcal T$,
        \begin{align*}
            \mathcal F_d^{\mathtt{MTR}}
            =
            \left\{
            (f_d^s)_{s\in\mathcal T} \in \Psi(\mathcal Y)^{\mathcal T}:
            \begin{array}{l}
                \text{there exist } (\Omega_d, \mathcal{A}_d, \mathbb{P}_d) \text{ and random elements } (\xi_i^s)_{s\in\mathcal T}
                \text{ such that }\\
                \qquad \mathbb{P}_d[\xi_i^d \in A] = \P{X_i \in A \mid D_i = d},\\
                \qquad s_1 \le s_2 \Rightarrow \xi_i^{s_1} \preceq \xi_i^{s_2}, \mathbb{P}_d\text{-a.s.},\\
                \qquad\mathbb{E}_{\mathbb{P}_d}[\xi_i^s]=f_d^s \text{ for all }s\in\mathcal T,\\
                \qquad\mathbb{E}_{\mathbb{P}_d}[\|\xi_i^s\|_{L^2(\nu)}^2] < \infty \text{ for all }s\in\mathcal T.
            \end{array}
            \right\}.
        \end{align*}
        Then the sharp identified set of $\E{\Psi(Y_i(t))}$ is given by
        \begin{align*}
            \mathcal S_{\mathtt{MTR}}(t)
            =
            \left\{
            \sum_{d\in\mathcal T}\P{D_i=d}\, f_d^t:
            (f_d^s)_{s\in\mathcal T}
            \in
            \mathcal F_d^{\mathtt{MTR}}
            \text{ for all }d\in\mathcal T
            \right\}.
        \end{align*}

        \item[(ii)] Suppose that Assumption \ref{assumption:MTS-obj} holds. 
        Define, for each $d\in\mathcal T$,
        \begin{align*}
            \mathcal F_d^0 = 
            \left\{
            (f_d^s)_{s\in\mathcal T} \in \Psi(\mathcal Y)^{\mathcal T}:
            \begin{array}{l}
                \text{there exist } (\Omega_d, \mathcal{A}_d, \mathbb{P}_d) \text{ and random elements } (\xi_i^s)_{s\in\mathcal T}
                \text{ such that }\\
                \qquad \mathbb{P}_d[\xi_i^d \in A] = \P{X_i \in A \mid D_i = d},\\
                \qquad\mathbb{E}_{\mathbb{P}_d}[\xi_i^s]=f_d^s \text{ for all }s\in\mathcal T,\\
                \qquad\mathbb{E}_{\mathbb{P}_d}[\|\xi_i^s\|_{L^2(\nu)}^2] < \infty \text{ for all }s\in\mathcal T.
            \end{array}
            \right\}.
        \end{align*}
        Then the sharp identified set of $\E{\Psi(Y_i(t))}$ is given by
        \begin{align*}
            \mathcal S_{\mathtt{MTS}}(t) = 
            \left\{ \sum_{d\in\mathcal T}\P{D_i=d}\, f_d^t: 
            \begin{array}{l} 
                (f_d^s)_{s\in\mathcal T}\in\mathcal F_d^0 \text{ for all }d\in\mathcal T,\\ 
                d_1\le d_2 \Rightarrow f_{d_1}^s\preceq f_{d_2}^s \text{ for all }s\in\mathcal T 
            \end{array} 
            \right\}.
        \end{align*}

        \item[(iii)] Suppose that Assumption \ref{assumption:MIV-obj} holds.
        Suppose also that $\P{D_i=d,Z_i=z}>0$ for every $(d,z) \in \mathcal T \times \mathcal Z$.
        Define, for each $(d,z)\in\mathcal T \times \mathcal Z$,
        \begin{align*}
            \mathcal F_{d,z}^0 = \left\{
            (f_{d,z}^s)_{s\in\mathcal T} \in \Psi(\mathcal Y)^{\mathcal T}:
            \begin{array}{l}
                \text{there exist } (\Omega_{d,z}, \mathcal{A}_{d,z}, \mathbb{P}_{d,z}) \text{ and random elements } (\xi_i^s)_{s\in\mathcal T}
                \text{ such that }\\
                \qquad \mathbb{P}_{d,z}[\xi_i^d \in A] = \P{X_i \in A \mid D_i = d, Z_i = z},\\
                \qquad\mathbb{E}_{\mathbb{P}_{d,z}}[\xi_i^s]=f_{d,z}^s \text{ for all }s\in\mathcal T,\\
                \qquad\mathbb{E}_{\mathbb{P}_{d,z}}[\|\xi_i^s\|_{L^2(\nu)}^2] < \infty \text{ for all }s\in\mathcal T.
            \end{array}
            \right\}.
        \end{align*}
        For any collection $f=\{f_{d,z}^s:s\in\mathcal T,\ d\in\mathcal T,\ z\in\mathcal Z\}$, define, for each $s\in\mathcal T$ and $z\in\mathcal Z$,
        \begin{align*}
            m_z^s(f) = \sum_{d\in\mathcal T} \P{D_i=d\mid Z_i=z}\, f_{d,z}^s
        \end{align*}
        Then the sharp identified set of $\E{\Psi(Y_i(t))}$ is given by
        \begin{align*}
            \mathcal S_{\mathtt{MIV}}(t) = \left\{ \sum_{z\in\mathcal Z}\P{Z_i=z}\, m_z^t(f):
            \begin{array}{l} 
                (f_{d,z}^s)_{s\in\mathcal T}\in\mathcal F_{d,z}^0 \text{ for all }(d,z)\in\mathcal T\times\mathcal Z,\\
                z_1\le z_2 \Rightarrow m_{z_1}^s(f)\preceq m_{z_2}^s(f) \text{ for all }s\in\mathcal T 
            \end{array}
            \right\}.
        \end{align*}

        \item[(iv)] Suppose that Assumptions \ref{assumption:MTR-obj} and \ref{assumption:MTS-obj} hold.
        Then the sharp identified set of $\E{\Psi(Y_i(t))}$ is given by
        \begin{align*}
            \mathcal S_{\mathtt{MTR+MTS}}(t) = 
            \left\{ \sum_{d\in\mathcal T}\P{D_i=d}\, f_d^t: 
            \begin{array}{l} 
                (f_d^s)_{s\in\mathcal T} \in \mathcal F_d^{\mathtt{MTR}} \text{ for all }d\in\mathcal T,\\
                d_1\le d_2 \Rightarrow f_{d_1}^s\preceq f_{d_2}^s \text{ for all }s\in\mathcal T 
            \end{array} 
            \right\}.
        \end{align*}
    \end{description}
\end{lemma}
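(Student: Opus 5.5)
Each part of Lemma~\ref{lem:obj} has the same two-inclusion structure. We show (a) that the true $\E{\Psi(Y_i(t))}$ of any data-generating process compatible with the observed law of $(D_i,Z_i,X_i)$ and the maintained assumption lies in the displayed set (validity), and (b) that every element of the displayed set is $\E{\Psi(Y_i(t))}$ for some such process (sharpness). The organising tool is the law of iterated expectations on the finite partition $\{D_i=d\}$ (or $\{D_i=d,Z_i=z\}$ in part (iii)). Under Assumption~\ref{assumption:metric}, $\Psi(Y_i(s))$ is square integrable, so the conditional Bochner expectations exist. Since $\Psi(\mathcal Y)$ is closed and convex, these conditional means lie in $\Psi(\mathcal Y)$; this is the point that justifies writing $f_d^s\in\Psi(\mathcal Y)$.

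\textbf{Validity.} For part (i), fix a compatible model and set $f_d^s=\E{X_i^s\mid D_i=d}$. Take $(\Omega_d,\mathcal A_d,\mathbb P_d)$ to be the original space conditioned on $\{D_i=d\}$ (positive probability by assumption), with $\xi_i^s=X_i^s$. Then $\xi_i^d=X_i$ on this event, so its law matches the observed conditional law. Assumption~\ref{assumption:MTR-obj} gives $\xi_i^{s_1}\preceq\xi_i^{s_2}$ almost surely, and moments are finite. Hence $(f_d^s)_s\in\mathcal F_d^{\mathtt{MTR}}$, and $\E{\Psi(Y_i(t))}=\sum_d\P{D_i=d}f_d^t$.

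For part (ii), the same construction without the ordering gives membership in $\mathcal F_d^0$. Assumption~\ref{assumption:MTS-obj}, read coordinatewise, is exactly $f_{d_1}^s\preceq f_{d_2}^s$ for $d_1\le d_2$. One must check that the conditional expectation of $X_i^s$ evaluated at $k$ agrees $\nu$-a.e. with $\E{X_i^s(k)\mid D_i=d}$; this is a Fubini step using $\sigma$-finiteness of $\nu$.

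Part (iv) combines (i) and (ii). For part (iii), condition on $\{D_i=d,Z_i=z\}$. Then $m_z^s(f)=\E{X_i^s\mid Z_i=z}$ by iterated expectations within $Z_i=z$, so Assumption~\ref{assumption:MIV-obj} becomes the ordering of the $m_z^s$. Averaging over $z$ gives the target.

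\textbf{Sharpness.} Given elements satisfying the constraints, we build a joint model. Draw $(D_i,Z_i)$ from its observed law. Conditional on $D_i=d$ (resp.\ $(d,z)$), draw the vector $(\xi^s)_{s\in\mathcal T}$ from $\mathbb P_d$ (resp.\ $\mathbb P_{d,z}$), transported to a common product space, and set $\Psi(Y_i(s))=\xi^s$ and $Y_i(s)=\Psi^{-1}(\xi^s)$.

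For this to be well defined, $\xi^s$ must take values in $\Psi(\mathcal Y)$ almost surely. This is the main obstacle. The definitions only require $f_d^s\in\Psi(\mathcal Y)$, not that the $\xi^s$ do. I would read the random elements in $\mathcal F$ as $\Psi(\mathcal Y)$-valued, as is implicit since they are meant to be potential outcomes. Failing that, I would argue that they can be replaced by the degenerate choice $\xi^s\equiv f_d^s$ for $s\ne d$. This replacement preserves means and the coupling with $\xi^d$ in parts (ii) and (iii). In parts (i) and (iv), however, the a.s.\ ordering constraint may fail under it, so there the $\Psi(\mathcal Y)$-valued reading is needed.

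With the model built, the observed law of $X_i=\xi^{D_i}$ given $D_i=d$ is reproduced by construction, and $\mathcal Y$ is known. The monotonicity assumption then holds: almost surely for MTR, and via the mean constraints for MTS and MIV, by the same Fubini identification of conditional means coordinatewise. The finite second moments give Assumption~\ref{assumption:metric}. Finally, $\E{\Psi(Y_i(t))}$ equals the given element, which completes the proof.
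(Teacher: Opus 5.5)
Your proposal is correct and follows essentially the same route as the paper's proof. For validity, you restrict a compatible model to each cell $\{D_i=d\}$ under the conditional measure and read off $f_d^s=\mathbb E[X_i^s\mid D_i=d]$; for sharpness, you glue the cell-specific models with weights $\mathbb P[D_i=d]$. The paper does this gluing explicitly through the disjoint union $\bigsqcup_{d}\{d\}\times\Omega_d$, which avoids your vaguer ``transport to a common product space''. Your remark that the $\xi_i^s$ must be $\Psi(\mathcal Y)$-valued for the glued process to define genuine potential outcomes is something the paper also assumes without comment, so your treatment is, if anything, more careful than the original.
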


\section{Proofs} \label{sec:proof}

\begin{proof}[Proof of Lemma \ref{lem:obj}]
    We prove part (i), and the proofs of parts (ii)--(iv) are analogous and hence omitted.
    First, we prove validity of $\mathcal S_{\mathtt{MTR}}(t)$.
    Consider any candidate (embedded) potential-outcomes process $\{X_i^s:s\in\mathcal T\}$ that is consistent with the observed pair $(X_i, D_i)$ and satisfy Assumption \ref{assumption:MTR-obj}, i.e., $X_i^{D_i}=X_i$ almost surely and $s_1\le s_2 \Rightarrow X_i^{s_1}\preceq X_i^{s_2}$ a.s.
    Define $\Omega_d := \{\omega\in\Omega: D_i(\omega) = d\}$, $\mathcal A_d := \{A \cap \Omega_d : A \in \mathcal A\}$, and $\mathbb P_d(A) = \P{A \mid D_i = d}$. We then define $\xi_{i}^s(\omega) := X_i^s(\omega), s\in\mathcal T$. By construction $\xi_{i}^s$ satisfies $\mathbb{P}_d(\xi_{i}^d \in A) = \mathbb{P}(X_i^d \in A \mid D_i=d) = \mathbb{P}(X_i \in A \mid D_i=d)$; $s_1\le s_2 \Rightarrow \xi_i^{s_1} \preceq \xi_i^{s_2}$, $\mathbb{P}_d$-a.s.; and $\mathbb{E}_{\mathbb P_d} [\|\xi_i^s\|_{L^2(\nu)}^2] = \mathbb{E}[\|X_i^s\|_{L^2(\nu)}^2 \mid D_i=d] < \infty$ by Assumption \ref{assumption:metric} and $\P{D_i=d}>0$.
    Here, define $f_d^s := \mathbb{E}[X_i^s \mid D_i=d]$. Then, $f_d^s = \mathbb{E}_{\mathbb P_d}[\xi_i^s]$, and 
    \begin{align*}
        \E{X_i^t}
        = \sum_{d\in\mathcal T}\P{D_i=d} f_d^t \in \mathcal S_{\mathtt{MTR}}(t),
    \end{align*}
    showing validity.
    
    Next, we prove sharpness. Take an arbitrary element of $\mathcal S_{\mathtt{MTR}}(t)$.
    Then there exists $(f_d^s)_{s\in\mathcal T} \in \mathcal F_d^{\mathtt{MTR}}$ for each $d\in\mathcal T$ such that the chosen arbitrary element is equal to $\sum_{d\in\mathcal T}\P{D_i=d} f_d^t$.
    By the definition of $\mathcal F_d^{\mathtt{MTR}}$, there exists a probability space $(\Omega_d, \mathcal A_d, \mathbb P_d)$ and a random element $(\xi_{i,d}^s)_{s\in\mathcal T}$ such that $\mathbb{P}_d[\xi_{i,d}^d \in A] = \P{X_i \in A \mid D_i = d}$, $s_1 \le s_2 \Rightarrow \xi_{i,d}^{s_1} \preceq \xi_{i,d}^{s_2}$ holds $\mathbb P_d$-a.s., $\mathbb{E}_{\mathbb{P}_d}[\xi_{i,d}^s]=f_d^s$ and $\mathbb{E}_{\mathbb{P}_d}[\|\xi_{i,d}^s\|_{L^2(\nu)}^2] < \infty$ for all $s\in\mathcal T$.
    We now construct a single probability space by taking the disjoint union of the cell-specific probability spaces. Define $\tilde \Omega := \bigsqcup_{d\in\mathcal T} (\{d\}\times \Omega_d)$.
    For each $B\subseteq \tilde \Omega$, define $B_d := \{\omega\in\Omega_d : (d,\omega)\in B\}$, and define $\tilde{\mathcal A} := \{B \subseteq \tilde\Omega : B_d \in \mathcal A_d \text{ for every }d\in\mathcal T\}$.
    Finally, define $\tilde{\mathbb{P}}(B) := \sum_{d\in\mathcal T} \P{D_i=d} \mathbb{P}_d(B_d)$ for $B\in\tilde{\mathcal A}$, and $\tilde D(d,\omega) := d$ and $\tilde X_i^s(d,\omega) := \xi_{i,d}^s(\omega), s\in\mathcal T$ for $(d,\omega)\in\{d\}\times\Omega_d$.
    Then, it follows that $\tilde{\mathbb P}(\tilde X_i^{\tilde D_i} \in A, \tilde D_i = d) = \mathbb{P}[D_i=d] \P{X_i \in A \mid D_i = d} = \P{X_i\in A, D_i = d}$, proving the observational equivalence.
    Furthermore, $s_1 \le s_2 \Rightarrow \tilde X_i^{s_1}=\xi_{i,d}^{s_1} \preceq \xi_{i,d}^{s_2} = \tilde X_i^{s_2}$, $\mathbb P_d$-a.s., implying the same conclusion follows $\tilde{\mathbb P}$-a.s. This shows that the constructed data-generating process satisfies MTR.
    Finally, $\mathbb{E}_{\tilde{\mathbb{P}}}[\tilde X_i^t] = \sum_d \P{D_i=d}\mathbb{E}_{\mathbb{P}_d}[\xi_{i,d}^t] = \sum_d \P{D_i=d} f_d^t$, which proves the attainability.
\end{proof}

\begin{proof}[Proof of Proposition \ref{prop:gen-simple}]
    Fix $\mathtt M \in \{\mathtt{MTR},\mathtt{MTS},\mathtt{MIV},\mathtt{MTR+MTS}\}$.
    We first show that $\mathcal B_{\mathtt M}(t)$ is a valid outer identified set for $\E{\Psi(Y_i(t))}$.
    Consider any candidate potential-outcomes process $\{X_i^s:s\in\mathcal T\}$ that is consistent with observed outcomes and satisfies the maintained monotonicity assumption $\mathtt M$. 
    There exists a set $\mathcal K^\ast\subseteq\mathcal K$ with $\nu(\mathcal K\backslash\mathcal K^\ast)=0$ such that, for every $k\in\mathcal K^\ast$, the scalar random variables $\{X_i^s(k):s\in\mathcal T\}$ are square-integrable and satisfy the scalar version of the maintained monotonicity restriction.
    Fix $k\in\mathcal K^\ast$. If $k\in\mathcal K_+$, the restriction has the same monotonicity direction as the scalar outcome. If $k\in\mathcal K_-$, the same scalar argument is applied to $-X_i^t(k)$. If $k\in\mathcal K_0$, no monotonicity restriction is imposed and only observational equivalence together with the support restriction is used.
    Hence, the scalar bounds of \cite{Manski:1997} and \cite{Manski_Pepper:2000} imply $\ell^{\mathtt M}_{t,k} \le \E{X_i^t(k)} \le u^{\mathtt M}_{t,k}$ for the fixed $k$.
    Then, for almost every $k\in\mathcal K$, $\ell^{\mathtt M}_{t,k} \le \E{X_i^t(k)} \le u^{\mathtt M}_{t,k}$.
    We therefore have $\E{X_i^t} = \E{\Psi(Y_i(t))} \in \mathcal B_{\mathtt M}(t)$. By Assumption \ref{assumption:metric}, $\E{\Psi(Y_i(t))} \in \Psi(\mathcal Y)$, so that it belongs to $\Psi(\mathcal Y)\cap \mathcal B_{\mathtt M}(t)$. 
    Finally, by the representation of the Fr\'echet mean under Assumption \ref{assumption:metric}, $\mathbb{E}_{\oplus}[Y_i(t)] = \Psi^{-1}\left(\E{\Psi(Y_i(t))}\right)$. Therefore every observationally equivalent counterfactual process satisfying $\mathtt M$ yields $\mathbb{E}_{\oplus}[Y_i(t)] \in \Psi^{-1}(\Psi(\mathcal Y)\cap \mathcal B_{\mathtt M}(t))$, proving the validity of $\Psi^{-1}(\Psi(\mathcal Y)\cap \mathcal B_{\mathtt M}(t))$. 
\end{proof}

\begin{proof}[Proof of Proposition \ref{prop:simple-sharp}]
    Fix $\mathtt M \in \{\mathtt{MTR},\mathtt{MTS},\mathtt{MIV},\mathtt{MTR+MTS}\}$.
    Since the validity is shown in Proposition \ref{prop:gen-simple}, it remains only to establish attainability. Let $\mathbb P_0$ denote the observed joint distribution of $(X_i, D_i, Z_i)$, where $Z_i$ is omitted when it is not relevant. Take any $b = (b_1,\ldots,b_p)^\top \in \mathcal B_{\mathtt{M}}(t)$.
    For every coordinate $k\in\{1,\ldots,p\}$, $b_k$ belongs to the corresponding scalar sharp identified interval. Hence, there exist a probability measure $\mathbb P_k$ and a scalar potential-outcome process $(U_{ik}^s)_{s \in \mathcal T}$ such that $\mathbb{E}_{\mathbb P_k}[U_{ik}^t] = b_k$, the process $(U_{ik}^s)_{s\in\mathcal T}$ satisfies the scalar restriction corresponding to $\mathtt M$, and observational equivalence in the sense that $\mathbb P_k[(U_{ik}^{D_i}, D_i, Z_i) \in A] = \mathbb P_0[(X_i(k), D_i, Z_i) \in A]$ for every Borel set $A$.

    We now combine these scalar models while preserving the observed joint distribution of the full vector $X_i$.
    Write $U_{ik} = (U_{ik}^s)_{s\in\mathcal T}$. Define the conditional joint distribution of $U_{i1},\ldots, U_{ip}$ by
    \begin{align*}
        \tilde{\mathbb P}\left[U_{i1} \in A_1,\ldots, U_{ip} \in A_p \mid X_i = x, D_i=d, Z_i = z\right]
        = 
        \prod_{k=1}^p {\mathbb P}_k\left[U_{ik} \in A_k \mid U_{ik}^{D_i} = x_k, D_i=d, Z_i = z\right].
    \end{align*}
    The marginal distribution of $(X_i, D_i, Z_i)$ under $\tilde{\mathbb P}$ is set equal to $\mathbb P_0$, that is,
    \begin{align*}
        &\tilde{\mathbb P}\left[(X_i, D_i, Z_i) \in A, U_{i1} \in A_1,\ldots, U_{ip} \in A_p\right]\\
        &\quad=
        \int_A \prod_{k=1}^p {\mathbb P}_k\left[U_{ik} \in A_k \mid U_{ik}^{D_i} = x_k, D_i=d, Z_i = z\right]\,d\mathbb P_0(x,d,z).
    \end{align*}
    Here, let $\mathbb P_{0,k}$ denote the marginal distribution of $(X_i(k), D_i, Z_i)$ under $\mathbb P_0$.
    Then
    \begin{align*}
        &\tilde{\mathbb P}\left[U_{ik} \in A_k, D_i \in B, Z_i \in C\right]\\
        &\quad=
        \int_{\mathbb{R} \times B \times C} {\mathbb P}_k\left[U_{ik} \in A_k \mid U_{ik}^{D_i} = x_k, D_i=d, Z_i = z\right]\,d\mathbb P_{0,k}(x,d,z).
    \end{align*}
    By the observational equivalence $\mathbb P_k[(U_{ik}^{D_i}, D_i, Z_i) \in A] = \mathbb P_0[(X_i(k), D_i, Z_i) \in A]$, $\mathbb P_{0,k}$ is also the distribution of $(U_{ik}^{D_i}, D_i, Z_i)$.
    Therefore,
    \begin{align*}
        &{\mathbb P}_k\left[U_{ik} \in A_k, D_i \in B, Z_i \in C\right]\\
        &\quad=
        \int_{\mathbb{R} \times B \times C} {\mathbb P}_k\left[U_{ik} \in A_k \mid U_{ik}^{D_i} = x_k, D_i=d, Z_i = z\right]\,d\mathbb P_{0,k}(x,d,z),
    \end{align*}
    and hence, $\tilde{\mathbb P}[U_{ik} \in A_k, D_i \in B, Z_i \in C] = {\mathbb P}_{k}[U_{ik} \in A_k, D_i \in B, Z_i \in C]$. This implies that 
    \begin{align*}
        \mathbb{E}_{\tilde{\mathbb P}}\left[U_{ik}^t\right] 
        = \mathbb{E}_{{\mathbb P}_k}\left[U_{ik}^t\right] = b_k.
    \end{align*}
    We next verify consistency with the observed vector. For every $k\in\{1,\ldots,p\}$,
    \begin{align*}
        \tilde{\mathbb P}\left[U_{ik}^{D_i} \neq X_i(k)\right] 
        &= \int \tilde{\mathbb P}\left[U_{ik}^{d} \neq x_k \mid X_i = x, D_i = d, Z_i = z\right]\,d \mathbb{P}_{0}(x,d,z) \\
        &= \int {\mathbb P}_k\left[U_{ik}^{d} \neq x_k \mid U_{ik}^{D_i} = x_k, D_i = d, Z_i = z\right]\,d \mathbb{P}_{0}(x,d,z) = 0,
    \end{align*}
    that is, $U_{ik}^{D_i} = X_i(k)$ holds $\tilde{\mathbb P}$-almost surely for every $k\in\{1,\ldots,p\}$.
    Define $U_i^s = (U_{i1}^s,\ldots U_{ip}^s)^\top$. Then we obtain $U_i^{D_i} = X_i$, $\tilde{\mathbb P}$-almost surely.
    Thus, the construction preserves the full joint distribution of the observed embedded outcome, not merely its coordinatewise marginal distributions.

    Because $\Psi(\mathcal Y) = \mathbb{R}^p$, any vector $U_i^s \in \mathbb R^p$ is a feasible embedded outcome.
    Define $\widetilde Y_i(s) = \Psi^{-1}(U_i^s)$ for $s\in\mathcal T$.
    Since $\Psi$ is now a bijective isometry, $\Psi^{-1}$ is continuous and hence measurable. Moreover,
    \begin{align*}
        \widetilde Y_i(D_i) = \Psi^{-1}(U_i^{D_i}) = \Psi^{-1}(X_i) = Y_i,
        \quad \widetilde{\mathbb P}\text{-almost surely}.
    \end{align*}
    Therefore, the constructed potential-outcome process is observationally equivalent to the original model, satisfies the maintained restriction, and attains $\mathbb E_{\widetilde{\mathbb P}}[\Psi(\widetilde Y_i(t))] = \mathbb E_{\widetilde{\mathbb P}}[U_i^t] = b$.
    Since $b\in\mathcal B_{\mathtt M}(t)$ was arbitrary, every point in $\mathcal B_{\mathtt M}(t)$ is attainable. Together with Proposition~\ref{prop:gen-simple}, this proves that $\mathcal B_{\mathtt M}(t)$ is the sharp identified set for $\E{\Psi(Y_i(t))}$.
    Since $\Psi$ is one-to-one, the sharp identified set for the \f mean is $\Psi^{-1}(\mathcal B_{\mathtt M}(t))$.
\end{proof}

\begin{proof}[Proof of Proposition \ref{prop:dist}]
    \textit{Proof of part (i)}.
    Write $Q_i(q) = Q_{Y_i}(q) = Q_{Y_i(D_i)}(q)$.  
    We first derive valid lower and upper bounds. 
    Take any candidate $(H_i^s)_{s\in \mathcal T}$ that is consistent with the observed data and MTR. Then $H_i^{D_i}=Q_i$ and $s_1\le s_2$ implies $H_i^{s_1}(q)\le H_i^{s_2}(q)$ almost every $q\in(0,1)$.
    The latter implies that $H_i^{s_1}(q)\le H_i^{s_2}(q)$ for every $q\in(0,1)$. 
    This is because, if it failed at some $q_0$, left continuity would imply that it also failed on an interval immediately to the left of $q_0$, contradicting the almost-everywhere inequality.
    Note that $H_i^t(q)\ge Q_i(q)1\{D_i\le t\}+K_0 1\{D_i>t\}$ and $H_i^t(q)\le Q_i(q)1\{D_i\ge t\}+K_1 1\{D_i<t\}$.
    Taking expectations, we obtain the pointwise lower and upper bounds provided in the statement of Proposition \ref{prop:dist}.

    It remains to verify that the two bounding functions are themselves attainable elements of the sharp identified set.
    Define the counterfactual quantile paths by
    \begin{align*}
        \underline H_i^s(q)
        &\coloneqq
        Q_i(q)\mathbf 1\{D_i\le s\}
        +
        K_0\mathbf 1\{D_i>s\},\quad s\in\mathcal T\\
        \overline H_i^s(q)
        &\coloneqq
        Q_i(q)\mathbf 1\{D_i\ge s\}
        +
        K_1\mathbf 1\{D_i<s\},\quad s\in\mathcal T.
    \end{align*}
    For every $s\in\mathcal T$, both $\underline H_i^s$ and $\overline H_i^s$ belong to $\mathcal Q_{[K_0,K_1]}$.
    Moreover, $\underline H_i^{D_i}=\overline H_i^{D_i}=Q_i$.
    MTR also holds.
    Finally, $\mathbb{E}[\underline H_i^t(q)] = \underline{Q}^{\mathtt{MTR}}(\cdot;t)$ and $\mathbb{E}[\overline H_i^t(q)] = \overline{Q}^{\mathtt{MTR}}(\cdot;t)$.
    Thus the lower and upper bounding functions are themselves attainable, showing the uniform sharpness.\hfill $\square$

    \noindent \textit{Proof of part (ii)}.
    The validity can be shown similarly to the argument above. We thus show attainability.
    For each $d\in\mathcal T$, define $\overline Q_d(q) = \E{Q_i(q)\mid D_i=d} \in\mathcal Q_{[K_0,K_1]}$.
    For the lower bound, define, for every $s\in\mathcal T$,
    \begin{align*}
        \xi_{i,L}^s(q)
        =
        \begin{cases}
            K_0, & D_i<s,\\
            Q_i(q), & D_i=s,\\
            \overline Q_s(q), & D_i>s.
        \end{cases}
    \end{align*}
    For each $s\in\mathcal T$, $\xi_{i,L}^s$ belongs to $\mathcal Q_{[K_0,K_1]}$ almost surely and satisfies $\xi_{i,L}^{D_i} = Q_i$ almost surely. Moreover,
    \begin{align*}
        \E{\xi_{i,L}^s(q)\mid D_i=d}
        =
        \begin{cases}
            K_0, & d<s,\\
            \overline Q_s(q), & d\ge s.
        \end{cases}
    \end{align*}
    Thus, for every $s\in\mathcal T$, the conditional mean quantile function is weakly increasing in $d$, so that $(\xi_{i,L}^s)_{s\in\mathcal T}$ satisfies Assumption \ref{assumption:D-MTS}. 
    Its target mean is
    \begin{align*}
        \E{\xi_{i,L}^t(q)}
        =
        K_0\P{D_i<t}+\overline Q_t(q)\P{D_i\ge t}
        =
        \underline Q_{\mathtt{MTS}}(q;t).
    \end{align*}
    Therefore, $\underline Q_{\mathtt{MTS}}(\cdot;t)$ is attainable.
    The upper bound can be treated similarly.
    \hfill $\square$

    \noindent \textit{Proof of part (iii)}. 
    We begin by rewritng the sharp identified set $\mathcal S_{\mathtt{MIV}}$ in Lemma \ref{lem:obj}(iii).
    Let $\mathcal F_{\mathtt{MIV}}$ denote the feasible collection of $f=\{f_{d,z}^s:s\in\mathcal T,\ d\in\mathcal T,\ z\in\mathcal Z\}$, namely,
    \begin{align*}
        \mathcal F_{\mathtt{MIV}} =
        \left\{
            f:
            \begin{array}{l} 
                (f_{d,z}^s)_{s\in\mathcal T}\in\mathcal F_{d,z}^0 \text{ for all }(d,z)\in\mathcal T\times\mathcal Z,\\
                z_1\le z_2 \Rightarrow m_{z_1}^s(f)\preceq m_{z_2}^s(f) \text{ for all }s\in\mathcal T 
            \end{array}
        \right\},
    \end{align*}
    where $m_z^s(f) = \sum_{d\in\mathcal T}\P{D_i=d\mid Z_i=z}f_{d,z}^s$.
    Take any $f\in\mathcal F_{\mathtt{MIV}}$.
    For each $z\in\mathcal Z$, define 
    \begin{align*}
        v_z^f(q) = 
        \frac{\sum_{d\neq t}\P{D_i=d\mid Z_i=z}f_{d,z}^t(q)}{\P{D_i\neq t\mid Z_i=z}}.
    \end{align*}
    Since $\mathcal Q_{[K_0, K_1]}$ is convex, $v_z^f\in\mathcal Q_{[K_0,K_1]}$ for every $z\in\mathcal Z$, because it is a convex conbination of the quantile functions $f_{d,z}^t$.
    Furthermore,
    \begin{align*}
        [m_z^t(f)](q)
        &=
        \P{D_i=t\mid Z_i=z}f_{t,z}^t(q)
        +
        \sum_{d\neq t}
        \P{D_i=d\mid Z_i=z}f_{d,z}^t(q) \\
        &=
        \E{Q_i(q)\mathbf{1}\{D_i=t\}\mid Z_i=z}
        +
        \P{D_i\neq t\mid Z_i=z}v_z^f(q) = g_z(q;t,v^f),
    \end{align*}
    which shows, along with $f\in\mathcal F_{\mathtt{MIV}}$ and the same argument upgrading the a.e. order to the pointwise order used above, that $v^f\in\mathcal V_t$.
    It therefore follows that $\mathcal{S}_{\mathtt{MIV}} \subseteq \{\sum_{z\in\mathcal Z} \P{Z_i =z} g_z(\cdot; t,v): v\in\mathcal V_t\}$.
    
    Conversely, take any $v\in\mathcal V_t$. Let $f^\circ\in\mathcal F_{\mathtt{MIV}}$.
    Define $f^v$ by
    \begin{align*}
        f_{d,z}^{s,v}
        =
        \begin{cases}
            f_{d,z}^{s,\circ},
            & s\neq t,\\
            f_{t,z}^{t,\circ},
            & s=t,\ d=t,\\
            v_z,
            & s=t,\ d\neq t.
        \end{cases}
    \end{align*}
    For every cell $(d,z)$, the observed coordinate is preserved, so that $(f_{d,z}^{s,v})_{s\in\mathcal T} \in \mathcal F_{d,z}^0$ for every $(d,z)\in\mathcal T\times\mathcal Z$.
    For $s\neq t$, the MIV restrictions are inherited from $f^\circ$.
    For $s=t$, we can show that $[m_z^t(f^v)](q) = g_z(q;t,v)$, which proves, along with $v\in\mathcal V_t$, that $f^v\in\mathcal F_{\mathtt{MIV}}$.
    Therefore, we obtain the reversed inclusion, and in turn have that
    \begin{align*}
        \mathcal S_{\mathtt{MIV}}(t) 
        &=
        \left\{ \sum_{z\in\mathcal Z}\P{Z_i=z}\, g_z(\cdot\,;t,v):
        \begin{array}{l} 
            v \in \mathcal{V}_t
        \end{array}
        \right\}.
    \end{align*}
    
    We next show that $\ell^t,u^t\in\mathcal V_t$.
    Fix $z\in\mathcal Z$. 
    Since every $v_z\in\mathcal Q_{[K_0,K_1]}$ is nondecreasing, $\ell_z^t$ is nondecreasing. Also, $K_0\le\ell_z^t(q)\le K_1$ for all $q\in(0,1)$. For $z_1\le z_2$, the definition of $\mathcal V_t$ implies that $g_{z_1}(q;t,v)\le g_{z_2}(q;t,v)$ for every $v\in\mathcal V_t$ and $q\in(0,1)$. Since $\P{D_i\neq t\mid Z_i=z}\ge0$, we have $g_{z_1}(q;t,\ell^t) = \inf_{v\in\mathcal V_t}g_{z_1}(q;t,v) \le \inf_{v\in\mathcal V_t}g_{z_2}(q;t,v) = g_{z_2}(q;t,\ell^t)$.
    Define $\ell_z^{t,-}(q)\coloneqq\lim_{r\uparrow q}\ell_z^t(r)$ and $\ell^{t,-}\coloneqq(\ell_z^{t,-})_{z\in\mathcal Z}$. Since $\ell_z^t$ is nondecreasing and bounded, $\ell_z^{t,-}$ is nondecreasing, bounded, and left continuous. Taking $r\uparrow q$ in the preceding inequality yields $g_{z_1}(q;t,\ell^{t,-})\le g_{z_2}(q;t,\ell^{t,-})$ for every $z_1\le z_2$ and $q\in(0,1)$. Hence, $\ell^{t,-}\in\mathcal V_t$. Therefore, $\ell_z^t(q)\le\ell_z^{t,-}(q)$ by the definition of $\ell_z^t$. On the other hand, monotonicity of $\ell_z^t$ implies $\ell_z^{t,-}(q)\le\ell_z^t(q)$. Thus, $\ell_z^t(q)=\ell_z^{t,-}(q)$ for every $q\in(0,1)$. It follows that $\ell_z^t$ is left continuous, and hence $\ell^t\in\mathcal V_t$.
    Similarly, $u_z^t \in\mathcal V_t$.
    This shows that $\underline{Q}_{\mathtt{MIV}}(q;t)$ and $\overline{Q}_{\mathtt{MIV}}(q;t)$ are elements of $\mathcal S_{\mathtt{MIV}}$.
    
    Finally, we show that $\underline{Q}_{\mathtt{MIV}}(q;t)$ and $\overline{Q}_{\mathtt{MIV}}(q;t)$ are the envelopes of $\mathcal S_{\mathtt{MIV}}$.
    Recall that, for every $v\in\mathcal V_t$, $\ell_z^t(q)\le v_z(q)\le u_z^t(q)$ for all $z\in\mathcal Z$ and $q\in(0,1)$. 
    Hence, $g_z(q;t,\ell^t) \leq g_z(q;t,v) \leq g_z(q;t,u^t)$.
    This in turn implies that
    \begin{align*}
        \underline{Q}_{\mathtt{MIV}}(q;t)
        &=
        \sum_{z\in\mathcal Z} g_z(q;t,\ell^t)\P{Z_i=z}\\
        &\leq 
        \sum_{z\in\mathcal Z} g_z(q;t,v)\P{Z_i=z}
        \leq 
        \sum_{z\in\mathcal Z} g_z(q;t,u^t)\P{Z_i=z}
        =
        \overline{Q}_{\mathtt{MIV}}(q;t).
    \end{align*}
    Thus, $\underline{Q}_{\mathtt{MIV}}$ and $\overline{Q}_{\mathtt{MIV}}$ are the uniformly sharp lower and upper bounds, respectively.\hfill $\square$

    \noindent \textit{Proof of part (iv)}.
    By Lemma \ref{lem:obj}(iv), 
    \begin{align*}
        \mathcal S_{\mathtt{MTR+MTS}}(t) = 
        \left\{ \sum_{d\in\mathcal T}\P{D_i=d}\, f_d^t: 
        \begin{array}{l} 
            (f_d^s)_{s\in\mathcal T} \in \mathcal F_d^{\mathtt{MTR}} \text{ for all }d\in\mathcal T,\\
            d_1\le d_2 \Rightarrow f_{d_1}^s(q)\le f_{d_2}^s(q) \text{ for all }s\in\mathcal T, \text{a.e. }q\in(0,1)
        \end{array} 
        \right\}.
    \end{align*}
    The same a.e.-to-pointwise argument used above implies the inequality for every $q\in(0,1)$.
    Take any $h \in \mathcal S_{\mathtt{MTR+MTS}}(t)$, and let $(f_d^s)_{d,s}$ be a feasible collection such that $h=\sum_{d\in\mathcal T}\P{D_i=d}\, f_d^t$. 
    Then, $f_d^d(q) = \E{Q_i(q)\mid D_i=d}$; for $d<t$, MTR imples $f_d^t(q)\ge f_d^d(q)=\E{Q_i(q)\mid D_i=d}$; and for $d \geq t$, MTS implies $f_d^t(q)\ge f_t^t(q)=\E{Q_i(q)\mid D_i=t}$. Hence we obtain that
    \begin{align*}
        \sum_{d\in\mathcal T} \P{D_i=d}f_d^t(q)
        \geq 
        \sum_{d<t} \E{Q_i(q)\mid D_i=d} \P{D_i=d} + \E{Q_i(q)\mid D_i=t} \P{D_i\geq t},
    \end{align*}
    which shows that the lower bound function is a valid bound. The upper bound can be treated similarly.
        
    It remains to prove attainability.
    Define 
    \begin{align*}
        f_d^{s,L}(q) = \E{Q_i(q)\mid D_i=d\wedge s}.
    \end{align*}
    We first show that $(f_d^{s,L})_{s\in\mathcal T} \in \mathcal F_d^{\mathtt{MTR}}$ for any $d\in\mathcal T$.
    Note that, under MTR and MTS,
    \begin{align*}
        &\E{Q_{i}(q)\mid D_i=d_1} = \E{Q_{Y_i(d_1)}(q)\mid D_i=d_1}\\
        &\quad\le \E{Q_{Y_i(d_2)}(q)\mid D_i=d_1}
        \le \E{Q_{Y_i(d_2)}(q)\mid D_i=d_2} 
        = \E{Q_{i}(q)\mid D_i=d_2}
    \end{align*}
    for $d_1 \le d_2$.
    Let $\mathcal T_{\le d}=\{s\in\mathcal T: s\le d\} = \{s_1<\cdots<s_r=d\}$.
    Lemma \ref{lem:mtr} below shows that there exists an individual-level MTR path for $s\le d$
    \begin{align*}
        \P{\tilde{Q}_i^{s_1}(q) \leq \cdots \leq \tilde{Q}_i^{d}(q) = Q_i(q)\text{ a.e. } q\in(0,1) \mid D_i=d}=1,
    \end{align*}
    and $q\mapsto \tilde{Q}_i^s(q)$ belongs to $\mathcal Q_{[K_0,K_1]}$ almost surely conditional on $D_i=d$ for every $s\in\mathcal T_{\le d}$. 
    Then the inequalities holds for every $q\in(0,1)$ by the same a.e.-to-pointwise argument.
    For $s\le d$, Lemma \ref{lem:mtr} shows that the conditional mean satisfies $\mathbb{E}[\tilde{Q}_i^{s}(q)\mid D_i=d] = \E{Q_i(q)\mid D_i=s}$.
    Hence we obtain $\mathbb{E}[\tilde{Q}_i^{s}(q)\mid D_i=d] = \E{Q_i(q)\mid D_i=d\wedge s} = f_d^{s,L}(q)$ for $s\le d$.
    Set the potential quantile function equal to $Q_i$ for all $s> d$.
    This yields
    \begin{align*}
        (f_d^{s,L})_{s\in\mathcal T} \in \mathcal F_d^{\mathtt{MTR}}.
    \end{align*}
    
    Next, we verify MTS.
    For each fixed $s$, $d \mapsto f_d^{s,L}(q) = \E{Q_i(q)\mid D_i=d\wedge s}$ is nondecreasing, showing $(f_d^{s,L})_{d,s\in\mathcal T}$ satisfy the MTS.
    
    Finally, the constructed $(f_d^{s,L})_{d,s\in\mathcal T}$ implies
    \begin{align*}
        f_d^{t,L}(q)
        =
        \begin{cases}
            \E{Q_{i}(q)\mid D_i=d}, & (d<t),\\
            \E{Q_{i}(q)\mid D_i=t}, & (d\ge t),
        \end{cases}
    \end{align*}
    and therefore
    \begin{align*}
        \sum_{d\in\mathcal T}
        \P{D_i=d}f_d^{t,L}(q)
        &=
        \sum_{d<t}\P{D_i=d} \E{Q_{i}(q)\mid D_i=d}
        +
        \P{D_i\ge t} \E{Q_{i}(q)\mid D_i=t}\\
        &=
        \underline{Q}_{\mathtt{MTR+MTS}}(q;t).
    \end{align*}
    Thus, $\underline{Q}_{\mathtt{MTR+MTS}}(\cdot;t) \in \mathcal S_{\mathtt{MTR+MTS}}$, i.e., it is the uniformly sharp lower bound. The upper bound is analogous.
\end{proof}

\begin{lemma}\label{lem:mtr}
    Let $X:\Omega\times(0,1)\to[K_0,K_1]$ be an $\mathcal A\otimes\mathcal B((0,1))$-measurable map such that $q\mapsto X(\omega,q)\in\mathcal Q_{[K_0,K_1]}$ for $\mathbb P$-almost every $\omega$. 
    Let 
    \begin{align*}
        h_1(q)\le\cdots\le h_r(q)=\E{X(q)},\quad h_j\in\mathcal Q_{[K_0,K_1]},
    \end{align*}
    for every $q\in(0,1)$.
    Then there exist $\mathcal A\otimes\mathcal B((0,1))$-measurable maps $H_j:\Omega\times(0,1)\to[K_0,K_1]$, $j=1,\ldots,r$ such that $q\mapsto H_j(\omega,q)\in\mathcal Q_{[K_0,K_1]}$ for $\mathbb P$-almost every $\omega$,
    \begin{align*}
    \P{H_1(q)\le\cdots\le H_r(q)=X(q)\text{ for almost every }q\in(0,1)}=1,
    \end{align*}
    and $\E{H_j(q)}=h_j(q)$ for almost every $q\in(0,1)$ and $j=1,\ldots,r$.
\end{lemma}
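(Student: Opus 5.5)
The plan is to argue by backward induction on $j$. Since $H_r=X$ is forced, the whole lemma reduces to a \emph{one-step} claim. If $X$ is a random element of $\mathcal Q_{[K_0,K_1]}$ and $h\in\mathcal Q_{[K_0,K_1]}$ satisfies $h(q)\le \E{X(q)}$ for every $q$, then there is a jointly measurable $H$ with $H(\omega,\cdot)\in\mathcal Q_{[K_0,K_1]}$ a.s., $H\le X$ a.e.\ in $q$ a.s., and $\E{H(q)}=h(q)$ for a.e.\ $q$. Applying this with $X=H_{j+1}$ and $h=h_j$ for $j=r-1,\dots,1$ gives the chain. The hypothesis is inherited at each step because $\E{H_{j+1}(q)}=h_{j+1}(q)\ge h_j(q)$, and the ordering $H_j\le H_{j+1}$ holds by construction. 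Countable intersections of full-measure events keep all statements almost sure.

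For the one-step claim I would first try truncation from above, $H(\omega,q)=\min\{X(\omega,q),c(q)\}$, with a deterministic level $c(q)\in[K_0,K_1]$.
\begin{enumerate}
\item For fixed $q$, the map $c\mapsto \E{\min\{X(q),c\}}$ is continuous and nondecreasing. It equals $K_0$ at $c=K_0$ and equals $\E{X(q)}$ at $c=K_1$.
\item By the intermediate value theorem there is a level with $\E{\min\{X(q),c\}}=h(q)$. I would take the smallest such level and call it $c(q)$.
\item Then $H$ lies in $[K_0,K_1]$, satisfies $H\le X$, and is jointly measurable by standard arguments, since $c$ is a measurable selection defined through an infimum over rationals. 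Moreover $\E{H(q)}=h(q)$.
\end{enumerate}

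The main obstacle is that $H(\omega,\cdot)$ must be a quantile function, that is, nondecreasing and left continuous. A minimum of two nondecreasing left-continuous functions is again one, so everything hinges on $c$ being nondecreasing, and this is not automatic. As $q$ increases, $h(q)$ pushes $c$ up, but $X(\cdot,q)$ also increases, which can push $c$ down.

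If truncation fails, I would fall back on the linear interpolation toward the floor,
\begin{align*}
H(\omega,q)=K_0+\lambda(q)\{X(\omega,q)-K_0\},\qquad \lambda(q)=\frac{h(q)-K_0}{\E{X(q)}-K_0}\in[0,1],
\end{align*}
with the convention $\lambda(q)=0$ when $\E{X(q)}=K_0$. This $H$ has the right mean, stays in $[K_0,K_1]$, and lies below $X$. It faces the same difficulty, since $\lambda$ need not be monotone in $q$. Neither construction is guaranteed to work, so the monotonicity of $c$ (or of $\lambda$) is the step the whole argument rests on.

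A further alternative is to combine the two: choose a nondecreasing level $c$ and a nondecreasing weight $\lambda$ so that the resulting family is monotone in $q$ pathwise. I would then take left-continuous modifications, as in the proof of Proposition~\ref{prop:dist}(iii), and use the fact that the lemma only requires the mean identity for a.e.\ $q$, which absorbs the countably many jump points.
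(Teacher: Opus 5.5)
\textbf{There is a genuine gap: the one-step claim is never proved, and the constructions you propose for it cannot work.}

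Your backward-induction reduction to a one-step claim is sound. The inherited hypothesis $h_j\le \mathbb{E}[H_{j+1}(\cdot)]$ holds for every $q$, not only a.e., because both sides are left continuous. But that one-step claim is the whole content of the lemma, and you leave it open.

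Truncation, scaling, and any combination $\min\{K_0+\lambda(q)(X-K_0),\,c(q)\}$ all have the form $H(\omega,q)=\varphi\bigl(q,X(\omega,q)\bigr)$. That is, the value given to a path at $q$ depends only on the path's current value. Every rule of this form fails in the following example:
\begin{itemize}
\item Take $\Omega=\{\omega_1,\omega_2\}$ with probability $1/2$ each, $K_0=0$, $K_1=1$, $X(\omega_1,q)=\mathbf 1\{q>1/2\}$, $X(\omega_2,q)=1$, and $h\equiv 1/2\le\mathbb{E}[X(q)]$.
\item For a.e.\ $q\le 1/2$, the constraint $H(\omega_1,q)\le X(\omega_1,q)=0$ forces $H(\omega_2,q)=1$.
\item For $q>1/2$ both paths equal $1$, so a rule of that form gives $H(\omega_1,q)=H(\omega_2,q)=1/2$.
\item Hence $H(\omega_2,\cdot)$ drops from $1$ to $1/2$ and is not a quantile function, however $c$ and $\lambda$ are chosen, monotone or not.
\end{itemize}
Truncation indeed produces $c=1$ on $(0,1/2]$ and $c=1/2$ on $(1/2,1)$. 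Yet the lemma's conclusion holds here with $H(\omega_1,\cdot)\equiv 0$ and $H(\omega_2,\cdot)\equiv 1$. The path that must carry the mean at low quantiles has to keep carrying it. So the missing idea is a \emph{path-dependent} allocation, which no deterministic level or weight can supply.

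The paper gets this memory from a running maximum. For nondecreasing vectors $x,y,z\in[0,B]^{2^n+1}$ with $z\le x+y$, it sets $a_j=\max_{k\le j}\{z_k-y_k\}_+$ and $b=z-a$. It then shows that $a$ and $b$ are nondecreasing with $0\le a\le x$ and $0\le b\le y$. In your example this assigns the entire mean to $\omega_2$ at every $q$.

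By induction the split extends to whole chains: $\mathcal D_{r,n}(x+y)=\mathcal D_{r,n}(x)\oplus\mathcal D_{r,n}(y)$. The paper then proceeds in four steps:
\begin{enumerate}
\item It discretizes $X$ on a dyadic grid in $q$ and in values, so that $X_n$ has finitely many paths $x^{(\ell)}$ with probabilities $p_\ell$.
\item It replaces $h_j$ by $(S_nh_j)\wedge\mu_n$, so that the chain ends exactly at $\mathbb{E}[X_n]$.
\item It distributes the chain over the atoms.
\item It passes to a weak limit in $L^2(\mathbb P\otimes\lambda)$. This works because the set of pathwise-monotone tuples with $0\le W_1\le\cdots\le W_r\le X$ is closed and convex, hence weakly closed. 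Finally it takes left-continuous versions.
\end{enumerate}

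Your induction could replace the chain step. But the one-step claim still needs either this discrete splitting plus the limiting argument, or some other genuinely nonlocal construction.
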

\begin{proof}[Proof of Lemma \ref{lem:mtr}]
    \textit{Step 1 (Preliminaries)}. Let $\mathbb L^2 = L^2(\Omega\times(0,1), \mathcal A\otimes\mathcal B((0,1)), \mathbb P\otimes\lambda)$. Since $X$ is jointly measurable and bounded, $X\in\mathbb L^2$.

    By translating all functions by $K_0$, it suffices to consider the case $K_0=0$. Write $B:=K_1-K_0$.
    For $n \in \mathbb N$, let $\Delta_n= B2^{-n}$ and define 
    \begin{align*}
        \mathcal C_n = 
        \left\{
            x=(x_0,\ldots,x_{2^n})\in[0,B]^{2^n+1}:
            x_0\le\cdots\le x_{2^n}
        \right\}.
    \end{align*}
    We first see a decomposition property of $\mathcal C_n$: if $x,y,z\in\mathcal C_n$ satisfy $z\le x+y$, then there exists $a,b\in\mathcal C_n$ such that $z=a+b$, $a\le x$, and $b\le y$, where inequalities between vectors are understood coordinatewise. We use the same notation below.
    Indeed, define $a_j = \max_{0\le k\le j}\{z_k-y_k\}_+$ and $b_j = z_j-a_j$ for $j=0,\ldots,2^n$. 
    Then, $a \in \mathcal C_n$ by construction.
    For every $k\le j$, $z_k-y_k\le x_k\le x_j$ and hence $a_j\le x_j$. 
    Furthermore, note that $\{z_k-y_k\}_+\le z_k\le z_j$ for $k\le j$, so that $0\le a_j\le z_j$, implying that $b_j\ge0$. Since $a_j\ge\{z_j-y_j\}_+$, we also have $b_j\le y_j$ by the definition of $b_j$.
    It remains to show that $b$ is nondecreasing.
    It suffices to see $a_j-a_{j-1} \leq z_j-z_{j-1}$.
    This inequality is immediate if $a_j=a_{j-1}$. Below we treat the case where $a_j>a_{j-1}$. In this case, $a_j = \{z_j - y_j\}_+ = z_j - y_j$, where the last equality uses $a_j>0$.
    If $z_{j-1}\ge y_{j-1}$, 
    \begin{align*}
        a_j-a_{j-1} \le (z_j-y_j)-(z_{j-1}-y_{j-1})
        = (z_j-z_{j-1})-(y_j-y_{j-1}) \le z_j-z_{j-1}.
    \end{align*}
    If $z_{j-1}< y_{j-1}$,
    \begin{align*}
        a_j-a_{j-1} \le a_j = z_j-y_j
        \le z_j-y_{j-1} < z_j-z_{j-1}.
    \end{align*}
    Hence, $a,b\in\mathcal C_n$.

    For $x\in\mathcal C_n$, define 
    \begin{align*}
        \mathcal{D}_{r,n}(x) 
        =
        \left\{
            (u_1,\ldots,u_r)\in\mathcal C_n^r:
            u_1\le\cdots\le u_r=x
        \right\}.
    \end{align*}
    We will see that, as long as $x,y,x+y\in \mathcal C_n$,
    \begin{align*}
        \mathcal D_{r,n}(x+y)&=\mathcal D_{r,n}(x)\oplus\mathcal D_{r,n}(y)\\
        &\coloneqq 
        \left\{
            (u_1+v_1, \ldots, u_r+v_r) : 
            (u_1, \ldots, u_r) \in \mathcal D_{r,n}(x)\text{ and } 
            (v_1, \ldots, v_r) \in \mathcal D_{r,n}(y)
        \right\}.
    \end{align*}
    $\mathcal D_{r,n}(x)\oplus\mathcal D_{r,n}(y) \subseteq \mathcal D_{r,n}(x+y)$ is immediate.
    We prove $\mathcal D_{r,n}(x+y)\subseteq \mathcal D_{r,n}(x)\oplus\mathcal D_{r,n}(y)$ by induction on $r$. The result is immediate for $r=1$.
    Suppose it holds for $r$ and take $(z_1,\ldots,z_{r+1}) \in \mathcal D_{r+1,n}(x+y)$.
    Then, $z_1 \le \cdots \le z_r \le z_{r+1} = x+y$, and in particular $z_r \le x+y$. We have shown that there exist $a_r,b_r\in\mathcal C_n$ such that $z_r=a_r+b_r$, $a_r\le x$, and $b_r\le y$.
    Now, noting that $(z_1,\ldots,z_r) \in \mathcal D_{r,n}(z_r) =  \mathcal D_{r,n}(a_r+b_r)$, it follows that $(z_1,\ldots,z_r) \in \mathcal D_{r,n}(a_r) \oplus \mathcal D_{r,n}(b_r)$ by hypothesis.
    Thus, there are some $(a_1,\ldots,a_r) \in \mathcal D_{r,n}(a_r)$ and $(b_1,\ldots,b_r) \in \mathcal D_{r,n}(b_r)$ such that $z_j = a_j + b_j$ for $j=1,\ldots,r$.
    If we set $a_{r+1} = x$ and $b_{r+1} = y$, $(a_1,\ldots,a_{r+1}) \in \mathcal D_{r+1,n}(x)$, $(b_1,\ldots,b_{r+1}) \in \mathcal D_{r+1,n}(y)$, and $z_{r+1} = x+y = a_{r+1} + b_{r+1}$. This proves $(z_1,\ldots,z_{r+1}) \in \mathcal D_{r+1,n}(x)\oplus\mathcal D_{r+1,n}(y)$.

    \textit{Step 2 (Discrete approximations)}. Extend $X(\omega,\cdot)$ to $[0,1)$ by setting $\overline{X}(\omega,0)=0$ and $\overline{X}(\omega,q) = X(\omega,q)$. The map $\overline X$ is jointly measurable on $\Omega\times[0,1)$. Moreover, for $\mathbb P$-almost every $\omega$, the function $q\mapsto\overline X(\omega,q)$ is nondecreasing and takes values in $[0,B]$.

    For a bounded function $f:[0,1)\to[0,B]$, define
    \begin{align*}
        (S_nf)(q)
        =
        f\left(\frac{\lceil 2^nq\rceil-1}{2^n}\right),
        \quad q\in(0,1),
    \end{align*}
    and define
    \begin{align*}
        \rho_n(x)
        =
        \Delta_n \left\lfloor\frac{x}{\Delta_n}\right\rfloor,
        \quad x\in[0,B].
    \end{align*}
    Set $X_n(\omega,q) = \rho_n((S_n\overline X)(\omega,q))$, which is $\mathcal A \otimes\mathcal B((0,1))$-measurable.

    For $\mathbb P$-almost every $\omega$, $q\mapsto X_n(\omega,q)$ is a nondecreasing left-continuous step function with deterministic breakpoints. Therefore, $X_n(\omega,\cdot)\in\mathcal Q_{[0,B]}$ for $\mathbb P$-almost every $\omega$.

    For $\mathbb P$-almost every $\omega$ and every $q\in(0,1)$, $0 \le X_n(\omega,q) \le (S_n\overline X)(\omega,q) \le X(\omega,q)$. Thus, $0\le X_n\le X$ holds $(\mathbb P\otimes\lambda)$-almost everywhere.

    We next show that $X_n$ converges to $X$ in $\mathbb L^2$.
    Note that
    \begin{align*}
        \|X_n-X\|_{\mathbb L^2}
        \le
        \|X_n-S_n\overline X\|_{\mathbb L^2}
        +
        \|S_n\overline X-X\|_{\mathbb L^2}.
    \end{align*}
    By construction, $|X_n(\omega,q) - (S_n\overline X)(\omega,q)| \le \Delta_n$, and hence, the first term converges to zero.
    For $\mathbb P$-almost every $\omega$, left continuity of $X(\omega,\cdot)$ implies $(S_n\overline X)(\omega,q) - X(\omega,q) \to 0$ for every $q\in(0,1)$.
    Then, by the dominated convergence theorem, $\|S_n\overline X-X\|_{\mathbb L^2} \to 0$.
    Consequently, $\|X_n-X\|_{\mathbb L^2} \to 0$.

    \textit{Step 3 (Approximation of $h$)}.
    Write
    \begin{align*}
        \mu(q) = \int_\Omega X(\omega,q)\,d\mathbb P(\omega),\quad 
        \mu_n(q) = \int_\Omega X_n(\omega,q)\,d\mathbb P(\omega)
    \end{align*}
    for every $q\in(0,1)$. As in Step 2, extend $\mu$ and each $h_j$ to $[0,1)$ by setting $\mu(0) = 0$ and $h_j(0) = 0$.
    Define, for $j=1,\ldots,r$,
    \begin{align*}
        h_{j,n} = (S_nh_j)\wedge\mu_n.
    \end{align*}
    Since $X_n\le S_n\overline X$, we have
    \begin{align*}
        \mu_n(q) \le \int_\Omega (S_n\overline X)(\omega,q)\,d\mathbb P(\omega)
        = (S_n \mu)(q) = (S_nh_r)(q)
    \end{align*}
    Therefore, $h_{r,n}(q) = ((S_nh_r)\wedge\mu_n)(q) = \mu_n(q)$ and $h_{1,n}(q) \le \cdots \le h_{r,n}(q) = \mu_n(q)$ for every $q\in(0,1)$.

    Define $\iota_n(f) = (c_0,\ldots,c_{2^n-1},c_{2^n-1})$, where $f(q) = c_{k-1}$ for $q \in ((k-1)/2^n, k/2^n]$ and $k=1,\ldots,2^n$.
    Define $\tilde h_{j,n} = \iota_n(h_{j,n})$ for $j=1,\ldots,r$ and $\tilde \mu_n = \iota_n(\mu_n)$.
    Then we have $(\tilde h_{1,n},\ldots,\tilde h_{r,n}) \in \mathcal D_{r,n}(\tilde \mu_n)$.

    We next show that $h_{j,n} \to h_{j}$ in $L^2((0,1),\lambda)$.
    By Jensen's inequality,
    \begin{align*}
        \|\mu_n-\mu\|_{L^2((0,1),\lambda)}^2 = 
        \int_0^1 \left|\E{X_n(q)-X(q)}\right|^2\,dq
        \leq \int_0^1 \E{\left|X_n(q)-X(q)\right|^2}\,dq
        = \|X_n-X\|_{\mathbb L^2}^2,
    \end{align*}
    where the last expression converges to zero by Step 2.
    Further, left continuity of $h_j$ implies $(S_nh_j)(q) \to h_j(q)$ for $q\in(0,1)$, and hence, the dominated convergence theorem gives $\|S_nh_j-h_j\|_{L^2((0,1),\lambda)} \to 0$.
    Using $h_j=h_j\wedge\mu$ and $|a\wedge b-c\wedge d| \le |a-c|+|b-d|$, we obtain
    \begin{align*}
        \|h_{j,n}-h_j\|_{L^2((0,1),\lambda)}
        \le
        \|S_nh_j-h_j\|_{L^2((0,1),\lambda)} + \|\mu_n-\mu\|_{L^2((0,1),\lambda)}
        \to 0.
    \end{align*}

    \textit{Step 4 (Constructing discrete, monotone random paths)}.
    By construction in Step 2, $X_n$ takes only finitely many deterministic values.
    Then there exist $L_n<\infty$ and distinct vectors $x_n^{(1)},\ldots,x_n^{(L_n)}\in\mathcal C_n$ such that $\mathbb{P}[\iota_n(X_n)\in\{x_n^{(1)},\ldots,x_n^{(L_n)}\}] = 1$.
    Define
    \begin{align*}
        A_{n,\ell}
        =
        \left\{\omega\in\Omega:
        \iota_n(X_n)(\omega)=x_n^{(\ell)}
        \right\},\quad
        p_{n,\ell} =\P{A_{n,\ell}},
        \quad
        \ell=1,\ldots,L_n.
    \end{align*}
    Then $p_{n,\ell}>0$ for every $\ell$, and $\tilde \mu_n = \E{\iota_n(X_n)} = \sum_{\ell=1}^{L_n} p_{n,\ell}x_n^{(\ell)}$.
    Since, $(\tilde h_{1,n},\ldots,\tilde h_{r,n}) \in \mathcal D_{r,n}(\tilde\mu_n)$, the decomposition property established in Step 1 yields 
    \begin{align*}
    \mathcal D_{r,n}(\tilde \mu_n)
    =
    \mathcal D_{r,n}\left(\sum_{\ell=1}^{L_n}p_{n,\ell}x_n^{(\ell)}\right)
    =
    \bigoplus_{\ell=1}^{L_n} p_{n,\ell}\mathcal D_{r,n}\left(x_n^{(\ell)}\right).
    \end{align*}
    Hence, for every $\ell=1,\ldots,L_n$, there exists $(u_{1,n}^{(\ell)},\ldots,u_{r,n}^{(\ell)}) \in \mathcal D_{r,n}(x_n^{(\ell)})$ such that $(\tilde h_{1,n},\ldots,\tilde h_{r,n}) = \sum_{\ell=1}^{L_n} p_{n,\ell} (u_{1,n}^{(\ell)},\ldots,u_{r,n}^{(\ell)})$.

    Let $u_{j,n}^{(\ell)} = (u_{j,n,0}^{(\ell)}, \ldots, u_{j,n,2^n}^{(\ell)})$.
    For $j=1,\ldots,r$, define 
    \begin{align*}
        H_{j,n}(\omega,q)
        =
        \sum_{\ell=1}^{L_n}\sum_{k=1}^{2^n}\mathbf{1}\{\omega \in A_{n,\ell}\} 
        \mathbf{1}\left\{q\in \left(\frac{k-1}{2^n},\frac{k}{2^n}\right]\right\}
        u_{j,n,k-1}^{(\ell)},
    \end{align*}
    for $(\omega,q)\in\Omega\times(0,1)$. Then each $H_{j,n}$ is $\mathcal A \otimes\mathcal B((0,1))$-measurable. For every $\ell$, $u_{1,n}^{(\ell)} \le \cdots \le u_{r,n}^{(\ell)} = x_n^{(\ell)}$. Hence, $0 \le H_{1,n} \le \cdots \le H_{r,n} = X_n$ holds $(\mathbb P\otimes\lambda)$-almost everywhere.
    Moreover, each $H_{j,n}(\omega,\cdot)$ is a nondecreasing, left-continuous step function taking values in $[0,B]$, so that $H_{j,n}(\omega,\cdot) \in \mathcal Q_{[0,B]}$ for $\mathbb P$-almost every $\omega$. Finally, for every $j=1,\ldots,r$ and every $q\in ((k-1)/2^n, k/2^n]$
    \begin{align*}
    \E{H_{j,n}(q)}
    =
    \sum_{\ell=1}^{L_n} p_{n,\ell}u_{j,n,k-1}^{(\ell)}
    =
    h_{j,n}(q).
    \end{align*}

    \textit{Step 5 (Weak limit)}.
    Let $\mathcal M = \{f\in L^2((0,1),\lambda): f\text{ admits a nondecreasing representative}\}$. Then the set $\mathcal M$ is a closed and convex in $L^2((0,1),\lambda)$.
    Define
    \begin{align*}
        \mathcal A_X
        =
        \left\{(W_1,\ldots,W_r)\in (\mathbb L^2)^r:
        \begin{array}{l}
        W_j(\omega,\cdot)\in\mathcal M \text{ for }\mathbb P\text{-almost every }\omega, \quad j=1,\ldots,r,\\
        0\le W_1\le\cdots\le W_r\le X
        \quad (\mathbb P\otimes\lambda)\text{-almost everywhere}
        \end{array}
        \right\}.
    \end{align*}
    Then $\mathcal A_X$ is convex and norm closed in $(\mathbb L^2)^r$.
    Indeed, since $\mathcal M$ is convex, the first restriction is convex. The order restrictions also define a closed convex subset of $(\mathbb L^2)^r$. Hence, $\mathcal A_X$ is convex.
    To prove norm closedness of $\mathcal A_X$, let $(W_{1,m},\ldots,W_{r,m})\in\mathcal A_X$ and suppose that $(W_{1,m},\ldots,W_{r,m}) \to (W_1,\ldots,W_r)$ in $(\mathbb L^2)^r$. Since the order restrictions are norm closed, $0\le W_1\le\cdots\le W_r\le X$ holds $(\mathbb P\otimes\lambda)$-almost everywhere.    
    It remains to verify the first restriction. Define
    \begin{align*}
        D_m(\omega)^2
        =
        \sum_{j=1}^r \|W_{j,m}(\omega,\cdot)-W_j(\omega,\cdot)\|_{L^2((0,1),\lambda)}^2.
    \end{align*}
    By Fubini's theorem,
    \begin{align*}
        \int_\Omega D_m(\omega)^2\,d\mathbb P(\omega)
        &=
        \sum_{j=1}^r\|W_{j,m}-W_j\|_{\mathbb L^2}^2
        \to 0.
    \end{align*}
    Thus, there exists a subsequence $D_{m_k}(\omega)\to 0$ for $\mathbb P$-almost every $\omega$.
    For every $j=1,\ldots,r$, this implies $W_{j,m_k}(\omega,\cdot) \to W_j(\omega,\cdot)$ in $L^2((0,1),\lambda)$ for $\mathbb P$-almost every $\omega$.
    Moreover, since each $(W_{1,m_k},\ldots,W_{r,m_k})$ belongs to $\mathcal A_X$, we have $W_{j,m_k}(\omega,\cdot)\in\mathcal M$ for every $j,k$ and for $\mathbb P$-almost every $\omega$.
    Taking the intersection of these countably many full-measure sets,
    we may assume that this holds simultaneously for all $j$ and $k$.
    Since $\mathcal M$ is closed in $L^2((0,1),\lambda)$, it follows that $W_j(\omega,\cdot)\in\mathcal M$ for $\mathbb P$-almost every $\omega$ and every $j=1,\ldots,r$.
    Therefore, $\mathcal A_X$ is norm closed in $(\mathbb L^2)^r$.
    
    Since $(\mathbb L^2)^r$ is a Hilbert space, every norm closed convex
    subset is weakly closed. Hence, $\mathcal A_X$ is weakly closed.

    By Step 4, we have $(H_{1,n},\ldots,H_{r,n}) \in \mathcal A_X$ for every $n$.
    Note that $0 \le H_{j,n} \le B$ holds $(\mathbb P\otimes\lambda)$-almost everywhere.
    Thus, the sequence $\left\{(H_{1,n},\ldots,H_{r,n})\right\}_{n\ge1}$ is bounded in the Hilbert space $(\mathbb L^2)^r$. Hence, there exist a subsequence $\{n_k\}_{k\ge1}$ and $(H_1,\ldots,H_r) \in (\mathbb L^2)^r$ such that $(H_{1,n_k},\ldots,H_{r,n_k}) \rightharpoonup (H_1,\ldots,H_r)$ weakly in $(\mathbb L^2)^r$.
    Since $\mathcal A_X$ is weakly closed, $(H_1,\ldots,H_r) \in \mathcal A_X$. In particular, $0 \le H_1 \le \cdots \le H_r \le X$ holds $(\mathbb P\otimes\lambda)$-almost everywhere and $H_j(\omega,\cdot)$ admits a nondecreasing representative for $\mathbb P$-almost every $\omega$.

    \textit{Step 6 (Identification)}.
    By Step 4, $H_{r,n_k} = X_{n_k}$. By Step 2, $X_{n_k}\to X$ strongly in $\mathbb L^2$, and hence weakly in $\mathbb L^2$.
    On the other hand, $H_{r,n_k} \rightharpoonup H_r$ weakly in $\mathbb L^2$.
    By uniqueness of weak limits, $H_r = X$ in $\mathbb L^2$.

    The expectation operator is linear and bounded, for every $j=1,\ldots,r$, $H_{j,n_k} \rightharpoonup H_j$ weakly in $\mathbb L^2$ implies that $\E{H_{j,n_k}(\cdot)} \rightharpoonup \E{H_{j}(\cdot)}$ weakly in $L^2((0,1),\lambda)$.
    On the other hand, Step 4 gives $\E{H_{j,n_k}(\cdot)} = h_{j,n_k}(\cdot)$, and Step 3 gives $h_{j,n_k}(\cdot) \to h_j(\cdot)$ strongly in $L^2((0,1),\lambda)$.
    Hence, $h_{j,n_k} \rightharpoonup h_j$ weakly in $L^2((0,1),\lambda)$. 
    By uniqueness of weak limits, $\E{H_j} = h_j$ in $L^2((0,1),\lambda)$.

    \textit{Step 7 (Canonical left-continuous versions)}.
    For $j=1,\ldots,r$, define
    \begin{align*}
        \varepsilon_m(q) = \min\left\{\frac{q}{2},\frac{1}{m}\right\},\quad q\in(0,1),
    \end{align*}
    and let
    \begin{align*}
        \widehat H_j(\omega,q)
        =
        \limsup_{m\to\infty}\frac{1}{\varepsilon_m(q)}\int_{q-\varepsilon_m(q)}^q H_j(\omega,s)\,ds.
    \end{align*}
    For every $m$, the map $(\omega,q)\mapsto (1/\varepsilon_m(q))\int_{q-\varepsilon_m(q)}^q H_j(\omega,s)\,ds$ is $\mathcal A \otimes\mathcal B((0,1))$-measurable. Hence, $\widehat H_j$ is jointly measurable.
    For $\mathbb P$-almost every $\omega$, let $\tilde{H}_j(\omega,\cdot)$ be a nondecreasing representative of $H_j(\omega,\cdot)$. Then $\widehat H_j(\omega,q) = \lim_{r\uparrow q} \tilde{H}_j(\omega,r)$ for $q\in(0,1)$.
    Thus, $\widehat H_j(\omega,\cdot)$ is nondecreasing and left-continuous.
    Moreover, $\widehat H_j(\omega,\cdot)$ differs from $\tilde{H}_j(\omega,\cdot)$ only at discontinuity points of $\tilde{H}_j$, which form a countable set. Therefore, $\widehat H_j = H_j$ holds $(\mathbb P\otimes\lambda)$-almost everywhere.
    Since $0\le H_j\le X\le B$ almost everywhere, $\widehat H_j(\omega,\cdot) \in \mathcal Q_{[0,B]}$ for $\mathbb P$-almost every $\omega$.
    Replacing $H_j$ by $\widehat H_j$ does not alter any of the preceding almost-everywhere equalities or inequalities. Thus, after this replacement, $0 \le H_1 \le \cdots \le H_r = X$ holds $(\mathbb P\otimes\lambda)$-almost everywhere.
    The Fubini theorem yields the desired result.
\end{proof}

\begin{proof}[Proof of Proposition \ref{prop:int}]
    \textit{Proof of part (i)}.
    It suffices to characterize the smallest possible lower endpoint and the largest possible upper endpoint over all observationally equivalent counterfactual processes satisfying Assumption \ref{assumption:I-MTR}, and to show that they are jointly attainable.

    Proposition \ref{prop:gen-simple}(i) implies
    \begin{align*}
        -\E{L_i(t)}
        &\le
        -\E{L_i\mathbf{1}\{D_i\le t\}} - K_0\P{D_i>t},\\
        \E{U_i(t)}
        &\le
        \E{U_i\mathbf{1}\{D_i\ge t\}} + K_1\P{D_i<t}.
    \end{align*}
    Hence, every interval in the sharp identified set is contained in $I_{\mathtt{MTR}}(t)$.

    It remains to show that the two endpoints are jointly attainable.
    Define
    \begin{align*}
        [\widetilde{L}_i(s),\widetilde{U}_i(s)]
        =
        \begin{cases}
            [L_i,U_i],
            & D_i<t,\ s<t,\\
            [L_i,K_1],
            & D_i<t,\ s\ge t,\\
            [L_i,U_i],
            & D_i=t,\ s\in\mathcal T,\\
            [K_0,U_i],
            & D_i>t,\ s\le t,\\
            [L_i,U_i],
            & D_i>t,\ s>t.
        \end{cases}
    \end{align*}
    The constructed interval belongs to $\mathcal Y$ for every $s\in\mathcal T$, is observationally equivalent, and satisfies MTR.
    Under this construction,
    \begin{align*}
        \mathbb{E}_{\oplus}\left[[\widetilde{L}_i(t),\widetilde{U}_i(t)]\right] 
        =
        \bigg[
            \E{L_i\mathbf{1}\{D_i\le t\}} + K_0\P{D_i>t},
            \,
            \E{U_i\mathbf{1}\{D_i\ge t\}} + K_1\P{D_i<t}
        \bigg].
    \end{align*}
    Hence, $I_{\mathtt{MTR}}(t)$ belongs to the sharp identified set.\hfill $\square$

    \noindent\textit{Proof of part (ii)}.
    A similar reasoning to above shows that every element of the sharp identified set is contained in $I_{\mathtt{MTS}}(t)$. Thus we only show that the bounds are jointly attainable.
    For each $s\in\mathcal T$, define the counterfactual interval by
    \begin{align*}
        [\widetilde{L}_i(s),\widetilde{U}_i(s)]
        =
        \begin{cases}
            \left[
                K_0,\,
                \E{U_i\mid D_i=s}
                \right],
                & D_i<s,\\
                [L_i,U_i],
                & D_i=s,\\
                \left[
                \E{L_i\mid D_i=s},\,
                K_1
            \right],
            & D_i>s.
        \end{cases}
    \end{align*}
    Each constructed interval belongs to $\mathcal Y$ and observationally equivalent.
    We verify MTS.
    For every $s,d\in\mathcal T$,
    \begin{align*}
        \mathbb{E}[\widetilde L_i(s)\mid D_i=d]=
        \begin{cases}
            K_0,
            & d<s,\\
            \E{L_i\mid D_i=s},
            & d\ge s,
        \end{cases}
    \end{align*}
    which is weakly increasing in $d$. The upper bound can be treated similarly. Hence, the constructed counterfactual process satisfies MTS.
    Under this construction, a direct computation yields $\mathbb{E}_{\oplus}[[\widetilde{L}_i(t),\widetilde{U}_i(t)]] = I_{\mathtt{MTS}}$, that is, $I_{\mathtt{MTS}}(t)$ belongs to the sharp identified set.\hfill $\square$

    \noindent\textit{Proof of part (iii)}.
    A similar reasoning to above shows that every element of the sharp identified set is contained in $I_{\mathtt{MIV}}(t)$. Thus we only show that the bounds are jointly attainable.
    For each $z\in\mathcal Z$, define
    \begin{align*}
        \alpha_z
        &=
        \E{L_i\mathbf 1\{D_i=t\}\mid Z_i=z} + K_0\P{D_i\ne t\mid Z_i=z},\\
        \beta_z
        &=
        \E{U_i\mathbf 1\{D_i=t\}\mid Z_i=z} + K_1\P{D_i\ne t\mid Z_i=z},
    \end{align*}
    and let $a_z^* = \sup_{v\le z}\alpha_v$ and $b_z^* = \inf_{v\ge z}\beta_v$.
    By construction, both $z\mapsto a_z^*$ and $z\mapsto b_z^*$ are weakly increasing.
    Because MIV assumption is maintained, there exists an observationally equivalent counterfactual process $\{[L_i^0(s),U_i^0(s)]:s\in\mathcal T\}$ satisfying Assumption \ref{assumption:I-MIV}. 
    Define $a_z^0 = \E{L_i^0(t)\mid Z_i=z}$ and $b_z^0 = \E{U_i^0(t)\mid Z_i=z}$. Then $z\mapsto a_z^0$ and $z\mapsto b_z^0$ are weakly increasing.
    Moreover, notice that
    \begin{align*}
        \alpha_z \le a_z^0 \le 
        \E{L_i\mathbf 1\{D_i=t\}\mid Z_i=z} + K_1\P{D_i\ne t\mid Z_i=z}
    \end{align*}
    and 
    \begin{align*}
        \E{U_i\mathbf 1\{D_i=t\}\mid Z_i=z} + K_0\P{D_i\ne t\mid Z_i=z}
        \le b_z^0 \le \beta_z.
    \end{align*}
    We then obtain that
    \begin{align*}
        \alpha_z 
        \le
        \sup_{v\le z}\alpha_v
        \le
        \sup_{v\le z}a_v^0
        =
        a_z^0,\quad 
        b_z^0
        =
        \inf_{v\ge z}b_v^0
        \le
        \inf_{v\ge z}\beta_v
        \le
        \beta_z,
    \end{align*}
    and
    \begin{align*}
        \inf_{v\ge z}\beta_v-\sup_{v\le z}\alpha_v
        &\ge
        b_z^0-a_z^0
        =
        \E{(U_i^0(t)-L_i^0(t)) \mid Z_i=z}\\
        &\ge
        \E{(U_i^0(t)-L_i^0(t)) \mathbf{1}\{D_i=t\} \mid Z_i=z}
        =
        \E{(U_i-L_i)\mathbf{1}\{D_i=t\}\mid Z_i=z}.
    \end{align*}
    %For every $z$ such that $\P{D_i\ne t\mid Z_i=z}>0$, define
    Define
    \begin{align*}
        \ell_z^*
        =
        \frac{\sup_{v\le z}\alpha_v-\E{L_i\mathbf 1\{D_i=t\}\mid Z_i=z}}{\P{D_i\ne t\mid Z_i=z}},\quad
        u_z^*
        =
        \frac{\inf_{v\ge z}\beta_v-\E{U_i\mathbf 1\{D_i=t\}\mid Z_i=z}}{\P{D_i\ne t\mid Z_i=z}}.
    \end{align*}
    The preceding inequalities imply that $K_0\le \ell_z^*\le u_z^*\le K_1$.
    %For $z$ such that $\P{D_i\ne t\mid Z_i=z}=0$, set $(\ell_z^*,u_z^*)=(K_0,K_1)$; the choice on such cells is immaterial.
    Now define a new counterfactual process by
    \begin{align*}
        [\widetilde{L}_i(s),\widetilde{U}_i(s)]
        =
        \begin{cases}
            [L_i^0(s),U_i^0(s)],
            & s\ne t,\\
            [L_i,U_i],
            & s=t,\ D_i=t,\\
            [\ell_{Z_i}^*,u_{Z_i}^*],
            & s=t,\ D_i\ne t.
        \end{cases}
    \end{align*}
    For every $s\in\mathcal T$, the constructed interval belongs to $\mathcal Y$ and observationally equivalent. 
    For $s\ne t$, MIV is inherited from the original process.
    For $s=t$, the construction gives $\mathbb{E}[\widetilde L_i(t)\mid Z_i=z] = \sup_{v\le z}\alpha_v$ and $\mathbb{E}[\widetilde U_i(t)\mid Z_i=z] = \inf_{v\ge z}\beta_v$. Since both $z\mapsto \sup_{v\le z}\alpha_v$ and $z\mapsto \inf_{v\ge z}\beta_v$ are weakly increasing, the constructed counterfactual process satisfies Assumption \ref{assumption:I-MIV}.
    Finally,
    \begin{align*}
        \mathbb{E}[\widetilde L_i(t)]=\sum_{z\in\mathcal Z}\P{Z_i=z} \sup_{v\le z}\alpha_v,\quad \mathbb{E}[\widetilde U_i(t)] = \sum_{z\in\mathcal Z}\P{Z_i=z} \inf_{v\ge z}\beta_v.
    \end{align*}
    Hence, $\mathbb{E}_\oplus[\widetilde{Y}_i(t)] = I_{\mathtt{MIV}}(t)$, and $I_{\mathtt{MIV}}(t)$ belongs to the sharp identified set.\hfill $\square$

    \noindent\textit{Proof of part (iv)}.
    A similar reasoning to above shows that every element of the sharp identified set is contained in $I_{\mathtt{MTR+MTS}}(t)$. Thus we only show that the bounds are jointly attainable.
    For $s\le d$, define
    \begin{align*}
        \lambda_{d,s}^L
        =
        \begin{cases}
            \dfrac{\E{L_i\mid D_i=s}-K_0}{\E{L_i\mid D_i=d}-K_0} & \text{if }\E{L_i\mid D_i=d}>K_0,\\
            1 & \text{if }\E{L_i\mid D_i=d}=K_0,
        \end{cases}
    \end{align*}
    and, for $s\ge d$, define
    \begin{align*}
        \lambda_{d,s}^U
        =
        \begin{cases}
            \dfrac{K_1-\E{U_i\mid D_i=s}}{K_1-\E{U_i\mid D_i=d}} & \text{if }\E{U_i\mid D_i=d}<K_1,\\
            1 & \text{if }\E{U_i\mid D_i=d}=K_1.
        \end{cases}
    \end{align*}
    On the event $\{D_i=d\}$, define
    \begin{align*}
    \widetilde L_i(s)
    =
    \begin{cases}
        K_0+\lambda_{d,s}^L(L_i-K_0) & s\le d,\\
        L_i & s>d,
    \end{cases}\quad
    \widetilde U_i(s)
    \coloneqq
    \begin{cases}
        U_i & s<d,\\
        K_1-\lambda_{d,s}^U(K_1-U_i) & s\ge d.
    \end{cases}
    \end{align*}
    Since $\lambda_{d,d}^L=\lambda_{d,d}^U=1$, the construction is observationally equivalent.
    For $s\le d$, we have $\widetilde L_i(s)\le L_i\le U_i=\widetilde U_i(s)$, whereas for $s\ge d$, $\widetilde L_i(s)=L_i\le U_i\le\widetilde U_i(s)$, since $\lambda_{d,s}^L, \lambda_{d,s}^U \in[0,1]$, which is because $\E{L_i\mid D_i=s}$ and $\E{U_i\mid D_i=s}$ are increasing in $s$ under MTR and MTS.
    Hence, $[\widetilde L_i(s),\widetilde U_i(s)]\in\mathcal Y$.
    Moreover, $s\mapsto\lambda_{d,s}^L$ is weakly increasing on $\{s:s\le d\}$, whereas $s\mapsto\lambda_{d,s}^U$ is weakly decreasing on $\{s:s\ge d\}$. Therefore, $\widetilde L_i(s)$ and $\widetilde U_i(s)$ are weakly increasing, so that the constructed process satisfies Assumption \ref{assumption:I-MTR}.
    Furthermore, for every $s,d\in\mathcal T$,
    \begin{align*}
        \mathbb{E}[\widetilde L_i(s)\mid D_i=d] = \E{L_i \mid D_i = s\wedge d},\quad
        \mathbb{E}[\widetilde U_i(s)\mid D_i=d] = \mathbb{E}[U_i \mid D_i = s\vee d].
    \end{align*}
    Therefore, since $\E{L_i\mid D_i=d}$ and $\E{U_i\mid D_i=d}$ are increasing in $d$, so are $\mathbb{E}[\widetilde L_i(s)\mid D_i=d]$ and $\mathbb{E}[\widetilde U_i(s)\mid D_i=d]$, implying that the constructed process also satisfies Assumption \ref{assumption:I-MTS}.
    Finally, at the target treatment $t$,
    \begin{align*}
        \mathbb{E}[\widetilde L_i(t)]&=\sum_{d<t}\mathbb{E}[L_i \mid D_i=d]\P{D_i=d} + \mathbb{E}[L_i \mid D_i=t]\P{D_i\ge t},\\
        \mathbb{E}[\widetilde U_i(t)]
        &=\sum_{d>t}\mathbb{E}[U_i \mid D_i=d]\P{D_i=d} + \E{U_i \mid D_i=t}\P{D_i\le t}.
    \end{align*}
    Hence, $\mathbb{E}_\oplus[\widetilde{Y}_i(t)] = I_{\mathtt{MTR+MTS}}(t)$, and $I_{\mathtt{MTR+MTS}}(t)$ belongs to the sharp identified set.
\end{proof}
\bibliographystyle{apalike} 
\bibliography{refs}

\end{document}